\newcommand{\model}[2]{\mathbb{M}[#1,\mathfrak{#2}]}
\newcommand{\aeq}{=_{\textnormal{a.e.}}}
\newcommand{\preorder}[1][]{\mathcal{P}(#1)}
\newcommand{\words}[1]{\textnormal{words}(#1)}
\newcommand{\wordspos}[1]{\textnormal{words}^{+}(#1)}
\providecommand{\doi}[1]{%
  \begingroup
    \let\bibinfo\@secondoftwo
    \urlstyle{rm}%
    \href{http://dx.doi.org/#1}{%
      doi:\discretionary{}{}{}%
      \nolinkurl{#1}%
    }%
  \endgroup
}
\begin{document}

\title{Towards a\\ \emph{Complexity-through-Realisability} Theory}
\author[ts]{Thomas Seiller\corref{cor1}\fnref{fn1}}
\ead{seiller@ihes.fr}
\cortext[cor1]{Corresponding author}
\fntext[fn1]{This work was partially supported by the the Marie Sk\l{}odowska-Curie Individual Fellowship (H2020-MSCA-IF-2014) 659920 - ReACT, as well as the French ANR project ANR-10-BLAN-0213 LOGOI and the french ANR project COQUAS (number 12 JS02 006 01).}
\address[ts]{Proofs, Programs, Systems. CNRS -- Paris Diderot University.}

\begin{abstract}
We explain how recent developments in the fields of Realisability models for linear logic \cite{seiller-goig} -- or \emph{geometry of interaction} -- and implicit computational complexity \cite{seiller-conl,seiller-lsp} can lead to a new approach of implicit computational complexity. This semantic-based approach should apply uniformly to various computational paradigms, and enable the use of new mathematical methods and tools to attack problem in computational complexity. This paper provides the background, motivations and perspectives of this \emph{complexity-through-Realisability} theory to be developed, and illustrates it with recent results \cite{seiller-goic}.
\end{abstract}

\maketitle

\section{Introduction}

Complexity theory lies at the intersection between mathematics and computer science, and studies the amount of resources needed to run a specific program (complexity of an algorithm) or solve a particular problem (complexity of a problem). 
wewill explain how it is possible to build on recent work in \emph{Realisability models for linear logic} -- a mathematical model of programs and their execution -- to provide new characterisations of existing complexity classes. It is hoped that these characterisations will enable new mathematical techniques, tools and invariants from the fields of operators algebras and dynamical systems, providing researchers with new tools and methods to attack long-standing open problems in complexity theory.

The \emph{complexity-through-Realisability} theory we propose to develop will provide a unified framework for studying many computational paradigms and their associated computational complexity theory grounded on well-studied mathematical concepts. This 
should 
provide a good candidate for a theory of complexity for computational paradigms currently lacking an established theory (e.g. concurrent processes), as well as contribute to establish a unified and well-grounded account of complexity for higher-order functionals.


 Even though it has been an established discipline for more than 50 years \cite{hartmanisstearns}, many questions in complexity theory, even basic ones, remain open. During the last twenty years, researchers have developed new approaches based on logic: they offer solid, machine-independent, foundations and provide new tools and methods. Amongst these approaches, the fields of Descriptive Complexity (DC) 
and Implicit Computational Complexity (ICC) lead to a number of new characterisations of complexity classes
. These works laid grounds for both theoretical results  \cite{immerman} and applications such as typing systems for complexity constrained programs and type inference algorithms for statically determining complexity bounds. 

The \emph{complexity-through-Realisability} theory we propose to develop is related to those established logic-based approaches. As such, it inherits their strengths: it is machine-independent, provides tools and methods from logic and gives grounds for the above mentioned applications. 
Furthermore, 
it builds on state-of-the-art theoretical results on Realisability models for linear logic \cite{seiller-goig} using well-studied mathematical concepts from operators algebras and dynamical systems. As a consequence, it opens the way to use against complexity theory's open problems the many techniques, tools and invariants that were developed in these disciplines. 


We illustrate the approach by explaining how first results were recently obtained by capturing a large family of complexity classes corresponding to various notions of automata. Indeed, we provided \cite{seiller-goic} Realisability models in which types of binary predicates correspond to the classes of languages accepted by one-way (resp. two-way) deterministic (resp. non-deterministic, resp. probabilistic) multi-head automata. This large family of languages contains in particular the classes \Regular (regular languages), \Stochastic (stochastic languages), \Logspace (logarithmic space), \NLogspace (non-deterministic logarithmic space), \coNLogspace (complementaries of languages in \NLogspace), and \PLogspace (probabilistic logarithmic space).

Finally, we discuss the possible extensions and further characterisations that could be obtained by the same methods. We in particular discuss the relationship between the classification of complexity classes and the problem of classifying a certain kind of algebras, namely \emph{graphing algebras}.






\subsection{Complexity Theory}



Complexity theory is concerned with the study of how many resources are needed to perform a specific computation or to solve a given problem. The study of \emph{complexity classes} -- sets of problems which need a comparable amount of resources to be solved, lies at the intersection of mathematics and computer science. Although a very active and established field for more than fifty years \cite{cobham,cookfoundations,hartmanisstearnsfoundations,savitch}, a number of basic problems remain open,  
for instance the famous \enquote{millennium problem} of whether \Ptime equals \NPtime or the less publicized but equally important question of whether \Ptime equals \Logspace. 
In recent years, several results have greatly modified the landscape of complexity theory by showing that proofs of separation (i.e. inequality) of complexity classes are hard to come by, pointing out the need to develop new theoretical methods. The most celebrated result in this direction \cite{naturalproofs} defines a notion of \emph{natural proof} comprising all previously developed proof methods and shows that no \enquote{natural proof} can succeed in proving separation. 

Mathematicians have then tried to give characterisations of complexity classes that differ from the original machine-bound definitions, hoping to enable methods from radically different areas of mathematics. 
Efforts in this direction lead to the development of Descriptive Complexity (DC), a field which studies the types of logics whose individual sentences characterise exactly particular complexity classes. Early developments were the 1974 Fagin-Jones-Selman results \cite{fagin74,jonesselman} characterising the classes \NExptime and \NPtime.
Many such characterisations have then been given \cite{comptongradel,dawargradel,gradelgurevich} and the method led Immerman to a proof of the celebrated Immerman-Szelepcs\'{e}nyi theorem \cite{immerman,szelepcsenyi} stating the two complexity classes \coNLogspace and \NLogspace are equal (though Szelepcs\'{e}nyi's proof does not use logic-based methods).

Implicit Computational Complexity (ICC) develops a related approach whose aim is to study algorithmic complexity only in terms of restrictions of languages and computational principles.
It has been established since Bellantoni and Cook' landmark paper \cite{bellantonicook}, and following work by Leivant and Marion \cite{leivantmarion1,leivantmarion2}.
Amongst the different approaches to ICC, several results were obtained by considering syntactic restrictions of \emph{linear logic} \cite{ll}, a refinement of intuitionnistic logic which accounts for the notion of resources. Linear logic introduces a modality $\oc$ marking the \enquote{possibility of duplicating} a formula $A$: the formula $A$ shall be used exactly once, while the formula $\oc A$ can be used any number of times. Modifying the rules governing this modality then yields variants of linear logic having computational interest: this is how constrained linear logic systems, for instance \BLL \cite{BLL} and \ELL \cite{danosjoinet}, are obtained. However, only a limited number of complexity classes were characterised in this way, and the method seems to be limited by its syntactic aspect: while it is easy to modify existing rules, it is much harder to find new, alternative, rules from scratch. The approach we propose to follow in this paper does not suffer from these limitations
, allowing for subtle distinctions unavailable to the syntactic techniques of ICC. 

\subsection{Realisability Models for Linear Logic} 

Concurrently to these developments in computational complexity, and motivated by disjoint questions and interests, Girard initiated the Geometry of Interaction (GoI) program \cite{towards}. This research program aims at obtaining particular kinds of \emph{Realisability} models (called GoI models) for linear logic. 
Realisability was first introduced \cite{kleene} as a way of making the Brouwer-Heyting-Kolmogorov interpretation of constructivism and intuitionistic mathematics precise; the techniques were then extended to classical logic, for instance by Krivine \cite{krivine}, and linear logic.
The GoI program quickly arose as a natural and well-suited tool for the study of computational complexity. Using the first GoI model \cite{goi1}, Abadi, Gonthier and L\'{e}vy \cite{AbadiGonthierLevy92b} showed the optimality of Lamping's reduction in lambda-calculus \cite{Lamping90}. It was also applied in implicit computational complexity \cite{baillotpedicini}, and was the main inspiration behind dal Lago's context semantics \cite{Lago}. 


More recently the geometry of interaction program inspired new techniques in implicit computational complexity. These new methods were initiated by Girard \cite{normativity} and have known a rapid development. They lead to a series of results in the form of new characterisations of the classes \coNLogspace \cite{normativity,seiller-conl}, \Logspace \cite{seiller-lsp,aplas14} and \Ptime \cite{lics-ptime}. 
Unfortunately, although the construction of Realisability models and the characterisations of classes are founded on similar techniques, they are two distinct, unrelated, constructions. 
The approach we propose to develop here will in particular bridge this gap and provide similar characterisations which will moreover allow the use of both logical and Realisability-specific methods.


\section{A \emph{Complexity-through-Realisability} Theory}

\subsection{Technical Background and Motivations}

About ten years ago, Girard showed \cite{feedback} that the restriction to the unit ball of a von Neumann algebra of the so-called \enquote{feedback equation}, which represents the execution of programs in GoI models, always has a solution. 
Moreover, previous and subsequent work showed the obtained GoI model interprets, depending on the choice of the von Neumann algebra, either full linear logic \cite{goi3} or the constrained system ELL which characterises elementary time computable functions \cite{seiller-phd}.
This naturally leads to the informal conjecture that there should be a correspondence between von Neumann algebras and complexity constraints. 

This deep and promising idea turned out to be slightly inexact and seemingly difficult to exploit. Indeed, the author showed  \cite{seiller-phd,seiller-masas} that the expressivity of the logic interpreted in a GoI model depends not only on the enveloping von Neumann algebra $\vn{N}$ but also on a maximal abelian sub-algebra (masa) $\vn{A}$ of $\vn{N}$, 
hinting at a refined conjecture stating that complexity constraints correspond to such couples $(\vn{A},\vn{N})$.
This approach is however difficult to extend 
to exploit and adapt to model other constrained logical systems for two reasons. The first reason is that the theory of maximal abelian sub-algebras in von Neumann algebras is an involved subject matter
 still containing large numbers of basic but difficult open problems \cite{FiniteVNAandMasas}. The second is that even though some results were obtained, no intuitions were gained about what makes the correspondence between couples $(\vn{A},\vn{N})$ and complexity constraints work.
 
Some very recent work of the author provides the foundational grounds for a new, tractable way of exploring the latter refined conjecture.
This series of work \cite{seiller-goim,seiller-goia,seiller-goig,seiller-goie} describes a systematic construction of realisability models for linear logic which unifies and extends all GoI models introduced by Girard. 
The construction is built upon a generalization of graphs, named \emph{graphings} \cite{adams,levitt_graphings,gaboriaucost}, which can be understood either as \emph{geometric realisations} of graphs on a measure space $(X,\mathcal{B},\mu)$, as measurable families of graphs, or as generalized measured dynamical system. It is parametrized by two monoids describing the model of computation and a map describing the realisability structure: 
\begin{itemize}[nolistsep,noitemsep]
\item a monoid $\Omega$ used to associate weights to edges of the graphs;
\item a map $m:\Omega\rightarrow\bar{\mathbf{R}}_{\geqslant 0}$ defining \emph{orthogonality} -- accounting for linear negation;
\item a monoid $\microcosm{m}$ -- the \emph{microcosm} -- of measurable maps from $(X,\mathcal{B},\mu)$ to itself.
\end{itemize}
A \emph{$\Omega$-weighted graphing in $\microcosm{m}$} is then defined as a directed graph $F$ whose edges are weighted by elements in $\Omega$, whose vertices are measurable subsets of the measurable space $(X,\mathcal{B})$, and whose edges are \emph{realised} by elements of $\microcosm{m}$, i.e. for each edge $e$ there exists an element $\phi_{e}$ in $\microcosm{m}$ such that $\phi_{e}(s(e))=t(e)$, where $s,t$ denote the source and target maps. Based on this notion, and an orthogonality relation defined from the map $m$, we obtained a systematic method for constructing realisability models for linear logic.

\begin{theorem}[Seiller \cite{seiller-goig}]\label{mainthm}
For all choices of $\Omega$, $\microcosm{m}$ and $m:\Omega\rightarrow\bar{\mathbf{R}}_{\geqslant 0}$, the set of $\Omega$-weighted graphings in $\microcosm{m}$ defines a model of Multiplicative-Additive Linear Logic (\MALL).
\end{theorem}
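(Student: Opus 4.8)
The plan is to follow the standard geometry-of-interaction recipe, but carried out at the level of graphings rather than operators: define an \emph{execution} (composition) of graphings, use the map $m$ to define an \emph{orthogonality} relation, close under bi-orthogonality to obtain \emph{behaviours} (conducts), interpret the \MALL{} connectives as operations on behaviours, and finally interpret proofs and prove adequacy. First I would define, for two $\Omega$-weighted graphings $F$, $G$ in $\microcosm{m}$ whose carriers overlap on a measurable ``plugging'' region, the graphing $F :: G$ whose edges are the maximal alternating paths built from edges of $F$ and edges of $G$: the source and target of such an edge are the endpoints of the path, its realiser is the composite in $\microcosm{m}$ of the realisers met along the path restricted to the measurable set of points that actually travel the whole path, and its weight is the product in $\Omega$ of the weights met along the path. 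The first lemma is that $F :: G$ is again an $\Omega$-weighted graphing in $\microcosm{m}$: measurability holds path-shape by path-shape because all the data are measurable, closure of $\microcosm{m}$ under composition provides the realisers, and $\Omega$ being a monoid provides the weights. The delicate points are the bookkeeping of the countably many finite alternating paths sharing the same endpoints, and the discarding of the measure-zero set of points lying on an infinite alternating path; one must also work throughout with almost-everywhere equivalence classes of graphings.

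Next I would introduce the pairing $\llbracket F, G \rrbracket \in \bar{\mathbf{R}}_{\geqslant 0}$ obtained by summing over the alternating \emph{cycles} produced when $F$ is plugged with $G$ the quantity $m(w_C)$ weighted by the measure of the set of points carried around the cycle $C$ (with $w_C \in \Omega$ the weight of $C$), and define orthogonality by $F \perp G$ iff $\llbracket F, G \rrbracket \notin \{0, \infty\}$. Before proceeding one checks that $\llbracket F, G \rrbracket$ does not depend on the arbitrary choices involved (base point of a cycle, almost-everywhere representatives) and, crucially, the \emph{trefoil property}: for graphings $F$, $G$, $H$ on suitably disposed carriers,
\[
  \llbracket F :: G,\, H \rrbracket + \llbracket F,\, G \rrbracket \;=\; \llbracket F,\, G :: H \rrbracket + \llbracket G,\, H \rrbracket,
\]
together with the associativity $(F :: G) :: H = F :: (G :: H)$. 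Both rest on one combinatorial fact — maximal alternating paths and cycles in $F \cup G \cup H$ can be computed in either plugging order, and a cycle not meeting $H$ is exactly a cycle of $F \cup G$ — but I expect this to be \emph{the main obstacle}: upgrading it to an identity of weighted, measured objects means matching $\Omega$-weights and measure-theoretic multiplicities on both sides, while controlling alternating paths that re-enter the plugging region arbitrarily often, controlling null sets, and coping with the absence of cancellation in $\Omega$.

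Granting execution, orthogonality and the trefoil property, the rest is formal. A \emph{behaviour} on a carrier $V$ is a set $\mathbf{A}$ of graphings with carrier $V$ such that $\mathbf{A} = \mathbf{A}^{\perp\perp}$; since $\perp$ induces a Galois connection, behaviours are stable under the operations below. On the disjoint union of carriers one sets $\mathbf{A} \otimes \mathbf{B} := \{F \cup G \mid F \in \mathbf{A},\, G \in \mathbf{B}\}^{\perp\perp}$, $\mathbf{A} \parr \mathbf{B} := (\mathbf{A}^{\perp} \otimes \mathbf{B}^{\perp})^{\perp}$, hence $\mathbf{A} \multimap \mathbf{B} := (\mathbf{A} \otimes \mathbf{B}^{\perp})^{\perp}$; the additive disjunction $\mathbf{A} \oplus \mathbf{B}$ and its dual $(\mathbf{A}^{\perp} \oplus \mathbf{B}^{\perp})^{\perp}$ are obtained by a sliced construction placing disjoint copies of the carrier side by side, using that the ambient space decomposes into measurably isomorphic pieces; the four units are the least and greatest behaviours on the empty and generic carriers. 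One then verifies the de Morgan dualities, the monoidal laws (bifunctoriality, associativity and symmetry of $\otimes$ and $\parr$), the currying equality $(\mathbf{A} \otimes \mathbf{B}) \multimap \mathbf{C} = \mathbf{A} \multimap (\mathbf{B} \multimap \mathbf{C})$, and the distributivity and projections/injections governing the additives; each of these reduces to the trefoil property and the Galois-connection formalism.

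Finally I would interpret \MALL{} proofs: a proof of $\vdash A_1, \dots, A_n$ is sent, by induction on the derivation, to a graphing on the disjoint union of the carriers of the $\mathbf{A}_i$ — axiom links as diagonal (``identity'') graphings, cut as execution, the multiplicative rules as reorganisations of carriers, the additive rules via the sliced construction — and one proves \emph{adequacy}: the interpretation of any proof of $\vdash A_1, \dots, A_n$ lies in $\mathbf{A}_1 \parr \cdots \parr \mathbf{A}_n$. The inductive step for cut is precisely the trefoil property, and the additive steps use that the additive connectives were defined by slicing. Invariance of the interpretation under cut-elimination, which ensures the model is not degenerate, again reduces to associativity of execution together with the trefoil identity. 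Thus, modulo routine categorical bookkeeping, everything rests on the measured, $\Omega$-weighted version of the trefoil/associativity property for graphings, which is where the interplay of measure theory with the lack of cancellation in $\Omega$ makes the argument genuinely delicate.
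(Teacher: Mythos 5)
Your proposal follows essentially the same route as the paper, which itself only sketches the argument and defers to the cited work \cite{seiller-goig}: execution of graphings defined via alternating paths, a measurement of cycles computed through $m$, orthogonality and bi-orthogonally closed behaviours, with the whole \MALL{} structure resting on a single ``trefoil property'' relating execution and measurement --- exactly the decomposition you describe. You also correctly identify the measured, $\Omega$-weighted trefoil/associativity identity as the genuinely delicate step, which is precisely where the paper says the ``more involved argument'' lies.
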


Let us notice that a microcosm $\microcosm{m}$ which contains only measure-preserving maps generates a measurable equivalence relation which, by the Feldman-Moore construction \cite{FeldmanMoore1}, induces a couple $(\vn{A},\vn{N})$ where $\vn{A}$ is a maximal abelian subalgebra of the von Neumann algebra $\vn{N}$. The previous theorem thus greatly generalizes Girard's general solution to the feedback equation \cite{feedback}, and in several ways. First, we define several models (deterministic, probabilistic, non-deterministic), among which some are unavailable to Girard's approach: for instance the non-deterministic model violates Girard's norm condition, and more generally Girard's techniques only apply when $\Omega$ is chosen as a subset of the complex plane unit disc.
It shows in fact much more as it exhibits an infinite family of structures of realisability models (parametrized by the map $m:\Omega\rightarrow\realposN\cup\{\infty\}$) on any model obtained from a microcosm. This extends drastically Girard's approach for which only two such structures were defined until now: an \enquote{orthogonality-as-nilpotency} structure in the algebra $\B{\hil{H}}$ \cite{goi1} and another one defined using the Fuglede-Kadison determinant \cite{FKdet} in the type {II}$_{1}$ hyperfinite factor \cite{goi5}.

\subsection{Methodology}

We can now explain the proposed methodology for defining a \emph{Complexity-through-realisability} theory.

The notion of $\Omega$-weighted $\microcosm{m}$-graphings for given monoid $\Omega$ and microcosm $\microcosm{m}$ yields a very general yet tractable mathematical notion of algorithm, as the microcosm $\microcosm{m}$ can naturally be understood as a set of computational principles \cite{seiller-lcc14}. It therefore provides an interesting middle-ground between usual computational models, for instance automata, and mathematical techniques from operator algebras and dynamical systems. It comprises Danos' interpretation of pure lambda-calculus (a Turing complete model of computation) in terms of operators \cite{danos-phd}, but it is not restricted to sequential algorithms as it will be shown to provide an interpretation of probabilistic and quantum programs. It also provides characterisations of usual complexity classes as types of predicates over binary words $\ListType\Rightarrow \cond{Bool}$, which will lead to a partial proof of the above conjecture by showing a correspondence between families of microcosms and complexity constraints. 


Work in this direction will establish these definitions of algorithms and complexity constraints as a uniform, homogeneous, machine-independent approach to complexity theory. The methods developed in this setting, either adapted from DC/ICC or specific to the realisability techniques employed, will apply to probabilistic/quantum complexity classes as much as sequential classes. In particular, it will offer a framework where comparison between non-classical and classical classes can be performed. It will also expand to computational paradigms where no established theory of complexity exists, providing a strong and coherent proposition for such.

It will extend the approach of ICC and DC as it will go beyond the syntactical restrictions they are suffering from. In particular, it will provide a new method for defining logical systems corresponding to complexity classes: the realisability model construction gives a systematic way to define a logic corresponding to the underlying computational model. It will also extend the GoI model approach to complexity by reconciling the logical and complexity aspects, allowing the use of both logical and realisability-specific methods.

Lastly, the approach we propose to develop does not naturally fall into the usual pitfalls for the obtention of separation results. Therefore, it provides a framework which will potentially offer separation methods, e.g. using invariants for the well-established mathematical notions it is founded upon.

\section{Interaction Graphs Models of Linear Logic}

\subsection{Graphings}


\begin{definition}
Let $(X,\mathcal{B},\lambda)$ be a measure space. We denote by $\mathcal{M}(X)$ the set of non-singular transformations\footnote{A non-singular transformation $f:X\rightarrow X$ is a measurable map which preserves the sets of null measure, i.e. $\lambda(f(A))=0$ if and only if $\lambda(A)=0$.} $X\rightarrow X$. A \emph{microcosm}  of the measure space $X$ is a subset $\microcosm{m}$ of $\mathcal{M}(X)$ which is closed under composition and contains the identity.
\end{definition}

In the following, we will consider a notion of graphing depending on a \emph{weight-monoid} $\Omega$, i.e. a monoid $(\Omega,\cdot,1)$ which contains the possible weights of the edges. 

\begin{definition}[Graphings]
Let $\microcosm{m}$ be a microcosm of a measure space $(X,\mathcal{B},\lambda)$ and $V^{F}$ a measurable subset of $X$. A \emph{$\Omega$-weighted graphing in $\microcosm{m}$} of carrier $V^{F}$ is a countable family $F=\{(\omega_{e}^{F},\phi_{e}^{F}: S_{e}^{F}\rightarrow T_{e}^{F})\}_{e\in E^{F}}$, where, for all $e\in E^{F}$ (the set of \emph{edges}):
\begin{itemize}[noitemsep,nolistsep]
\item $\omega_{e}^{F}$ is an element of $\Omega$, the \emph{weight} of the edge $e$;
\item $S_{e}^{F}\subset V^{F}$ is a measurable set, the \emph{source} of the edge $e$;
\item $T_{e}^{F}=\phi_{e}^{F}(S_{e}^{F})\subset V^{F}$ is a measurable set, the \emph{target} of the edge $e$;
\item $\phi_{e}^{F}$ is the restriction of an element of $\microcosm{m}$ to $S_{e}^{F}$, the \emph{realisation} of the edge $e$.
\end{itemize}
\end{definition}

It was shown in earlier work \cite{seiller-goia} how one can construct models of \MALL where proofs are interpreted as graphs. This construction relied on a single property, called the \emph{trefoil property}, which relates two simple notions: 
\begin{itemize}[noitemsep,nolistsep]
\item the \emph{execution} $F\plug G$ of two graphs, a graph defined as a set of paths;
\item the \emph{measurement} $\meas{F,G}$, a real number computed from a set of cycles.
\end{itemize}
These constructions can be extended to the more general framework where proofs are interpreted as graphings. Indeed, the notions of paths and cycles in a graphings are quite natural, and from two graphings $F,G$ in a microcosm $\microcosm{m}$ one can define its execution $F\plug G$ which is again a graphing in $\microcosm{m}$\footnote{As a consequence, a microcosm characterises a \enquote{closed world} for the execution of programs.}. A more involved argument then shows that the trefoil property holds for a family of measurements $\meas{\cdot,\cdot}$, where $m:\Omega\rightarrow\realposN\cup\{\infty\}$ is any measurable map. These results are obtained as a generalization of constructions considered in the author's PhD thesis\footnote{In the cited work, the results were stated in the particular case of the microcosm of measure-preserving maps on the real line.} \cite{seiller-phd}.

\begin{theorem}[Enveloping model]
Let $\Omega$ be a monoid and $\microcosm{m}$ a microcosm. The set of $\Omega$-weighted graphings in $\microcosm{m}$ yields a model, denoted by $\model{\Omega}{m}$, of \MALL.
\end{theorem}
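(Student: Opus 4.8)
\emph{Proof plan.} This statement is the \MALL\ fragment of Theorem~\ref{mainthm}, and the proof is the construction underlying both, adapted from the graph case of \cite{seiller-goia}. The plan is to isolate two combinatorial operations on graphings -- \emph{execution} and \emph{measurement} -- prove that they satisfy the \emph{trefoil property}, and then read off the model from an abstract bi-orthogonality argument. First, given $\Omega$-weighted graphings $F,G$ in $\microcosm{m}$ whose carriers are embedded in a common measurable space, the execution $F\plug G$ is the graphing whose edges are the maximal alternating $F/G$-paths starting and ending outside the shared part of the carriers; the weight of such a path is the product in $\Omega$ of the weights of its edges, and its realisation is the corresponding composite of the maps $\phi^{F}_{e},\phi^{G}_{e}$. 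This composite is again the restriction of an element of $\microcosm{m}$ precisely because a microcosm is closed under composition and contains the identity; and since $E^{F},E^{G}$ are countable there are only countably many such paths, so $F\plug G$ is once more a graphing in $\microcosm{m}$ -- this is the content of the footnote's \enquote{closed world} remark.

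Dually, the measurement $\meas{F,G}$ is the sum, over the alternating $F/G$-\emph{cycles} $\pi$, of $m(\omega(\pi))$ times the measure of the set of points that actually travel all the way around $\pi$; since $m$ takes values in $\realposN\cup\{\infty\}$ this is a well-defined element of $\realposN\cup\{\infty\}$. One works with graphings up to the natural refinement/a.e.\ equivalence so that $\plug$ and $\meas{\cdot,\cdot}$ are well defined and $\plug$ is genuinely associative.

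The technical core is the trefoil property: for graphings $F,G,H$ with compatible carriers, $F\plug(G\plug H)=(F\plug G)\plug H$ and
\[
\meas{F,G}+\meas{F\plug G,H}=\meas{G,H}+\meas{F,G\plug H}.
\]
Both identities follow from one piece of bookkeeping: classify the alternating walks over $\{F,G,H\}$ into the paths of the composite (counted identically on each side) and the cycles, and check that every cycle -- and every point cycling around it -- contributes the same quantity to each side, the equality of the measure contributions being a Fubini-type computation over the measurable families of walks. This is the announced generalisation from measure-preserving maps on $\mathbf{R}$ (\cite{seiller-phd}) to an arbitrary microcosm and an arbitrary measurable $m$, and I expect it to be the main obstacle: one must check that non-singular -- not necessarily measure-preserving -- realisations still allow the measure of the cycling points to be tracked coherently through the three-way composition, and that the families of cycles and their associated point sets are measurable, so that the infinite sums and integrals are legitimate; the additive connectives, which need a localisation of carriers and a sum operation compatible with orthogonality, are a secondary point of care.

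Granting the trefoil property, the model is obtained by the abstract machinery of \cite{seiller-goia}: set $F\perp G$ iff $0\neq\meas{F,G}\neq\infty$ (the orthogonality attached to $m$); define a \emph{conduct} on a given carrier to be a set of graphings equal to its bi-orthogonal $\mathbf{A}=\mathbf{A}^{\perp\perp}$; interpret linear negation by $(\cdot)^{\perp}$ and the tensor by $\mathbf{A}\otimes\mathbf{B}=\{F\uplus G\mid F\in\mathbf{A},\,G\in\mathbf{B}\}^{\perp\perp}$ on the disjoint union of carriers, the remaining multiplicative connective being its De Morgan dual and the additive connectives being bi-orthogonally closed sums over disjoint localisations of the carrier. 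Associativity of $\plug$ together with the trefoil identity yield soundness of cut elimination, and a routine check of each \MALL\ sequent rule -- identity and cut, the two multiplicative rules, the two additive rules, and the units -- shows that $\model{\Omega}{m}$ is a model of \MALL, as claimed.
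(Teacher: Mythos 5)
Your plan coincides with the paper's own (sketched) argument: the paper likewise defines execution and measurement on graphings, notes that closure of the microcosm under composition keeps $F\plug G$ inside $\microcosm{m}$, reduces everything to the trefoil property for the family of measurements parametrised by $m$, and then invokes the bi-orthogonality construction of \cite{seiller-goia}, with the same caveat that the measurable/non-measure-preserving generalisation of the trefoil property is the genuinely delicate step. Your reconstruction is correct and follows essentially the same route.
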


In most of the models, one can define some exponential connectives. In particular, all models considered later on have the necessary structure to define an exponential modality. Let us notice however that the notion of exponential modality we are considering here is extremely weak, as most models won't validate the functorial promotion rule. The only rule that is assured to be satisfied by the exponential connectives we will consider is the contraction rule, i.e. for any type $\cond{A}$, one has $\oc\cond{A}\multimap \oc\cond{A}\otimes\oc\cond{A}$. These very weak exponential connectives will turn out to be of great interest: we obtain in this way models of linear logic where the exponentials are weaker than what is obtained through syntactic consideration in systems like \BLL, \SLL, etc. and characterise low complexity classes.

\subsection{Models of Computation}

Before explaining how one can characterise complexity classes in this way, we need to state refinements of the previous theorem. We first define the notion of \emph{deterministic graphing}.

\begin{definition}[Deterministic graphings]
A $\Omega$-weighted graphing $G$ is \emph{deterministic} when:
\begin{itemize}[nolistsep,noitemsep]
\item for all $e\in E^{G}$, $\omega^{G}_{e}\in\{0,1\}$;
\item the following set is of null measure: $\{x\in\measured{X}~|~\exists e\neq e'\in E^{G}, x\in S_{e}^{G}\cap S^{G}_{e'}\}$
\end{itemize}
\emph{Non-deterministic graphings} are defined as those graphings satisfying the first condition.
\end{definition}

We then prove that the notions of deterministic and non-deterministic graphings are closed under composition, i.e. if $F,G$ are deterministic graphings, then their execution $F\plug G$ is again a deterministic graphing. This shows that the sets of deterministic and non-deterministic graphings define submodels of $\model{\Omega}{m}$.

\begin{theorem}[Deterministic model]
Let $\Omega$ be a monoid and $\microcosm{m}$ a microcosm. The set of $\Omega$-weighted \emph{deterministic} graphings in $\microcosm{m}$ yields a model, denoted by $\dmodel{\Omega}{m}$, of multiplicative-additive linear logic.
\end{theorem}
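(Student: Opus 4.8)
The plan is to deduce this refinement directly from the Enveloping model theorem rather than to rebuild the model from scratch. Since $\model{\Omega}{m}$ is already known to be a model of \MALL{}, and since its soundness rests on the trefoil property relating $\plug$ and the measurements $\meas{\cdot,\cdot}$ — a property that holds \emph{a fortiori} for any subclass of graphings — it suffices to show that the class $\mathcal{D}$ of deterministic $\Omega$-weighted graphings in $\microcosm{m}$ is stable under all the operations used to interpret the rules of \MALL{}. Restricting the types of $\model{\Omega}{m}$ to $\mathcal{D}$ then yields a sub-model $\dmodel{\Omega}{m}$, and soundness of cut elimination is inherited.

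First I would dispatch the easy cases. The graphing interpreting an axiom is a single edge of weight $1$ realised by a restriction of the identity element of $\microcosm{m}$, which trivially satisfies both clauses of the definition; the graphings interpreting the units are empty or equally trivial. For the multiplicative connectives, $\otimes$ and $\parr$ are interpreted at the level of graphings by a disjoint juxtaposition of graphings living on disjoint carriers, so a point lies in the sources of two distinct edges of the juxtaposition only if this already occurs inside one component; both the weight clause and the almost-everywhere branching clause are therefore preserved. The additive rules are interpreted by again placing the component graphings on disjoint measurable regions of the carrier (through the auxiliary measurable component used for additives), so the same observation applies, and cut is just execution, treated next.

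The crux is closure of $\mathcal{D}$ under execution: if $F,G\in\mathcal{D}$ then $F\plug G\in\mathcal{D}$. Recall that the edges of $F\plug G$ are the maximal alternating paths between $F$ and $G$, an edge carrying as weight the product of the weights met along the path and as realisation the composition of the corresponding elements of $\microcosm{m}$. The weight clause is immediate, since a product of elements of $\{0,1\}$ stays in $\{0,1\}$. For the second clause, I would show that an alternating path is almost-everywhere determined by its source point: determinism of $F$ (resp. $G$) forces, outside a null set, a unique continuation at each alternation step, so the ``branching locus'' of $F\plug G$ — the set of points belonging to the sources of two distinct composed edges — is a countable union, over the (countably many) pairs of distinct composable edge-sequences of $F$ and $G$, of null sets, hence null.

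The main obstacle is precisely this measure-theoretic bookkeeping. Alternating paths can be arbitrarily long, so one cannot conclude from a single application of the determinism hypotheses; one must propagate the exceptional null sets uniformly along all paths, typically by an induction on path length combined with a limiting argument, and check in particular that composing restrictions of non-singular maps never produces branching on a set of positive measure. Once $\mathcal{D}$ is known to be stable under all the operations above, the interpretation in $\model{\Omega}{m}$ of any \MALL{} proof with deterministic conclusion is itself deterministic, the orthogonality relation restricts without change, and the model structure of $\model{\Omega}{m}$ cuts down to $\mathcal{D}$, giving the desired model $\dmodel{\Omega}{m}$ of multiplicative-additive linear logic.
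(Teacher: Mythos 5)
Your proposal is correct and follows essentially the same route as the paper: the paper's argument is precisely that the deterministic graphings are closed under execution (the weight clause being immediate and the branching clause handled by the non-singularity of the realising maps), so that they form a submodel of the enveloping model $\model{\Omega}{m}$. Your additional bookkeeping on the multiplicative and additive constructions and on propagating the null exceptional sets along alternating paths is exactly the detail the paper delegates to the cited technical work.
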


\begin{theorem}[Non-deterministic model]
The set of $\Omega$-weighted \emph{non-deterministic} graphings in $\microcosm{m}$ yields a model, denoted by $\nmodel{\Omega}{m}$, of multiplicative-additive linear logic.
\end{theorem}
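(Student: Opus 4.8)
The plan is to obtain $\nmodel{\Omega}{m}$ as a submodel of the enveloping model $\model{\Omega}{m}$ of \autoref{mainthm}. Concretely, it suffices to show that the class of $\Omega$-weighted non-deterministic graphings in $\microcosm{m}$ is stable under every operation used to interpret \MALL: the graphings realising axiom links, the execution $\plug$ realising cut, linear negation, and the multiplicative and additive connectives. Once this is established, soundness of each \MALL rule, together with the trefoil property and the adjunction for $\meas{\cdot,\cdot}$, is inherited verbatim from $\model{\Omega}{m}$, since the measurement and the execution on non-deterministic graphings are literally the restrictions of those on the ambient model. Thus the whole proof reduces to propagating the single defining condition ``$\omega^{G}_{e}\in\{0,1\}$ for every edge $e$''.

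The heart of the argument is closure under execution. Recall that $F\plug G$ is the graphing whose edges are the alternating execution paths between $F$ and $G$, a new edge being weighted by the product in $\Omega$ of the weights of the edges it traverses. Since $\{0,1\}$ is a multiplicatively closed subset of $\Omega$ — it contains the unit $1$, and $0$, when it belongs to $\Omega$, is absorbing — every execution path is again weighted by an element of $\{0,1\}$ as soon as the edges of $F$ and $G$ are, so $F\plug G$ is non-deterministic whenever $F$ and $G$ are. Observe that, unlike in the deterministic model, there is no almost-disjointness condition on sources to maintain through composition: distinct execution paths yield distinct edges, so nothing is merged and no summation of weights takes place.

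The remaining operations are immediate. Linear negation leaves the underlying weighted graphing, hence every edge weight, unchanged. The multiplicative and additive connectives are assembled by taking disjoint copies of the argument graphings and relabelling their carriers inside $(X,\mathcal{B},\lambda)$; none of these operations alters a weight, so the $\{0,1\}$-condition is preserved. The graphings interpreting axiom links are built from weight-$1$ partial isomorphisms, hence are non-deterministic (in fact deterministic). By induction on the structure of a \MALL proof, its interpretation is therefore a non-deterministic graphing, and since orthogonality on the submodel is the restriction of that on $\model{\Omega}{m}$, a type of $\nmodel{\Omega}{m}$ is simply a set of non-deterministic graphings equal to its bi-orthogonal taken inside $\nmodel{\Omega}{m}$.

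The step that needs the most care — and the only real obstacle — is conceptual rather than computational: one must fix the precise definition of the weight carried by an execution path and check that it never involves a sum of parallel contributions, since such a sum could escape the bound $\{0,1\}$. Once this is settled, the non-deterministic case is in fact the easy half of the composition-closure lemma; it is the deterministic refinement, where the almost-disjointness of sources must additionally be shown to propagate through $\plug$, that carries the genuine technical weight. A secondary point worth verifying is that bi-orthogonal closure computed inside $\nmodel{\Omega}{m}$ agrees with the one computed in $\model{\Omega}{m}$ on non-deterministic arguments; this follows from the observation that all auxiliary graphings used in the standard orthogonality arguments (projections onto subcarriers, the deformations realising the isomorphisms of \MALL, the identity witnesses used to test orthogonality) can themselves be chosen with weights in $\{0,1\}$, so restricting to non-deterministic graphings loses no separating power.
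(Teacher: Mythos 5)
Your proposal matches the paper's own argument: the paper likewise obtains $\nmodel{\Omega}{m}$ as a submodel of the enveloping model $\model{\Omega}{m}$ by showing that non-deterministic graphings are closed under the execution $\plug$, with all the logical structure then inherited. Your additional observations (that only the $\{0,1\}$-weight condition needs propagating, multiplicative closure of $\{0,1\}$ in $\Omega$, and that the deterministic case is the one carrying the real technical weight via the disjointness of sources) are consistent with, and slightly more detailed than, the paper's sketch.
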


One can also consider several other classes of graphings. We explain here the simplest non-classical model one could consider, namely that of \emph{probabilistic graphings}. In order for this notion to be of real interest, one should suppose that the unit interval $[0,1]$ endowed with multiplication is a submonoid of $\Omega$.

\begin{definition}[Probabilistic graphings]
A $\Omega$-weighted graphing $G$ is \emph{probabilistic} when:
\begin{itemize}[nolistsep,noitemsep]
\item for all $e\in E^{G}$, $\omega^{G}_{e}\in[0,1]$;
\item the following set is of null measure: $\{x\in\measured{X}~|~\sum_{e\in E^{G},~x\in S^{G}_{e}} \omega^{G}_{e}> 1\}$
\end{itemize}
\end{definition}

It turns out that this notion of graphing also behaves well under composition, i.e. there exists a \emph{probabilistic} submodel of $\model{\Omega}{m}$, namely the model of \emph{probabilistic graphings}.

\begin{theorem}[Probabilistic model]
Let $\Omega$ be a monoid and $\microcosm{m}$ a microcosm. The set of $\Omega$-weighted \emph{probabilistic} graphings in $\microcosm{m}$ yields a model, denoted by $\pmodel{\Omega}{m}$, of multiplicative-additive linear logic.
\end{theorem}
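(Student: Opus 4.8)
The plan is to follow the same strategy already used for the deterministic and non-deterministic submodels: show that the class of probabilistic graphings is closed under the execution operation $F\plug G$, and then invoke the general \emph{enveloping model} theorem to conclude that the restriction to this subclass is again a model of \MALL. Concretely, since $\model{\Omega}{m}$ is already known to be a model of \MALL\ with execution $\plug$, it suffices to verify that (i) the axiom/identity graphings and the graphings interpreting the \MALL\ connectives restrict to probabilistic graphings, and (ii) the set of probabilistic graphings is stable under $\plug$ and under the tensor/plus/with/par constructions on graphings. Point (i) is routine: identities carry weight $1$ on disjoint pieces, and the structural graphings used for $\otimes$, $\oplus$, $\with$, $\parr$ are built from partial isomorphisms with weight $1$, so the two defining conditions (weights in $[0,1]$; the $x$-fiberwise sum of outgoing weights bounded by $1$ almost everywhere) hold trivially, the latter because distinct edges have essentially disjoint sources in those cases.

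The heart of the argument is the stability of the probabilistic condition under execution. Recall that an edge of $F\plug G$ is (the data attached to) an alternating path between $F$ and $G$, its realisation being the composite of the realisations along the path and its weight being the product $\prod_i \omega_{e_i}$ of the weights met along the path. First I would fix a point $x\in X$ and estimate $\sum_{f\in E^{F\plug G},\ x\in S^{F\plug G}_f}\omega^{F\plug G}_f$. The key combinatorial fact is that the maximal alternating paths starting from $x$ are organised as the branches of a tree: from $x$, one chooses a first edge of $F$ (or $G$) whose source contains $x$, arriving at a point $x_1=\phi_{e_1}(x)$, from which one chooses an edge on the other side whose source contains $x_1$, and so on. At each node the weights of the available continuing edges sum to at most $1$ by the probabilistic hypothesis on $F$ and on $G$ respectively (applied at the current point, using non-singularity so that the ``almost every $x$'' clauses propagate). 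A branch contributes to $F\plug G$ only when it is a genuine completed path (it exits the ``plugging'' region), so the total weight of edges of $F\plug G$ with source containing $x$ is bounded by the total weight mass of the branches of this tree, which an easy induction on depth shows is $\leqslant 1$. Summing the geometric-series-style bound and discarding a null set (where one of the two hypotheses fails, or where the tree has branchings of infinite weight, which the hypotheses forbid a.e.) gives the probabilistic condition for $F\plug G$; the weight-in-$[0,1]$ condition is immediate since $[0,1]$ is multiplicatively closed, which is exactly why we assumed $([0,1],\times)$ is a submonoid of $\Omega$.

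I would then package this as a lemma ``probabilistic graphings are closed under execution'' and note, as in the deterministic and non-deterministic cases, that closure under $\plug$ together with closure under the connective constructions means the full interpretation of \MALL\ proofs lands in $\pmodel{\Omega}{m}$, and the adjunctions/isomorphisms witnessing the model structure are inherited verbatim from $\model{\Omega}{m}$ by restriction. The main obstacle I anticipate is precisely the measure-theoretic bookkeeping in the tree argument: one must check that the ``outgoing weight $\leqslant 1$'' inequality, which holds for a.e.\ point for $F$ and for $G$ separately, can be chained along paths without the exceptional null sets accumulating into something of positive measure — this is where non-singularity of the maps in $\microcosm{m}$ is essential, since it guarantees that preimages of null sets under the $\phi_e$ are null, so that only countably many null sets are involved and their union is still null. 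Everything else — finiteness of the relevant sums, measurability of the path data, the fact that completed paths have essentially disjoint sources when they should — is already established in the construction of $\model{\Omega}{m}$ and is reused here without change.
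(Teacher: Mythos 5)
Your proposal is correct and follows essentially the same route as the paper: the paper's argument consists precisely in observing that probabilistic graphings are closed under the execution $\plug$ (and under the structural constructions), so that they form a submodel of the enveloping model $\model{\Omega}{m}$. Your tree-induction bound on the total outgoing weight of alternating paths, together with the null-set bookkeeping via non-singularity, fills in the closure lemma that the paper states without detail.
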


%
%
%

\begin{figure}
\subfloat[Different types of graphings]{
\begin{tikzpicture}
	\node (model) at (0,0) {$\model{\Omega}{m}$};
	\node (dmodel) at (0,-2) {$\dmodel{\Omega}{m}$};
	\node (nmodel) at (-1,-1) {$\nmodel{\Omega}{m}$};
	\node (pmodel) at (1,-1) {$\pmodel{\Omega}{m}$};
	\node (a) at (2.1,-1) {~};
	
	\draw[>->] (dmodel) -- (nmodel) {};
	\draw[>->] (dmodel) -- (pmodel) {};
	\draw[>->] (nmodel) -- (model) {};
	\draw[>->] (pmodel) -- (model) {};
\end{tikzpicture}
}
\subfloat[Different microcosms]{
\begin{tikzpicture}
	\node (model) at (0,0) {$\model{\Omega}{m+n}$};
	\node (dmodel) at (0,-2) {$\model{\Omega}{m\cap n}$};
	\node (nmodel) at (-1,-1) {$\model{\Omega}{m}$};
	\node (pmodel) at (1,-1) {$\model{\Omega}{n}$};
	
	\draw[>->] (dmodel) -- (nmodel) {};
	\draw[>->] (dmodel) -- (pmodel) {};
	\draw[>->] (nmodel) -- (model) {};
	\draw[>->] (pmodel) -- (model) {};
\end{tikzpicture}
}
\subfloat[Different weight monoids]{
\begin{tikzpicture}
	\node (a) at (-2,-1) {~};
	\node (model) at (0,0) {$\model{\Theta+_{\Omega}\Xi}{m}$};
	\node (dmodel) at (0,-2) {$\model{\Omega}{m}$};
	\node (nmodel) at (-1,-1) {$\model{\Theta}{m}$};
	\node (pmodel) at (1,-1) {$\model{\Xi}{m}$};
	
	\draw[>->] (dmodel) -- (nmodel) {};
	\draw[>->] (dmodel) -- (pmodel) {};
	\draw[>->] (nmodel) -- (model) {};
	\draw[>->] (pmodel) -- (model) {};
\end{tikzpicture}
}
%
\caption{Inclusions of models}\label{inclusions}
\end{figure}
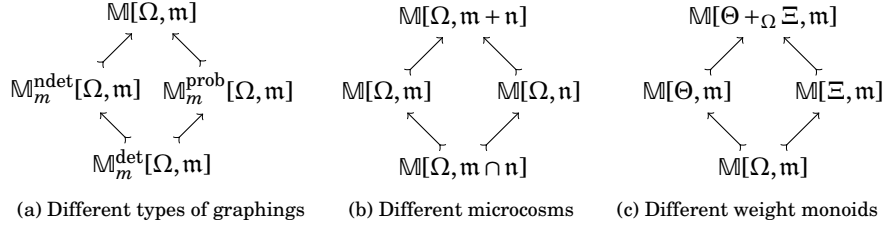

These models are all submodels of the single model $\model{\Omega}{m}$. Moreover, other inclusions of models can be obtained by modifying the other parameters, namely the weight monoid $\Omega$ and the microcosm $\microcosm{m}$.
For instance, given two microcosms $\microcosm{m}\subset\microcosm{n}$, it is clear that a graphing in $\microcosm{m}$ is in particular a graphing in $\microcosm{n}$. This inclusion actually extends to an embedding of the model $\model{\Omega}{m}$ into $\model{\Omega}{n}$ which preserves most logical operations\footnote{It preserves all connectives except for negation.}. Moreover, given two microcosms $\microcosm{m}$ and $\microcosm{n}$, one can define the \emph{smallest common extension} $\microcosm{m+n}$ as the compositional closure of the set $\microcosm{m}\cup\microcosm{n}$. The model $\model{\Omega}{m+n}$ then contains both models $\model{\Omega}{m}$ and $\model{\Omega}{n}$ through the embedding just mentioned.
In the same way, an inclusion of monoids $\Omega\subset\Gamma$ yields an embedding of the the model $\model{\Omega}{m}$ into $\model{\Gamma}{m}$. For instance, the model $\model{\{1\}}{m}$ is a submodel of $\model{\Omega}{m}$ for any monoid $\Omega$. One can also define, given weight monoids $\Omega$, $\Theta$ and $\Xi$ with monomorphisms $\Omega\rightarrow\Theta$ and $\Omega\rightarrow\Xi$, the model $\model{\Theta+_{\Omega}\Xi}{m}$ where $\Theta+_{\Omega}\Xi$ denotes the amalgamated sum of the monoids. \autoref{inclusions} illustrates some of these inclusions of models.

\section{characterisations of sub-logarithmic classes}\label{recentresults}

We now expose some recent results obtained by applying the methodology described above \cite{seiller-goic}. We describe in this way a number of new characterisations of sublogarithmic complexity classes. Before going into details about these characterisations, let us define a number of complexity classes -- all of them definable by classes of automata.

\begin{definition}
For each integer $i$, we define:
\begin{itemize}[nolistsep,noitemsep]
\item the class \cctwdfa{i} (resp. \ccowdfa{i}) as the set of languages accepted by deterministic two-way (resp. one-way) multihead automata with at most $i$ heads;
\item the class \cctwnfa{i} (resp. \ccownfa{i}) as the set of languages accepted by two-way (resp. one-way) multihead automata with at most $i$ heads;
\item the class \cctwconfa{i} (resp. \ccowconfa{i}) as the set of languages whose complementary language is accepted by two-way (resp. one-way) multihead automata with at most $i$ heads;
\item the class \cctwpfa{i} (resp. \ccowpfa{i}) as the set of languages accepted by two-way (resp. one-way) probabilistic multihead automata with at most $i$ heads;
\end{itemize}
We also denote by \Logspace (resp. \Ptime) the class of predicates over binary words that are recognized by a Turing machine using logarithmic space (resp. polynomial time), by \NLogspace (resp. \NPtime) its non-deterministic analogue, by \coNLogspace (resp. \coNPtime) the set of languages whose complementary language lies in \Logspace (resp. \Ptime). We also denote by \PLogspace the class of predicates over binary words that are recognized by a probabilistic Turing machine with unbounded error using logarithmic space.
\end{definition}

We don't recall the usual definitions of these variants of multihead automata, which can be easily found in the literature. We only recall the classical results:
\[
\begin{array}{ccc}
\cup_{i\in\naturalN}\text{\cctwdfa{i}}=\text{\Logspace}&\hspace{1cm}&\cup_{i\in\naturalN}\text{\cctwnfa{i}}=\text{\NLogspace}\\
\cup_{i\in\naturalN}\text{\cctwconfa{i}}=\text{\coNLogspace}&\hspace{1cm}&\cup_{i\in\naturalN}\text{\cctwpfa{i}}=\text{\PLogspace}
\end{array}
\]

In the following, we will denote by \cctwdfa{$\infty$} (resp. \cctwnfa{$\infty$}, resp. \cctwpfa{$\infty$}) the set $\cup_{i\in\naturalN}\text{\cctwdfa{i}}$ (resp. $\cup_{i\in\naturalN}\text{\cctwnfa{i}}$, resp. $\cup_{i\in\naturalN}\text{\cctwpfa{i}}$).

\subsection{Deterministic Computation: From Regular Languages to Logarithmic Space}

In the models of linear logic we described, one can easily define the type $\ListType$ of words over an arbitrary finite alphabet $\Sigma$. The definition of the representation of these binary words comes from the encoding of binary lists in lambda-calculus and is explained thoroughly in previous works \cite{seiller-phd,seiller-conl,seiller-lsp}. We won't give the formal definition of what is a representation of a word $\word{w}$ here, but let us sketch the main ideas. Given a word, say the binary word $\word{w}=00101$, we introduce a symbol $\star$ that can be understood as a left-hand end-of-tape marker and consider the list of symbols $\star 00101$. Then, the graphing that will represent $\word{w}$ is obtained as a realisation of the directed graph whose set of vertices is $\{\star,0,1\}\times\{\In,\Out\}$ and whose edges link the symbols of the list together, i.e. the graph pictured in \autoref{twintegerrep}.

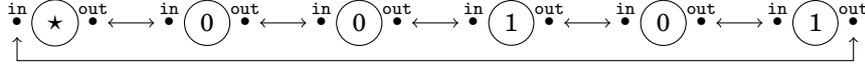
\begin{figure}
\begin{tikzpicture}
	\node[draw,circle] (0) at (0,0) {$\star$};
		\node (0i) at (-0.5,0) {$\bullet$};
			\node (0ilabel) at (0i.north) {\scriptsize{$\In$}};
		\node (0o) at (0.5,0) {$\bullet$};
			\node (0olabel) at (0o.north) {\scriptsize{$\Out$}};
	\node[draw,circle] (1) at (2,0) {$0$};
		\node (1i) at (1.5,0) {$\bullet$};
			\node (1ilabel) at (1i.north) {\scriptsize{$\In$}};
		\node (1o) at (2.5,0) {$\bullet$};
			\node (1olabel) at (1o.north) {\scriptsize{$\Out$}};
	\node[draw,circle] (2) at (4,0) {$0$};
		\node (2i) at (3.5,0) {$\bullet$};
			\node (2ilabel) at (2i.north) {\scriptsize{$\In$}};
		\node (2o) at (4.5,0) {$\bullet$};
			\node (2olabel) at (2o.north) {\scriptsize{$\Out$}};
	\node[draw,circle] (3) at (6,0) {$1$};
		\node (3i) at (5.5,0) {$\bullet$};
			\node (3ilabel) at (3i.north) {\scriptsize{$\In$}};
		\node (3o) at (6.5,0) {$\bullet$};
			\node (3olabel) at (3o.north) {\scriptsize{$\Out$}};
	\node[draw,circle] (4) at (8,0) {$0$};
		\node (4i) at (7.5,0) {$\bullet$};
			\node (4ilabel) at (4i.north) {\scriptsize{$\In$}};
		\node (4o) at (8.5,0) {$\bullet$};
			\node (4olabel) at (4o.north) {\scriptsize{$\Out$}};
	\node[draw,circle] (5) at (10,0) {$1$};
		\node (5i) at (9.5,0) {$\bullet$};
			\node (5ilabel) at (5i.north) {\scriptsize{$\In$}};
		\node (5o) at (10.5,0) {$\bullet$};
			\node (5olabel) at (5o.north) {\scriptsize{$\Out$}};
	
	\draw[<->] (0o) -- (1i) {};
	\draw[<->] (1o) -- (2i) {};
	\draw[<->] (2o) -- (3i) {};
	\draw[<->] (3o) -- (4i) {};
	\draw[<->] (4o) -- (5i) {};
	\draw[<->] (5o) -- (10.5,-0.5) -- (-0.5,-0.5) -- (0i) {};
\end{tikzpicture}
\caption{Representation of the word $\word{w}=00101$}\label{twintegerrep}
\end{figure}

We are now interested in the elements of the type $\oc\ListTypeBin$. For each word $\word{w}$, there exists an element $\oc L_{\word{w}}$ in the type $\oc\ListTypeBin$ which represents it. We say that a graphing -- or \emph{program} -- $P$ of type $\oc\ListTypeBin\multimap \cond{Bool}$ \emph{accepts} the word $\word{w}$ when the execution $P\plug W_{\word{w}}$ is equal to the distinguished element $\tt true\rm\in\cond{Bool}$. The \emph{language} accepted by such a program $P$ is then defined as $[P]=\{\word{w}\in\ListTypeBin~|~\phi\plug W_{\word{w}}=\tt true\rm\}$.

\begin{definition}[characterisation - deterministic models]
Let $\Omega$ be a monoid, $\microcosm{m}$ a microcosm and $\mathcal{L}$ a set of languages. We say the model $\genmodel[]{m}{\Omega}{\textnormal{det}}$ \emph{characterises the set $\mathcal{L}$} if the set $\{[P]~|~P\in \oc\ListTypeBin\multimap\cond{Bool}\}$ is equal to $\mathcal{L}$.
\end{definition}

We now consider the measure space $\integerN\times[0,1]^{\naturalN}$ endowed with the product of the counting measure on $\integerN$ and the Lebesgue measure on the Hilbert cube $[0,1]^{\naturalN}$. 
To define microcosms, we use the constructor $+$: if  $\microcosm{m}$ and $\microcosm{n}$ are two microcosms, $\microcosm{m+n}$ is the smallest microcosm containing both $\microcosm{m}$ and $\microcosm{n}$. We can now define the following microcosms:
\begin{itemize}[noitemsep,nolistsep]
\item $\microcosm{m}_{1}$ is the monoid of translations $\tau_{k}:(n,x)\mapsto (n+k,x)$;
\item $\microcosm{m}_{i+1}$ is the monoid $\microcosm{m}_{i}+\microcosm{s}_{i+1}$ where $\microcosm{s_{i+1}}$ is the monoid generated by the single map: 
$$s_{i+1}:(n,(x_{1},x_{2},\dots))\mapsto(n,(x_{i+1},x_{2},\dots,x_{i},x_{1},x_{i+2},\dots))$$
\item $\microcosm{m}_{\infty}=\cup_{i\in\naturalN}\microcosm{m}_{i}$.
\end{itemize}
The intuition is that a microcosm $\microcosm{m}$ represents the set of computational principles available to write programs in the model. The operation $+$ thus extends the set of principles at disposal, increasing expressivity. As a consequence, the set of languages characterised by the type $\oc\ListTypeBin\multimap\cond{Bool}$ becomes larger and larger as we consider extensions of the microcosms. As an example, the microcosm $\microcosm{m}_{1}$ corresponds to allowing oneself to compute with automata. Expanding this microcosm by adding a map $s_{2}$ yields $\microcosm{m}_{2}=\microcosm{m}_{1}+\microcosm{s}_{2}$ and corresponds to the addition of a new computational principle: using a second head. 

\begin{theorem}\label{th1}
The model $\dmodel{\{1\}}{m_{\textnormal{i}}}$~$(i\in\naturalN\cup\{\infty\})$ characterises the class \cctwdfa{i}.
\end{theorem}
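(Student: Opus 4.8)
The plan is to prove the two inclusions of sets of languages separately. Write $\mathcal{P}_i$ for the set $\{[P] \mid P \text{ an inhabitant of } \oc\ListTypeBin \multimap \cond{Bool} \text{ in } \dmodel{\{1\}}{m_i}\}$; we must show $\mathcal{P}_i = \cctwdfa{i}$. The inclusion $\cctwdfa{i} \subseteq \mathcal{P}_i$ (completeness) is obtained by compiling a deterministic two-way $i$-head automaton into a deterministic graphing realised in $\microcosm{m}_i$; the converse inclusion $\mathcal{P}_i \subseteq \cctwdfa{i}$ (soundness) is obtained by simulating the execution $P \plug W_{\word w}$ by such an automaton. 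Both directions adapt the constructions behind the author's characterisations of \Logspace and \coNLogspace \cite{seiller-lsp,seiller-conl}; the one genuinely new ingredient is a quantitative bookkeeping showing that the number of heads matches exactly the index of the microcosm, which is where the precise definition of $\microcosm{m}_i$ (the translations $\tau_k$ together with the coordinate-swaps $s_2,\dots,s_i$) enters.

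For completeness, fix a deterministic two-way automaton $M$ with finite state set $Q$ and at most $i$ heads. I would build a graphing $P_M$ whose vertices carry a control state of $Q$ together with the finite data of a configuration of $M$ (which head is \emph{active}, the scanned symbols, \dots), laid over copies of the carrier of $\oc\ListTypeBin$, the copies indexed by the Hilbert-cube coordinates and playing the role of the heads. Each instruction of $M$ --- update the state and move head $j$ one cell left or right --- becomes a single edge of weight $1$ whose realiser is $s_j \circ \tau_{\pm 1} \circ s_j$ (bring head $j$ into active position, translate, put it back); since $j \le i$ this realiser belongs to $\microcosm{m}_i$. Determinism of $M$ forces the sources of distinct edges of $P_M$ to be almost everywhere disjoint, so $P_M$ is a deterministic graphing, and a uniformity argument over all representations of words shows $P_M$ realises a morphism of type $\oc\ListTypeBin \multimap \cond{Bool}$. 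It then remains to compute the feedback $P_M \plug W_{\word w}$ --- a set of maximal alternating paths between $P_M$ and the cyclic list of \autoref{twintegerrep} --- and to check, by induction matching these paths with runs of $M$ (a finite run reaching an accepting configuration giving $\tt true\rm$, every other behaviour giving a graphing distinct from $\tt true\rm$), that it equals $\tt true\rm$ exactly when $M$ accepts $\word w$; hence $L(M) = [P_M] \in \mathcal{P}_i$.

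For soundness, start from an inhabitant $P$ of $\oc\ListTypeBin \multimap \cond{Bool}$ in $\dmodel{\{1\}}{m_i}$. By the analysis of this type --- and using the weakness of the exponential --- $P$ may be taken with finitely many vertices, which will serve as the automaton's control states. Fix $\word w$: the graphing $P \plug W_{\word w}$ is deterministic, hence realises almost everywhere a partial map whose orbits are the alternating paths computing the feedback, and every realiser occurring in it is a composite of maps of $\microcosm{m}_i$. Such maps act only on the $\integerN$-coordinate --- which, read against the cyclic word representation, is the position of the active head --- and permute the first $i$ Hilbert-cube coordinates, which encode the positions of the other heads; consequently, against $W_{\word w}$ the carrier splits into finitely many regions indexed by tuples of at most $i$ head-positions in $\{0,\dots,|\word w|+1\}$, and a configuration of the iteration is faithfully described by a vertex of $P$ together with such a tuple. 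A two-way deterministic automaton with state set the vertices of $P$ and that many heads can therefore step through $P \plug W_{\word w}$ --- moving in both directions because $\tau_{\pm1}$ does --- and accept precisely when the feedback yields $\tt true\rm$, so $[P] \in \cctwdfa{i}$. The case $i = \infty$ then follows for free: any fixed program uses finitely many of the swap maps and so lies in some $\dmodel{\{1\}}{m_k}$, whence $\mathcal{P}_\infty = \bigcup_{k \in \naturalN} \mathcal{P}_k = \bigcup_{k \in \naturalN} \cctwdfa{k} = \cctwdfa{\infty}$.

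I expect the main obstacle to be the exact matching at the interface between the realisability discipline and automata: showing that an arbitrary inhabitant $P$ of $\oc\ListTypeBin \multimap \cond{Bool}$ can be brought, up to the induced equivalence, into a normal form close enough to an automaton --- in particular that its action on every word representation is ``finite state'' and that the $\oc$-modality, implemented through the Hilbert cube, makes ``number of copies addressed'' coincide with ``number of heads'' --- together with the dual point that the restricted supply of swap maps in $\microcosm{m}_i$ genuinely prevents a program from ever consulting an $(i+1)$-th head, so that neither inclusion leaks across the index $i$. The feedback computation itself and the verification of type membership (that $P_M$, and every $P$ under consideration, sends each word representation to an actual element of $\cond{Bool}$ rather than to a merely terminating graphing) are routine but delicate, and are precisely the places where the hypotheses on $\microcosm{m}_i$ do their work.
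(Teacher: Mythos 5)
A preliminary remark: the paper does not prove this theorem --- it is stated as a result imported from \cite{seiller-goic}, and the only supporting material in the text is the informal ``Examples'' subsection. Measured against that material, your overall plan is exactly the intended one: compile an $i$-head automaton into a deterministic graphing of $\microcosm{m}_{i}$ with the swaps $s_{j}$ implementing activation of the $j$-th head, read acceptance off the alternating paths of the execution against the cyclic word representation, and conversely simulate an inhabitant of the type by an automaton whose configurations pair a vertex with a tuple of head positions. Your accounting of where the definition of $\microcosm{m}_{i}$ enters (translations for head movement, at most $i$ coordinates of the Hilbert cube addressable, hence at most $i$ heads) is the right one.

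That said, two points you defer as ``routine but delicate'' are in fact the substance of the proof and remain genuinely open in your sketch. First, in the soundness direction you assert that an inhabitant $P$ of $\oc\ListTypeBin\multimap\cond{Bool}$ ``may be taken with finitely many vertices.'' Nothing in the definitions forces this: a graphing is a \emph{countable} family of weighted edges, and the type is carved out by an orthogonality condition, not a finiteness condition. Without either an explicit restriction to finite graphings in the statement or a normal-form lemma reducing to that case, the simulating automaton has no finite state set and the inclusion $\mathcal{P}_{i}\subseteq\text{\cctwdfa{i}}$ does not follow; the same issue undermines your treatment of $i=\infty$, where ``any fixed program uses finitely many of the swap maps'' again presupposes finitely many edges. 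Second, in the completeness direction, membership of $P_{M}$ in $\oc\ListTypeBin\multimap\cond{Bool}$ requires correct interaction with \emph{every} element of $\oc\ListTypeBin$ --- a type closed under bi-orthogonality, hence in general strictly larger than the set of representations $\oc L_{\word{w}}$ --- whereas your ``uniformity argument over all representations of words'' only addresses the latter. Both points are presumably settled in \cite{seiller-goic}, but nothing in your proposal closes them, so as written it is a correct plan of attack rather than a proof.
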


In particular, the model $\dmodel{\{1\}}{m_{\textnormal{1}}}$ characterises the class \Regular of Regular languages and the model $\dmodel{\{1\}}{m_{\infty}}$ characterises the class \Logspace.

\subsection{Examples}

\paragraph{Examples of words}
We here work two examples to illustrate how the representation of computation by graphings works out. First, we give the representation as graphings of the lists $\star 0$ (\autoref{fig0}), $\star 11$ (\autoref{fig11}) and $\star 01$ (\autoref{fig01}). In the illustrations, the vertices, e.g. '0i', '0o', represent disjoint segments of unit length, e.g. $[0,1]$, $[1,2]$. As mentioned in the caption, the plain edges are realised as translations. Notice that those edges are cut into pieces (2 pieces each for the list $\star 0$, and three pieces each for the others). This is due to the fact that these graphings represent exponentials of the word representation: the exponential splits the unit segment into as many pieces as the length of the list; intuitively each piece of the unit segments correspond to the \enquote{address} of each bit. The edges then explicit the ordering: each symbol comes with has pointers to its preceding and succeeding symbols. E.g. in the case of the first '1' in the list $\star 11$, there is an edge from '1i' to '$\star$o' representing the fact that the preceding symbol was '$\star$', and there is an edge from '1o' to '1i' representing the fact that the next symbol is a '1'; notice moreover that these edges move between the corresponding addresses.

\paragraph{A first machine}
As a first example, let us represent an automata that accepts a word $\word{w}$ if and only if $\word{w}$ contains at least one symbol '1'. This example was chosen because of the simplicity of the representation. Indeed, one can represent it by a graphing with a single state and which uses a single head (hence it is a graphing in the microcosm $\microcosm{m}_{1}$). Its transition relation is then defined as follows: if the symbol read is a '1' then stop, if it is a '0', move the head to read the next symbol, if it is a '$\star$' then reject. The representation as a graphing is shown in \autoref{fig1machine}. Notice that the symbol '$\star$' plays both the role of left and right end-markers. 

We then represent the computation of this graphing with the representations of the words $\star 0$ (\autoref{fig0}), $\star 11$ (\autoref{fig11}) and $\star 01$ (\autoref{fig01}) in \autoref{fig1machinecomp}. The computation is illustrated by showing the two graphings (the machine and the integer), one on each side of the set of vertices. Notice that in those figures the graphing representing the machine has been replaced by one of its \emph{refinements} to reflect the splitting of the unit intervals appearing in the representation of the words. The result of the computation is the set of alternating paths from the set of vertices $\{\text{accept},\text{reject}\}$ to itself: in each case there is at most one, which is stressed by drawing the edges it is composed in boldface. The machine accepts the input if and only if there is a single path from "accept" to "accept". One can see in the figure that this machine accepts the words $\star 11$ and $\star 01$ but not the word $\star 0$.

Notice how the computation can be understood as a game between two players. We illustrate this on the computation on the word $\star 01$. First the machine, through its edge from "accept" to '$\star$o', asks the input \enquote{What's your first symbol?}. The integer, through its edge from '$\star$o' to '0i', answers \enquote{It's a '0'.}. Then the machine asks \enquote{What's your next symbol?} (the edge from '0i' to '0o'), to which the integer replies \enquote{It's a '1'.} (the edge from '0o' to '1i'). At this point, the machine accepts (the edge from '1i' to "accept").

\paragraph{A second machine} We chose as a second example a more complex machine. The language it accepts is actually a regular language (the words that contain at least one symbol '1' and one symbol '0'), i.e. can be computed by a single-head automata. However, we chose to compute this language by first looking for a symbol '1' in the word and, if this part is successful, activate a second head looking for a symbol '0', i.e. we compute it with a graphing in the microcosm $\mathfrak{m}_{2}$ even if as a regular language it can be computed by graphings in $\microcosm{m}_{1}$. This automata has two states corresponding to the two parts in the algorithmic procedure: a state \enquote{looking for a '1'} and a state \enquote{looking for a '0'}; we write these states "L1" and "L0" respectively. The graphing representing this automata is represented in \autoref{fig10machine}, where there are two sets of vertices one above the other: the row below shows the vertices in state L1 while the row on top shows the vertices in state L2. Notice the two dashed lines which are realised using a permutation and correspond to the change of principal head\footnote{The machine represented by graphings have a single active head at a given time. The permutations $s_{i}$ ($i\leqslant 2$) swap the $i$-th head with the first one, making the $i$-th active. Reusing the permutation $s_{i}$ then activates the former active head and \enquote{deactivates} the $i$-th head.}.

We represented the computation of this machine with the same inputs $\star 0$, $\star 11$, and $\star 01$ in \autoref{fig10machinecomp}. Notice that to deal with states, the inputs are duplicated. Once again, acceptance corresponds with the existence of a path from the "accept" vertex to itself (without changing states). The first computation does not go far: the first head moves along the word and then stops as it did not encounter a symbol '1' (this path is shown by thick edges). 

The second computation is more interesting. First, it follows the computation of the previous machine on the same input: it asks for the first symbol, acknowledge that it is a '1', then activates the second head and changes state. At this point the path continues as the second head moves along the input: this seems to contradict the picture since we our path seems to arrive on the middle splitting of the vertex '$\star$o' which would not allow us to continue through the input edge whose source is the left-hand splitting of this same vertex. However, this is forgetting that we changed the active head. Indeed, what happens in fact is that the permutation moves the splitting along a new direction: while the same transition realised as a simple translation would forbid us to continue the computation, realising it by the permutation $s_{2}$ allows the computation to continue. We illustrate in \autoref{illustratepermutations} how the permutation actually interact with the splitting, in the case the interval is split into two pieces. The representation of vertices as unit intervals is no longer sound as we are actually using the two first copies of $[0,1]$ in the Hilbert cube $[0,1]^{\naturalN}$ to represent this computation, hence working with unit squares. In this case, however, the second head moves along the input without encountering a symbol '0' and then stops. This path is shown by thick edges in the figure.

The last computation accepts the input. As in the previous case, the first head moves along the input, encounters a symbol '1', changes its state and activates the second head. As in the previous case, the computation can continue at this point, and the second head encounters a '0' as the first symbol of the input. Then the machine accepts. This path is shown in the figure by thick edges.

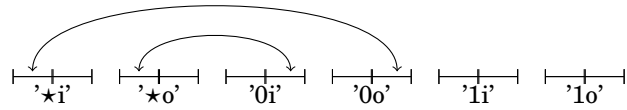
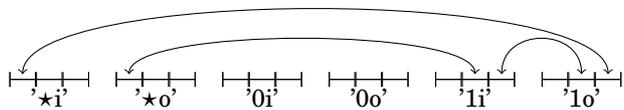
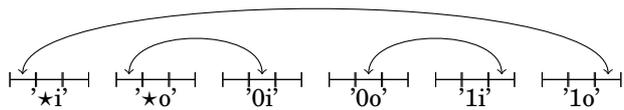
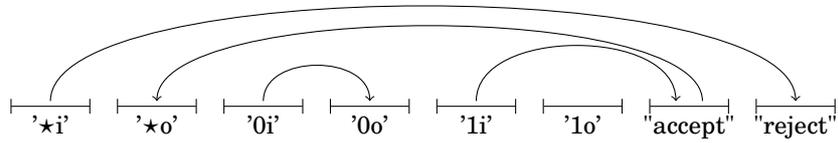
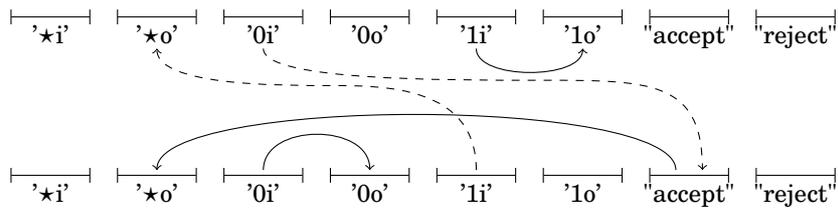
\begin{figure}
\centering
\subfloat[The graphing representing the word $\star 0$.\label{fig0}]{
\begin{tikzpicture}[x=0.7cm,y=0.7cm]
	\draw[|-|] (0,0) -- (1.5,0) node [midway,below] {'$\star$i'};
		\draw[|-|] (0,0) -- (0.75,0) {};
		\draw[|-|] (0.75,0) -- (1.5,0) {};
	\draw[|-|] (2,0) -- (3.5,0) node [midway,below] {'$\star$o'};
		\draw[|-|] (2,0) -- (2.75,0) {};
		\draw[|-|] (2.75,0) -- (3.5,0) {};
	\draw[|-|] (4,0) -- (5.5,0) node [midway,below] {'0i'};
		\draw[|-|] (4,0) -- (4.75,0) {};
		\draw[|-|] (4.75,0) -- (5.5,0) {};
	\draw[|-|] (6,0) -- (7.5,0) node [midway,below] {'0o'};
		\draw[|-|] (6,0) -- (6.75,0) {};
		\draw[|-|] (6.75,0) -- (7.5,0) {};
	\draw[|-|] (8,0) -- (9.5,0) node [midway,below] {'1i'};
		\draw[|-|] (8,0) -- (8.75,0) {};
		\draw[|-|] (8.75,0) -- (9.5,0) {};
	\draw[|-|] (10,0) -- (11.5,0) node [midway,below] {'1o'};
		\draw[|-|] (10,0) -- (10.75,0) {};
		\draw[|-|] (10.75,0) -- (11.5,0) {};
		
	\draw[<->] (2.375,0.1) .. controls (2.375,1) and (5.225,1) .. (5.225,0.1) {};
	\draw[<->] (7.225,0.1) .. controls (7.225,1.75) and (0.375,1.75) .. (0.375,0.1) {};
\end{tikzpicture}
}

\subfloat[The graphing representing the word $\star 11$.\label{fig11}]{\begin{tikzpicture}[x=0.7cm,y=0.7cm]
	\draw[|-|] (0,0) -- (1.5,0) node [midway,below] {'$\star$i'};
		\draw[|-|] (0,0) -- (0.5,0) {};
		\draw[|-|] (0.5,0) -- (1,0) {};
		\draw[|-|] (1,0) -- (1.5,0) {};
	\draw[|-|] (2,0) -- (3.5,0) node [midway,below] {'$\star$o'};
		\draw[|-|] (2,0) -- (2.5,0) {};
		\draw[|-|] (2.5,0) -- (3,0) {};
		\draw[|-|] (3,0) -- (3.5,0) {};
	\draw[|-|] (4,0) -- (5.5,0) node [midway,below] {'0i'};
		\draw[|-|] (4,0) -- (4.5,0) {};
		\draw[|-|] (4.5,0) -- (5,0) {};
		\draw[|-|] (5,0) -- (5.5,0) {};
	\draw[|-|] (6,0) -- (7.5,0) node [midway,below] {'0o'};
		\draw[|-|] (6,0) -- (6.5,0) {};
		\draw[|-|] (6.5,0) -- (7,0) {};
		\draw[|-|] (7,0) -- (7.5,0) {};
	\draw[|-|] (8,0) -- (9.5,0) node [midway,below] {'1i'};
		\draw[|-|] (8,0) -- (8.5,0) {};
		\draw[|-|] (8.5,0) -- (9,0) {};
		\draw[|-|] (9,0) -- (9.5,0) {};
	\draw[|-|] (10,0) -- (11.5,0) node [midway,below] {'1o'};
		\draw[|-|] (10,0) -- (10.5,0) {};
		\draw[|-|] (10.5,0) -- (11,0) {};
		\draw[|-|] (11,0) -- (11.5,0) {};
		
	\draw[<->] (2.25,0.1) .. controls (2.25,1) and (8.75,1) .. (8.75,0.1) {};
	\draw[<->] (10.75,0.1) .. controls (10.75,1) and (9.25,1) .. (9.25,0.1) {};
	\draw[<->] (0.25,0.1) .. controls (0.25,1.75) and (11.25,1.75) .. (11.25,0.1) {};
\end{tikzpicture}
}

\subfloat[The graphing representing the word $\star 01$.\label{fig01}]{\begin{tikzpicture}[x=0.7cm,y=0.7cm]
	\draw[|-|] (0,0) -- (1.5,0) node [midway,below] {'$\star$i'};
		\draw[|-|] (0,0) -- (0.5,0) {};
		\draw[|-|] (0.5,0) -- (1,0) {};
		\draw[|-|] (1,0) -- (1.5,0) {};
	\draw[|-|] (2,0) -- (3.5,0) node [midway,below] {'$\star$o'};
		\draw[|-|] (2,0) -- (2.5,0) {};
		\draw[|-|] (2.5,0) -- (3,0) {};
		\draw[|-|] (3,0) -- (3.5,0) {};
	\draw[|-|] (4,0) -- (5.5,0) node [midway,below] {'0i'};
		\draw[|-|] (4,0) -- (4.5,0) {};
		\draw[|-|] (4.5,0) -- (5,0) {};
		\draw[|-|] (5,0) -- (5.5,0) {};
	\draw[|-|] (6,0) -- (7.5,0) node [midway,below] {'0o'};
		\draw[|-|] (6,0) -- (6.5,0) {};
		\draw[|-|] (6.5,0) -- (7,0) {};
		\draw[|-|] (7,0) -- (7.5,0) {};
	\draw[|-|] (8,0) -- (9.5,0) node [midway,below] {'1i'};
		\draw[|-|] (8,0) -- (8.5,0) {};
		\draw[|-|] (8.5,0) -- (9,0) {};
		\draw[|-|] (9,0) -- (9.5,0) {};
	\draw[|-|] (10,0) -- (11.5,0) node [midway,below] {'1o'};
		\draw[|-|] (10,0) -- (10.5,0) {};
		\draw[|-|] (10.5,0) -- (11,0) {};
		\draw[|-|] (11,0) -- (11.5,0) {};
		
	\draw[<->] (2.25,0.1) .. controls (2.25,1) and (4.75,1) .. (4.75,0.1) {};
	\draw[<->] (6.75,0.1) .. controls (6.75,1) and (9.25,1) .. (9.25,0.1) {};
	\draw[<->] (0.25,0.1) .. controls (0.25,1.75) and (11.25,1.75) .. (11.25,0.1) {};
\end{tikzpicture}
}

\subfloat[The graphing representing the '1'-machine.\label{fig1machine}]{
\begin{tikzpicture}[x=0.7cm,y=0.7cm]
	\draw[|-|] (0,0) -- (1.5,0) node [midway,below] {'$\star$i'};
	\draw[|-|] (2,0) -- (3.5,0) node [midway,below] {'$\star$o'};
	\draw[|-|] (4,0) -- (5.5,0) node [midway,below] {'0i'};
	\draw[|-|] (6,0) -- (7.5,0) node [midway,below] {'0o'};
	\draw[|-|] (8,0) -- (9.5,0) node [midway,below] {'1i'};
	\draw[|-|] (10,0) -- (11.5,0) node [midway,below] {'1o'};
	\draw[|-|] (12,0) -- (13.5,0) node [midway,below] {"accept"};
	\draw[|-|] (14,0) -- (15.5,0) node [midway,below] {"reject"};
	
	\draw[->] (13,0.1) .. controls (13,2) and (2.75,2) .. (2.75,0.1) {};
	\draw[->] (0.75,0.1) .. controls (0.75,2.5) and (14.75,2.5) .. (14.75,0.1) {};
	\draw[->] (4.75,0.1) .. controls (4.75,1) and (6.75,1) .. (6.75,0.1) {};
	\draw[->] (8.75,0.1) .. controls (8.75,1.5) and (12.5,1.5) .. (12.5,0.1) {};
\end{tikzpicture}
}

\subfloat[The graphing representing the '1'-machine.\label{fig10machine}]{
\begin{tikzpicture}[x=0.7cm,y=0.7cm]
	\draw[|-|] (0,0) -- (1.5,0) node [midway,below] {'$\star$i'};
	\draw[|-|] (2,0) -- (3.5,0) node [midway,below] {'$\star$o'};
	\draw[|-|] (4,0) -- (5.5,0) node [midway,below] {'0i'};
	\draw[|-|] (6,0) -- (7.5,0) node [midway,below] {'0o'};
	\draw[|-|] (8,0) -- (9.5,0) node [midway,below] {'1i'};
	\draw[|-|] (10,0) -- (11.5,0) node [midway,below] {'1o'};
	\draw[|-|] (12,0) -- (13.5,0) node [midway,below] {"accept"};
	\draw[|-|] (14,0) -- (15.5,0) node [midway,below] {"reject"};
	
	\draw[|-|] (0,3) -- (1.5,3) node [midway,below] {'$\star$i'};
	\draw[|-|] (2,3) -- (3.5,3) node [midway,below] {'$\star$o'};
	\draw[|-|] (4,3) -- (5.5,3) node [midway,below] {'0i'};
	\draw[|-|] (6,3) -- (7.5,3) node [midway,below] {'0o'};
	\draw[|-|] (8,3) -- (9.5,3) node [midway,below] {'1i'};
	\draw[|-|] (10,3) -- (11.5,3) node [midway,below] {'1o'};
	\draw[|-|] (12,3) -- (13.5,3) node [midway,below] {"accept"};
	\draw[|-|] (14,3) -- (15.5,3) node [midway,below] {"reject"};
	
	\draw[->] (12.5,0.1) .. controls (12.5,1.5) and (2.75,1.5) .. (2.75,0.1) {};
	\draw[->] (4.75,0.1) .. controls (4.75,1) and (6.75,1) .. (6.75,0.1) {};
	\draw[->,dashed] (8.75,0.1) .. controls (8.75,1.7) and (7,1.7) .. (6,1.7) .. controls (5,1.7) and (2.75,1.7) .. (2.75,2.4) {};
	
	\draw[->,dashed] (4.75,2.4) .. controls (4.75,1.8) and (8,1.8) .. (10,1.8) .. controls (12,1.8) and (13,1.8) .. (13,0.1) {};
	\draw[->] (8.75,2.4) .. controls (8.75,1.8) and (10.75,1.8) .. (10.75,2.4) {};

\end{tikzpicture}
}
\caption{Examples: the plain lines are realised by simple translations, while the dashed lines are realised by a composition of a translation and the map $s_{2}$.}
\end{figure}

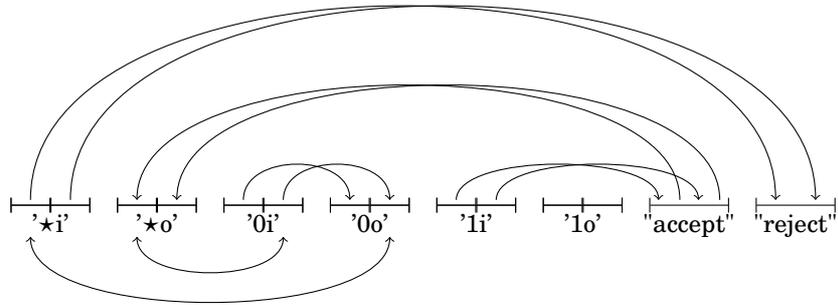
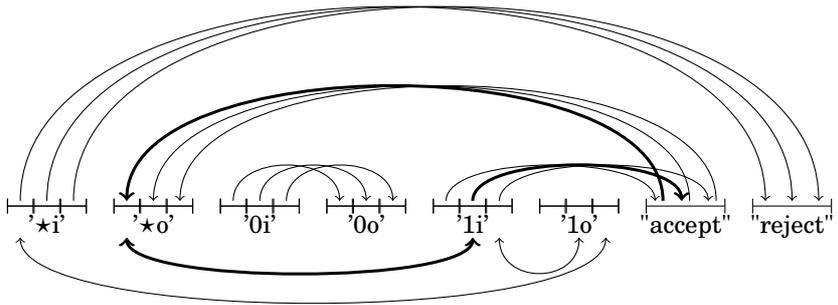
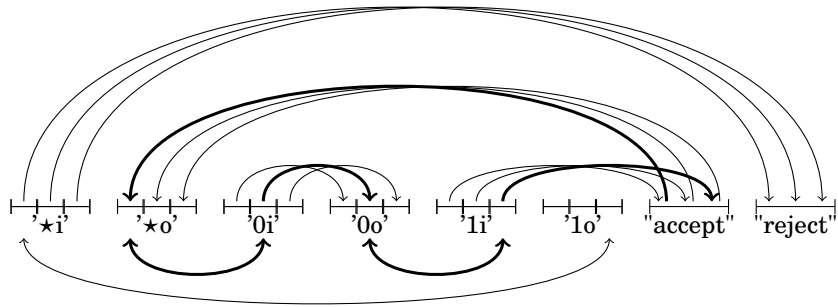
\begin{figure}
\centering
\subfloat[Computing with input $\star 0$.\label{fig1machine0}]{
\begin{tikzpicture}[x=0.7cm,y=0.7cm]
	\draw[|-|] (0,0) -- (1.5,0) node [midway,below] {'$\star$i'};
		\draw[|-|] (0,0) -- (0.75,0) {};
		\draw[|-|] (0.75,0) -- (1.5,0) {};
	\draw[|-|] (2,0) -- (3.5,0) node [midway,below] {'$\star$o'};
		\draw[|-|] (2,0) -- (2.75,0) {};
		\draw[|-|] (2.75,0) -- (3.5,0) {};
	\draw[|-|] (4,0) -- (5.5,0) node [midway,below] {'0i'};
		\draw[|-|] (4,0) -- (4.75,0) {};
		\draw[|-|] (4.75,0) -- (5.5,0) {};
	\draw[|-|] (6,0) -- (7.5,0) node [midway,below] {'0o'};
		\draw[|-|] (6,0) -- (6.75,0) {};
		\draw[|-|] (6.75,0) -- (7.5,0) {};
	\draw[|-|] (8,0) -- (9.5,0) node [midway,below] {'1i'};
		\draw[|-|] (8,0) -- (8.75,0) {};
		\draw[|-|] (8.75,0) -- (9.5,0) {};
	\draw[|-|] (10,0) -- (11.5,0) node [midway,below] {'1o'};
		\draw[|-|] (10,0) -- (10.75,0) {};
		\draw[|-|] (10.75,0) -- (11.5,0) {};
		
	\draw[<->] (2.375,-0.6) .. controls (2.375,-1.5) and (5.125,-1.5) .. (5.125,-0.6) {};
	\draw[<->] (7.125,-0.6) .. controls (7.125,-2.25) and (0.375,-2.25) .. (0.375,-0.6) {};
	
	\draw[|-|] (12,0) -- (13.5,0) node [midway,below] {"accept"};
	\draw[|-|] (14,0) -- (15.5,0) node [midway,below] {"reject"};
	
	\draw[->] (12.575,0.1) .. controls (12.575,3) and (2.375,3) .. (2.375,0.1) {};
	\draw[->] (13.325,0.1) .. controls (13.325,3) and (3.125,3) .. (3.125,0.1) {};

	\draw[->] (0.375,0.1) .. controls (0.375,5) and (14.375,5) .. (14.375,0.1) {};
	\draw[->] (1.125,0.1) .. controls (1.125,5) and (15.125,5) .. (15.125,0.1) {};

	\draw[->] (4.375,0.1) .. controls (4.375,1) and (6.375,1) .. (6.375,0.1) {};
	\draw[->] (5.125,0.1) .. controls (5.125,1) and (7.125,1) .. (7.125,0.1) {};

	\draw[->] (8.375,0.1) .. controls (8.375,1) and (12.175,1) .. (12.175,0.1) {};
	\draw[->] (9.125,0.1) .. controls (9.125,1) and (12.925,1) .. (12.925,0.1) {};

\end{tikzpicture}
}

\subfloat[Computing with input $\star 11$.\label{fig1machine11}]{\begin{tikzpicture}[x=0.7cm,y=0.7cm]
	\draw[|-|] (0,0) -- (1.5,0) node [midway,below] {'$\star$i'};
		\draw[|-|] (0,0) -- (0.5,0) {};
		\draw[|-|] (0.5,0) -- (1,0) {};
		\draw[|-|] (1,0) -- (1.5,0) {};
	\draw[|-|] (2,0) -- (3.5,0) node [midway,below] {'$\star$o'};
		\draw[|-|] (2,0) -- (2.5,0) {};
		\draw[|-|] (2.5,0) -- (3,0) {};
		\draw[|-|] (3,0) -- (3.5,0) {};
	\draw[|-|] (4,0) -- (5.5,0) node [midway,below] {'0i'};
		\draw[|-|] (4,0) -- (4.5,0) {};
		\draw[|-|] (4.5,0) -- (5,0) {};
		\draw[|-|] (5,0) -- (5.5,0) {};
	\draw[|-|] (6,0) -- (7.5,0) node [midway,below] {'0o'};
		\draw[|-|] (6,0) -- (6.5,0) {};
		\draw[|-|] (6.5,0) -- (7,0) {};
		\draw[|-|] (7,0) -- (7.5,0) {};
	\draw[|-|] (8,0) -- (9.5,0) node [midway,below] {'1i'};
		\draw[|-|] (8,0) -- (8.5,0) {};
		\draw[|-|] (8.5,0) -- (9,0) {};
		\draw[|-|] (9,0) -- (9.5,0) {};
	\draw[|-|] (10,0) -- (11.5,0) node [midway,below] {'1o'};
		\draw[|-|] (10,0) -- (10.5,0) {};
		\draw[|-|] (10.5,0) -- (11,0) {};
		\draw[|-|] (11,0) -- (11.5,0) {};
		
	\draw[<->,line width=0.4mm] (2.25,-0.6) .. controls (2.25,-1.5) and (8.75,-1.5) .. (8.75,-0.6) {};
	\draw[<->] (10.75,-0.6) .. controls (10.75,-1.5) and (9.25,-1.5) .. (9.25,-0.6) {};
	\draw[<->] (0.25,-0.6) .. controls (0.25,-2.25) and (11.25,-2.25) .. (11.25,-0.6) {};
	
	\draw[|-|] (12,0) -- (13.5,0) node [midway,below] {"accept"};
	\draw[|-|] (14,0) -- (15.5,0) node [midway,below] {"reject"};
	
	\draw[->,line width=0.4mm] (12.325,0.1) .. controls (12.325,3) and (2.25,3) .. (2.25,0.1) {};
	\draw[->] (12.825,0.1) .. controls (12.825,3) and (2.75,3) .. (2.75,0.1) {};
	\draw[->] (13.325,0.1) .. controls (13.325,3) and (3.25,3) .. (3.25,0.1) {};

	\draw[->] (0.25,0.1) .. controls (0.25,5) and (14.25,5) .. (14.25,0.1) {};
	\draw[->] (0.75,0.1) .. controls (0.75,5) and (14.75,5) .. (14.75,0.1) {};
	\draw[->] (1.25,0.1) .. controls (1.25,5) and (15.25,5) .. (15.25,0.1) {};
	
	\draw[->] (4.25,0.1) .. controls (4.25,1) and (6.25,1) .. (6.25,0.1) {};
	\draw[->] (4.75,0.1) .. controls (4.75,1) and (6.75,1) .. (6.75,0.1) {};
	\draw[->] (5.25,0.1) .. controls (5.25,1) and (7.25,1) .. (7.25,0.1) {};
	
	\draw[->] (8.25,0.1) .. controls (8.25,1) and (12.175,1) .. (12.175,0.1) {};
	\draw[->,line width=0.4mm] (8.75,0.1) .. controls (8.75,1) and (12.675,1) .. (12.675,0.1) {};
	\draw[->] (9.25,0.1) .. controls (9.25,1) and (13.175,1) .. (13.175,0.1) {};
\end{tikzpicture}
}

\subfloat[Computing with input $\star 01$.\label{fig1machine01}]{
\begin{tikzpicture}[x=0.7cm,y=0.7cm]
	\draw[|-|] (0,0) -- (1.5,0) node [midway,below] {'$\star$i'};
		\draw[|-|] (0,0) -- (0.5,0) {};
		\draw[|-|] (0.5,0) -- (1,0) {};
		\draw[|-|] (1,0) -- (1.5,0) {};
	\draw[|-|] (2,0) -- (3.5,0) node [midway,below] {'$\star$o'};
		\draw[|-|] (2,0) -- (2.5,0) {};
		\draw[|-|] (2.5,0) -- (3,0) {};
		\draw[|-|] (3,0) -- (3.5,0) {};
	\draw[|-|] (4,0) -- (5.5,0) node [midway,below] {'0i'};
		\draw[|-|] (4,0) -- (4.5,0) {};
		\draw[|-|] (4.5,0) -- (5,0) {};
		\draw[|-|] (5,0) -- (5.5,0) {};
	\draw[|-|] (6,0) -- (7.5,0) node [midway,below] {'0o'};
		\draw[|-|] (6,0) -- (6.5,0) {};
		\draw[|-|] (6.5,0) -- (7,0) {};
		\draw[|-|] (7,0) -- (7.5,0) {};
	\draw[|-|] (8,0) -- (9.5,0) node [midway,below] {'1i'};
		\draw[|-|] (8,0) -- (8.5,0) {};
		\draw[|-|] (8.5,0) -- (9,0) {};
		\draw[|-|] (9,0) -- (9.5,0) {};
	\draw[|-|] (10,0) -- (11.5,0) node [midway,below] {'1o'};
		\draw[|-|] (10,0) -- (10.5,0) {};
		\draw[|-|] (10.5,0) -- (11,0) {};
		\draw[|-|] (11,0) -- (11.5,0) {};
	\draw[|-|] (12,0) -- (13.5,0) node [midway,below] {"accept"};
	\draw[|-|] (14,0) -- (15.5,0) node [midway,below] {"reject"};
	
	\draw[->,line width=0.4mm] (12.325,0.1) .. controls (12.325,3) and (2.25,3) .. (2.25,0.1) {};
	\draw[->] (12.825,0.1) .. controls (12.825,3) and (2.75,3) .. (2.75,0.1) {};
	\draw[->] (13.325,0.1) .. controls (13.325,3) and (3.25,3) .. (3.25,0.1) {};

	\draw[->] (0.25,0.1) .. controls (0.25,5) and (14.25,5) .. (14.25,0.1) {};
	\draw[->] (0.75,0.1) .. controls (0.75,5) and (14.75,5) .. (14.75,0.1) {};
	\draw[->] (1.25,0.1) .. controls (1.25,5) and (15.25,5) .. (15.25,0.1) {};
	
	\draw[->] (4.25,0.1) .. controls (4.25,1) and (6.25,1) .. (6.25,0.1) {};
	\draw[->,line width=0.4mm] (4.75,0.1) .. controls (4.75,1) and (6.75,1) .. (6.75,0.1) {};
	\draw[->] (5.25,0.1) .. controls (5.25,1) and (7.25,1) .. (7.25,0.1) {};
	
	\draw[->] (8.25,0.1) .. controls (8.25,1) and (12.175,1) .. (12.175,0.1) {};
	\draw[->] (8.75,0.1) .. controls (8.75,1) and (12.675,1) .. (12.675,0.1) {};
	\draw[->,line width=0.4mm] (9.25,0.1) .. controls (9.25,1) and (13.175,1) .. (13.175,0.1) {};

	\draw[<->,line width=0.4mm] (2.25,-0.6) .. controls (2.25,-1.5) and (4.75,-1.5) .. (4.75,-0.6) {};
	\draw[<->,line width=0.4mm] (6.75,-0.6) .. controls (6.75,-1.5) and (9.25,-1.5) .. (9.25,-0.6) {};
	\draw[<->] (0.25,-0.6) .. controls (0.25,-2.25) and (11.25,-2.25) .. (11.25,-0.6) {};
\end{tikzpicture}
}
\caption{Example: computing with the first machine.}\label{fig1machinecomp}
\end{figure}

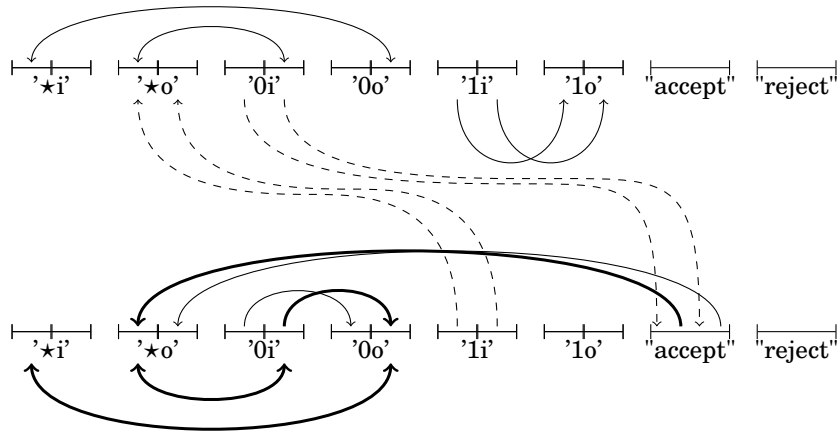
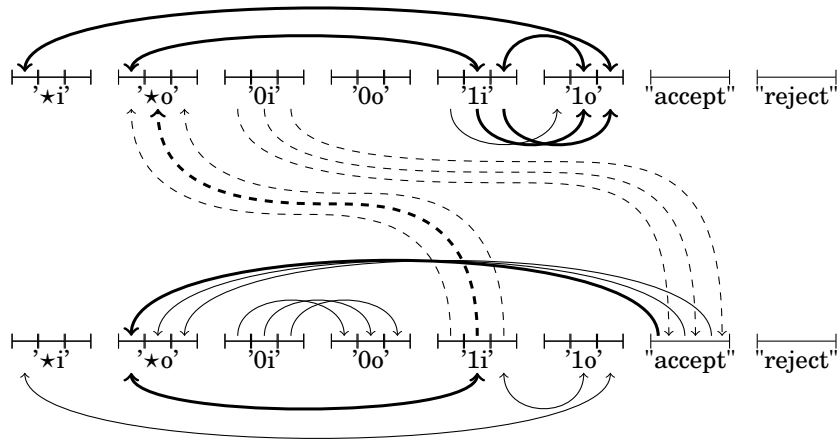
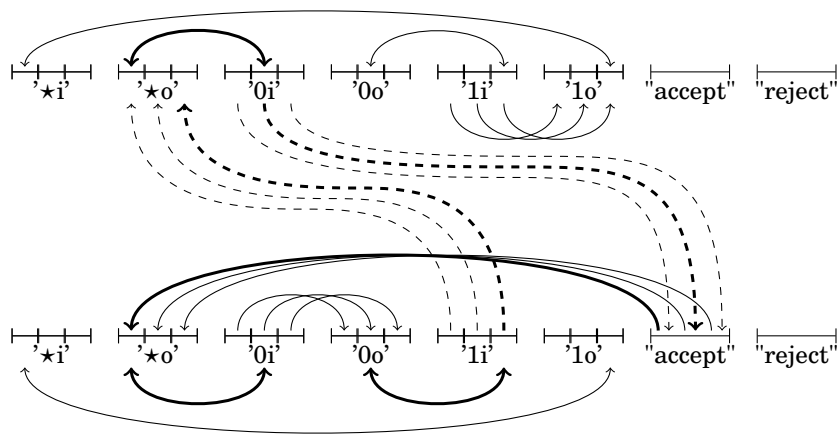
\begin{figure}
\centering
\subfloat[Computing with input $\star 0$.\label{fig10machine0}]{
\begin{tikzpicture}[x=0.7cm,y=0.7cm]
	\draw[|-|] (0,0) -- (1.5,0) node [midway,below] {'$\star$i'};
		\draw[|-|] (0,0) -- (0.75,0) {};
		\draw[|-|] (0.75,0) -- (1.5,0) {};
	\draw[|-|] (2,0) -- (3.5,0) node [midway,below] {'$\star$o'};
		\draw[|-|] (2,0) -- (2.75,0) {};
		\draw[|-|] (2.75,0) -- (3.5,0) {};
	\draw[|-|] (4,0) -- (5.5,0) node [midway,below] {'0i'};
		\draw[|-|] (4,0) -- (4.75,0) {};
		\draw[|-|] (4.75,0) -- (5.5,0) {};
	\draw[|-|] (6,0) -- (7.5,0) node [midway,below] {'0o'};
		\draw[|-|] (6,0) -- (6.75,0) {};
		\draw[|-|] (6.75,0) -- (7.5,0) {};
	\draw[|-|] (8,0) -- (9.5,0) node [midway,below] {'1i'};
		\draw[|-|] (8,0) -- (8.75,0) {};
		\draw[|-|] (8.75,0) -- (9.5,0) {};
	\draw[|-|] (10,0) -- (11.5,0) node [midway,below] {'1o'};
		\draw[|-|] (10,0) -- (10.75,0) {};
		\draw[|-|] (10.75,0) -- (11.5,0) {};
		
	\draw[<->,line width=0.4mm] (2.375,-0.6) .. controls (2.375,-1.5) and (5.125,-1.5) .. (5.125,-0.6) {};
	\draw[<->,line width=0.4mm] (7.125,-0.6) .. controls (7.125,-2.25) and (0.375,-2.25) .. (0.375,-0.6) {};
	
	\draw[|-|] (0,5) -- (1.5,5) node [midway,below] {'$\star$i'};
		\draw[|-|] (0,5) -- (0.75,5) {};
		\draw[|-|] (0.75,5) -- (1.5,5) {};
	\draw[|-|] (2,5) -- (3.5,5) node [midway,below] {'$\star$o'};
		\draw[|-|] (2,5) -- (2.75,5) {};
		\draw[|-|] (2.75,5) -- (3.5,5) {};
	\draw[|-|] (4,5) -- (5.5,5) node [midway,below] {'0i'};
		\draw[|-|] (4,5) -- (4.75,5) {};
		\draw[|-|] (4.75,5) -- (5.5,5) {};
	\draw[|-|] (6,5) -- (7.5,5) node [midway,below] {'0o'};
		\draw[|-|] (6,5) -- (6.75,5) {};
		\draw[|-|] (6.75,5) -- (7.5,5) {};
	\draw[|-|] (8,5) -- (9.5,5) node [midway,below] {'1i'};
		\draw[|-|] (8,5) -- (8.75,5) {};
		\draw[|-|] (8.75,5) -- (9.5,5) {};
	\draw[|-|] (10,5) -- (11.5,5) node [midway,below] {'1o'};
		\draw[|-|] (10,5) -- (10.75,5) {};
		\draw[|-|] (10.75,5) -- (11.5,5) {};
		
	\draw[<->] (2.375,5.1) .. controls (2.375,6) and (5.125,6) .. (5.125,5.1) {};
	\draw[<->] (7.125,5.1) .. controls (7.125,6.5) and (0.375,6.5) .. (0.375,5.1) {};
	
	\draw[|-|] (12,0) -- (13.5,0) node [midway,below] {"accept"};
	\draw[|-|] (14,0) -- (15.5,0) node [midway,below] {"reject"};
	
	\draw[|-|] (12,5) -- (13.5,5) node [midway,below] {"accept"};
	\draw[|-|] (14,5) -- (15.5,5) node [midway,below] {"reject"};

	\draw[->,line width=0.4mm] (12.575,0.1) .. controls (12.575,2) and (2.375,2) .. (2.375,0.1) {};
	\draw[->] (13.325,0.1) .. controls (13.325,2) and (3.125,2) .. (3.125,0.1) {};

	\draw[->] (4.375,0.1) .. controls (4.375,1) and (6.375,1) .. (6.375,0.1) {};
	\draw[->,line width=0.4mm] (5.125,0.1) .. controls (5.125,1) and (7.125,1) .. (7.125,0.1) {};
	
	\draw[->,dashed] (8.375,0.1) .. controls (8.375,2.6) and (6.625,2.6) .. (5.625,2.6) .. controls (4.625,2.6) and (2.375,2.6) .. (2.375,4.4) {};
	\draw[->,dashed] (9.125,0.1) .. controls (9.125,2.7) and (7.375,2.7) .. (6.375,2.7) .. controls (5.375,2.7) and (3.125,2.7) .. (3.125,4.4) {};
	
	\draw[->,dashed] (4.375,4.4) .. controls (4.375,2.8) and (7.625,2.8) .. (9.625,2.8) .. controls (11.625,2.8) and (12.125,2.8) .. (12.125,0.1) {};
	\draw[->,dashed] (5.125,4.4) .. controls (5.125,2.8) and (8.375,2.9) .. (10.375,2.9) .. controls (12.375,2.9) and (12.925,2.9) .. (12.925,0.1) {};	
	
	\draw[->] (8.375,4.4) .. controls (8.375,2.8) and (10.375,2.8) .. (10.375,4.4) {};
	\draw[->] (9.125,4.4) .. controls (9.125,2.8) and (11.125,2.8) .. (11.125,4.4) {};
\end{tikzpicture}
}

\subfloat[Computing with input $\star 11$.\label{fig10machine11}]{
\begin{tikzpicture}[x=0.7cm,y=0.7cm]
	\draw[|-|] (0,0) -- (1.5,0) node [midway,below] {'$\star$i'};
		\draw[|-|] (0,0) -- (0.5,0) {};
		\draw[|-|] (0.5,0) -- (1,0) {};
		\draw[|-|] (1,0) -- (1.5,0) {};
	\draw[|-|] (2,0) -- (3.5,0) node [midway,below] {'$\star$o'};
		\draw[|-|] (2,0) -- (2.5,0) {};
		\draw[|-|] (2.5,0) -- (3,0) {};
		\draw[|-|] (3,0) -- (3.5,0) {};
	\draw[|-|] (4,0) -- (5.5,0) node [midway,below] {'0i'};
		\draw[|-|] (4,0) -- (4.5,0) {};
		\draw[|-|] (4.5,0) -- (5,0) {};
		\draw[|-|] (5,0) -- (5.5,0) {};
	\draw[|-|] (6,0) -- (7.5,0) node [midway,below] {'0o'};
		\draw[|-|] (6,0) -- (6.5,0) {};
		\draw[|-|] (6.5,0) -- (7,0) {};
		\draw[|-|] (7,0) -- (7.5,0) {};
	\draw[|-|] (8,0) -- (9.5,0) node [midway,below] {'1i'};
		\draw[|-|] (8,0) -- (8.5,0) {};
		\draw[|-|] (8.5,0) -- (9,0) {};
		\draw[|-|] (9,0) -- (9.5,0) {};
	\draw[|-|] (10,0) -- (11.5,0) node [midway,below] {'1o'};
		\draw[|-|] (10,0) -- (10.5,0) {};
		\draw[|-|] (10.5,0) -- (11,0) {};
		\draw[|-|] (11,0) -- (11.5,0) {};
		
	\draw[<->,line width=0.4mm] (2.25,-0.6) .. controls (2.25,-1.5) and (8.75,-1.5) .. (8.75,-0.6) {};
	\draw[<->] (10.75,-0.6) .. controls (10.75,-1.5) and (9.25,-1.5) .. (9.25,-0.6) {};
	\draw[<->] (0.25,-0.6) .. controls (0.25,-2.25) and (11.25,-2.25) .. (11.25,-0.6) {};
	
	\draw[|-|] (0,5) -- (1.5,5) node [midway,below] {'$\star$i'};
		\draw[|-|] (0,5) -- (0.5,5) {};
		\draw[|-|] (0.5,5) -- (1,5) {};
		\draw[|-|] (1,5) -- (1.5,5) {};
	\draw[|-|] (2,5) -- (3.5,5) node [midway,below] {'$\star$o'};
		\draw[|-|] (2,5) -- (2.5,5) {};
		\draw[|-|] (2.5,5) -- (3,5) {};
		\draw[|-|] (3,5) -- (3.5,5) {};
	\draw[|-|] (4,5) -- (5.5,5) node [midway,below] {'0i'};
		\draw[|-|] (4,5) -- (4.5,5) {};
		\draw[|-|] (4.5,5) -- (5,5) {};
		\draw[|-|] (5,5) -- (5.5,5) {};
	\draw[|-|] (6,5) -- (7.5,5) node [midway,below] {'0o'};
		\draw[|-|] (6,5) -- (6.5,5) {};
		\draw[|-|] (6.5,5) -- (7,5) {};
		\draw[|-|] (7,5) -- (7.5,5) {};
	\draw[|-|] (8,5) -- (9.5,5) node [midway,below] {'1i'};
		\draw[|-|] (8,5) -- (8.5,5) {};
		\draw[|-|] (8.5,5) -- (9,5) {};
		\draw[|-|] (9,5) -- (9.5,5) {};
	\draw[|-|] (10,5) -- (11.5,5) node [midway,below] {'1o'};
		\draw[|-|] (10,5) -- (10.5,5) {};
		\draw[|-|] (10.5,5) -- (11,5) {};
		\draw[|-|] (11,5) -- (11.5,5) {};
		
	\draw[<->,line width=0.4mm] (2.25,5.1) .. controls (2.25,6) and (8.75,6) .. (8.75,5.1) {};
	\draw[<->,line width=0.4mm] (10.75,5.1) .. controls (10.75,6) and (9.25,6) .. (9.25,5.1) {};
	\draw[<->,line width=0.4mm] (0.25,5.1) .. controls (0.25,6.7) and (11.25,6.7) .. (11.25,5.1) {};
	
	\draw[|-|] (12,0) -- (13.5,0) node [midway,below] {"accept"};
	\draw[|-|] (14,0) -- (15.5,0) node [midway,below] {"reject"};
	
	\draw[|-|] (12,5) -- (13.5,5) node [midway,below] {"accept"};
	\draw[|-|] (14,5) -- (15.5,5) node [midway,below] {"reject"};

	\draw[->,line width=0.4mm] (12.15,0.1) .. controls (12.15,2) and (2.25,2) .. (2.25,0.1) {};
	\draw[->] (12.65,0.1) .. controls (12.65,2) and (2.75,2) .. (2.75,0.1) {};
	\draw[->] (13.15,0.1) .. controls (13.15,2) and (3.25,2) .. (3.25,0.1) {};

	\draw[->] (4.25,0.1) .. controls (4.25,1) and (6.25,1) .. (6.25,0.1) {};
	\draw[->] (4.75,0.1) .. controls (4.75,1) and (6.75,1) .. (6.75,0.1) {};
	\draw[->] (5.25,0.1) .. controls (5.25,1) and (7.25,1) .. (7.25,0.1) {};
	
	\draw[->,dashed] (8.25,0.1) .. controls (8.25,2.4) and (6.5,2.4) .. (5.5,2.4) .. controls (4.5,2.4) and (2.25,2.4) .. (2.25,4.4) {};
	\draw[->,dashed,line width=0.4mm] (8.75,0.1) .. controls (8.75,2.6) and (7,2.6) .. (6,2.6) .. controls (5,2.6) and (2.75,2.6) .. (2.75,4.4) {};
	\draw[->,dashed] (9.25,0.1) .. controls (9.25,2.8) and (7.5,2.8) .. (6.5,2.8) .. controls (5.5,2.8) and (3.25,2.8) .. (3.25,4.4) {};
	
	\draw[->,dashed] (4.25,4.4) .. controls (4.25,3) and (7.5,3) .. (9.5,3) .. controls (11.5,3) and (12.35,2.8) .. (12.35,0.1) {};
	\draw[->,dashed] (4.75,4.4) .. controls (4.75,3.2) and (7.5,3.2) .. (9.5,3.2) .. controls (11.5,3.2) and (12.85,3.2) .. (12.85,0.1) {};
	\draw[->,dashed] (5.25,4.4) .. controls (5.25,3.4) and (8,3.4) .. (10,3.4) .. controls (12,3.4) and (13.35,3.4) .. (13.35,0.1) {};	
	
	\draw[->] (8.25,4.4) .. controls (8.25,3.5) and (10.25,3.5) .. (10.25,4.4) {};
	\draw[->,line width=0.4mm] (8.75,4.4) .. controls (8.75,3.5) and (10.75,3.5) .. (10.75,4.4) {};
	\draw[->,line width=0.4mm] (9.25,4.4) .. controls (9.25,3.5) and (11.25,3.5) .. (11.25,4.4) {};
\end{tikzpicture}
}

\subfloat[Computing with input $\star 01$.\label{fig10machine01}]{
\begin{tikzpicture}[x=0.7cm,y=0.7cm]
	\draw[|-|] (0,0) -- (1.5,0) node [midway,below] {'$\star$i'};
		\draw[|-|] (0,0) -- (0.5,0) {};
		\draw[|-|] (0.5,0) -- (1,0) {};
		\draw[|-|] (1,0) -- (1.5,0) {};
	\draw[|-|] (2,0) -- (3.5,0) node [midway,below] {'$\star$o'};
		\draw[|-|] (2,0) -- (2.5,0) {};
		\draw[|-|] (2.5,0) -- (3,0) {};
		\draw[|-|] (3,0) -- (3.5,0) {};
	\draw[|-|] (4,0) -- (5.5,0) node [midway,below] {'0i'};
		\draw[|-|] (4,0) -- (4.5,0) {};
		\draw[|-|] (4.5,0) -- (5,0) {};
		\draw[|-|] (5,0) -- (5.5,0) {};
	\draw[|-|] (6,0) -- (7.5,0) node [midway,below] {'0o'};
		\draw[|-|] (6,0) -- (6.5,0) {};
		\draw[|-|] (6.5,0) -- (7,0) {};
		\draw[|-|] (7,0) -- (7.5,0) {};
	\draw[|-|] (8,0) -- (9.5,0) node [midway,below] {'1i'};
		\draw[|-|] (8,0) -- (8.5,0) {};
		\draw[|-|] (8.5,0) -- (9,0) {};
		\draw[|-|] (9,0) -- (9.5,0) {};
	\draw[|-|] (10,0) -- (11.5,0) node [midway,below] {'1o'};
		\draw[|-|] (10,0) -- (10.5,0) {};
		\draw[|-|] (10.5,0) -- (11,0) {};
		\draw[|-|] (11,0) -- (11.5,0) {};
		
	\draw[<->,line width=0.4mm] (2.25,-0.6) .. controls (2.25,-1.5) and (4.75,-1.5) .. (4.75,-0.6) {};
	\draw[<->,line width=0.4mm] (6.75,-0.6) .. controls (6.75,-1.5) and (9.25,-1.5) .. (9.25,-0.6) {};
	\draw[<->] (0.25,-0.6) .. controls (0.25,-2.25) and (11.25,-2.25) .. (11.25,-0.6) {};
	
	\draw[|-|] (0,5) -- (1.5,5) node [midway,below] {'$\star$i'};
		\draw[|-|] (0,5) -- (0.5,5) {};
		\draw[|-|] (0.5,5) -- (1,5) {};
		\draw[|-|] (1,5) -- (1.5,5) {};
	\draw[|-|] (2,5) -- (3.5,5) node [midway,below] {'$\star$o'};
		\draw[|-|] (2,5) -- (2.5,5) {};
		\draw[|-|] (2.5,5) -- (3,5) {};
		\draw[|-|] (3,5) -- (3.5,5) {};
	\draw[|-|] (4,5) -- (5.5,5) node [midway,below] {'0i'};
		\draw[|-|] (4,5) -- (4.5,5) {};
		\draw[|-|] (4.5,5) -- (5,5) {};
		\draw[|-|] (5,5) -- (5.5,5) {};
	\draw[|-|] (6,5) -- (7.5,5) node [midway,below] {'0o'};
		\draw[|-|] (6,5) -- (6.5,5) {};
		\draw[|-|] (6.5,5) -- (7,5) {};
		\draw[|-|] (7,5) -- (7.5,5) {};
	\draw[|-|] (8,5) -- (9.5,5) node [midway,below] {'1i'};
		\draw[|-|] (8,5) -- (8.5,5) {};
		\draw[|-|] (8.5,5) -- (9,5) {};
		\draw[|-|] (9,5) -- (9.5,5) {};
	\draw[|-|] (10,5) -- (11.5,5) node [midway,below] {'1o'};
		\draw[|-|] (10,5) -- (10.5,5) {};
		\draw[|-|] (10.5,5) -- (11,5) {};
		\draw[|-|] (11,5) -- (11.5,5) {};
		
	\draw[<->,line width=0.4mm] (2.25,5.1) .. controls (2.25,6) and (4.75,6) .. (4.75,5.1) {};
	\draw[<->] (6.75,5.1) .. controls (6.75,6) and (9.25,6) .. (9.25,5.1) {};
	\draw[<->] (0.25,5.1) .. controls (0.25,6.5) and (11.25,6.5) .. (11.25,5.1) {};
	
	\draw[|-|] (12,0) -- (13.5,0) node [midway,below] {"accept"};
	\draw[|-|] (14,0) -- (15.5,0) node [midway,below] {"reject"};
	
	\draw[|-|] (12,5) -- (13.5,5) node [midway,below] {"accept"};
	\draw[|-|] (14,5) -- (15.5,5) node [midway,below] {"reject"};

	\draw[->,line width=0.4mm] (12.15,0.1) .. controls (12.15,2) and (2.25,2) .. (2.25,0.1) {};
	\draw[->] (12.65,0.1) .. controls (12.65,2) and (2.75,2) .. (2.75,0.1) {};
	\draw[->] (13.15,0.1) .. controls (13.15,2) and (3.25,2) .. (3.25,0.1) {};

	\draw[->] (4.25,0.1) .. controls (4.25,1) and (6.25,1) .. (6.25,0.1) {};
	\draw[->] (4.75,0.1) .. controls (4.75,1) and (6.75,1) .. (6.75,0.1) {};
	\draw[->] (5.25,0.1) .. controls (5.25,1) and (7.25,1) .. (7.25,0.1) {};
	
	\draw[->,dashed] (8.25,0.1) .. controls (8.25,2.4) and (6.5,2.4) .. (5.5,2.4) .. controls (4.5,2.4) and (2.25,2.4) .. (2.25,4.4) {};
	\draw[->,dashed] (8.75,0.1) .. controls (8.75,2.6) and (7,2.6) .. (6,2.6) .. controls (5,2.6) and (2.75,2.6) .. (2.75,4.4) {};
	\draw[->,dashed,line width=0.4mm] (9.25,0.1) .. controls (9.25,2.8) and (7.5,2.8) .. (6.5,2.8) .. controls (5.5,2.8) and (3.25,2.8) .. (3.25,4.4) {};
	
	\draw[->,dashed] (4.25,4.4) .. controls (4.25,3) and (7.5,3) .. (9.5,3) .. controls (11.5,3) and (12.35,2.8) .. (12.35,0.1) {};
	\draw[->,dashed,line width=0.4mm] (4.75,4.4) .. controls (4.75,3.2) and (7.5,3.2) .. (9.5,3.2) .. controls (11.5,3.2) and (12.85,3.2) .. (12.85,0.1) {};
	\draw[->,dashed] (5.25,4.4) .. controls (5.25,3.4) and (8,3.4) .. (10,3.4) .. controls (12,3.4) and (13.35,3.4) .. (13.35,0.1) {};	
	
	\draw[->] (8.25,4.4) .. controls (8.25,3.5) and (10.25,3.5) .. (10.25,4.4) {};
	\draw[->] (8.75,4.4) .. controls (8.75,3.5) and (10.75,3.5) .. (10.75,4.4) {};
	\draw[->] (9.25,4.4) .. controls (9.25,3.5) and (11.25,3.5) .. (11.25,4.4) {};
\end{tikzpicture}
}
\caption{Example: computing with the second machine.}\label{fig10machinecomp}
\end{figure}

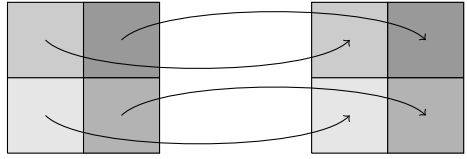
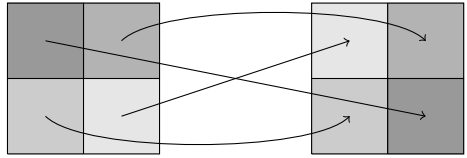
\begin{figure}
\centering
\subfloat[A simple translation]{
\begin{tikzpicture}[x={(0cm,1cm)},y={(1cm,0cm)}]
	\draw[-,fill,opacity=0.1] (0,0) -- (0,1) -- (1,1) -- (1,0) -- (0,0) {};
	\draw[-,fill,opacity=0.3] (0,1) -- (0,2) -- (1,2) -- (1,1) -- (0,1) {};
	\draw[-,fill,opacity=0.4] (1,1) -- (1,2) -- (2,2) -- (2,1) -- (1,1) {};
	\draw[-,fill,opacity=0.2] (1,0) -- (1,1) -- (2,1) -- (2,0) -- (1,0) {};
	\draw[-,fill,opacity=0.1] (0,4) -- (0,5) -- (1,5) -- (1,4) -- (0,4) {};
	\draw[-,fill,opacity=0.3] (0,5) -- (0,6) -- (1,6) -- (1,5) -- (0,5) {};
	\draw[-,fill,opacity=0.4] (1,5) -- (1,6) -- (2,6) -- (2,5) -- (1,5) {};
	\draw[-,fill,opacity=0.2] (1,4) -- (1,5) -- (2,5) -- (2,4) -- (1,4) {};
	\draw[-] (0,0) -- (2,0) -- (2,2) -- (0,2) -- (0,0) {};
	\draw[-] (0,1) -- (2,1) {};
	\draw[-] (1,0) -- (1,2) {};
	\draw[-] (0,4) -- (2,4) -- (2,6) -- (0,6) -- (0,4) {};
	\draw[-] (0,5) -- (2,5) {};
	\draw[-] (1,4) -- (1,6) {};
	
	\draw[->] (0.5,0.5) .. controls (0,1) and (0,4) .. (0.5,4.5) {};
	\draw[->] (0.5,1.5) .. controls (1,2) and (1,5) .. (0.5,5.5) {};
	\draw[->] (1.5,0.5) .. controls (1,1) and (1,4) .. (1.5,4.5) {};
	\draw[->] (1.5,1.5) .. controls (2,2) and (2,5) .. (1.5,5.5) {};
\end{tikzpicture}
}

\subfloat[A translation combined with a permutation]{
\begin{tikzpicture}[x={(0cm,1cm)},y={(1cm,0cm)}]
	\draw[-,fill,opacity=0.2] (0,0) -- (0,1) -- (1,1) -- (1,0) -- (0,0) {};
	\draw[-,fill,opacity=0.1] (0,1) -- (0,2) -- (1,2) -- (1,1) -- (0,1) {};
	\draw[-,fill,opacity=0.3] (1,1) -- (1,2) -- (2,2) -- (2,1) -- (1,1) {};
	\draw[-,fill,opacity=0.4] (1,0) -- (1,1) -- (2,1) -- (2,0) -- (1,0) {};
	\draw[-,fill,opacity=0.2] (0,4) -- (0,5) -- (1,5) -- (1,4) -- (0,4) {};
	\draw[-,fill,opacity=0.4] (0,5) -- (0,6) -- (1,6) -- (1,5) -- (0,5) {};
	\draw[-,fill,opacity=0.3] (1,5) -- (1,6) -- (2,6) -- (2,5) -- (1,5) {};
	\draw[-,fill,opacity=0.1] (1,4) -- (1,5) -- (2,5) -- (2,4) -- (1,4) {};
	\draw[-] (0,0) -- (2,0) -- (2,2) -- (0,2) -- (0,0) {};
	\draw[-] (0,1) -- (2,1) {};
	\draw[-] (1,0) -- (1,2) {};
	\draw[-] (0,4) -- (2,4) -- (2,6) -- (0,6) -- (0,4) {};
	\draw[-] (0,5) -- (2,5) {};
	\draw[-] (1,4) -- (1,6) {};

	\draw[->] (0.5,0.5) .. controls (0,1) and (0,4) .. (0.5,4.5) {};
	\draw[->] (0.5,1.5) -- (1.5,4.5) {};
	\draw[->] (1.5,0.5) -- (0.5,5.5) {};
	\draw[->] (1.5,1.5) .. controls (2,2) and (2,5) .. (1.5,5.5) {};
\end{tikzpicture}
}
\caption{How permutations interact with splittings.}\label{illustratepermutations}
\end{figure}

%

\subsection{Non-deterministic and Probabilistic Computation}

All the preceding results have non-deterministic and probabilistic analogues; we consider in this section the model of non-deterministic graphings. To obtain the same types of results in that case, two issues should be dealt with. First one needs to consider programs of a different type since the result of a non-deterministic computation yield sets of booleans and not a single boolean. Thus, programs will be considered as elements of a more general type than in the deterministic case. We consider here elements of the type $\oc\ListTypeBin\multimap\NBool$, where $\NBool$ is a specific type definable in the models, somehow a non-deterministic version of the booleans.

The second issue concerns acceptance. While it is natural in the deterministic case to ask whether the computation yielded the element $\ttrm{true}\in\Bool$, this is no longer the case. Should one define acceptance as producing at least one element $\ttrm{true}$ or as producing no element $\ttrm{false}$? Both conditions should be considered. In order to obtain a quite general notion of acceptance that can not only capture both notions explained above but extend to other computational paradigms such as probabilistic computation, we use the structure of the realisability models we are working with to define a notion of \emph{test}. Indeed, the models are constructed around an orthogonality relation $\poll$: a test will be an element (or more generally a family of elements) $\testfont{T}$ of the model and a program $P$ accepts the word $\word{w}$ if the execution $P\plug \oc L_{\word{w}}$ is orthogonal to $\testfont{T}$.

Given $P$ an element of $\oc\ListTypeBin\multimap\NBool$ and a test $\testfont{T}$, one can define the language $[P]_{\testfont{T}}$ as the set of all words $\word{w}$ that are accepted by $P$ w.r.t. the test $\testfont{T}$:
$$[P]_{\testfont{T}}=\{\word{w}~|~P\plug \oc L_{\word{w}}\poll \testfont{T}\}$$

\begin{remark}
Acceptance for the specific case of the deterministic model can be defined in this way by considering the right notion of test. 
\end{remark}

\begin{definition}
Let $\Omega$ be a monoid, $\microcosm{m}$ a microcosm, $\testfont{T}$ a test and $\mathcal{L}$ a set of languages. We say the model $\genmodel[]{m}{\Omega}{\ast}$ \emph{characterises the set $\mathcal{L}$ w.r.t. the test $\testfont{T}$} if the set $\{[P]_{\testfont{T}}~|~P\in \oc\ListTypeBin\multimap\NBool\}$ is equal to $\mathcal{L}$.
\end{definition}

In particular, one can show the existence of two tests $\testnl$ and $\testconl$ that correspond to the two notions of acceptance mentioned above and which allows for the characterisation of usual non-deterministic classes. Those are stated in the following theorem. 

\begin{theorem}\label{ndth1}
The model $\nmodel{\{1\}}{m_{\textnormal{i}}}$~$(i\in\naturalN\cup\{\infty\})$ characterises the class \cctwnfa{i} w.r.t. the test $\testnl$ and characterises the class \cctwconfa{i} w.r.t. the test $\testconl$.
\end{theorem}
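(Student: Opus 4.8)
The plan is, for each of the two tests, to prove the two inclusions exhibiting $\{[P]_{\testnl}\mid P\in\oc\ListTypeBin\multimap\NBool\}=\cctwnfa{i}$ and, separately, $\{[P]_{\testconl}\mid P\in\oc\ListTypeBin\multimap\NBool\}=\cctwconfa{i}$, following the strategy used for \autoref{th1} but keeping track of the branching structure of computations throughout. I would treat $i\in\naturalN$ first; the case $i=\infty$ then follows by taking unions, since $\microcosm{m}_{\infty}=\cup_{j}\microcosm{m}_{j}$ and any graphing, having finitely many edges whose realisations are built from finitely many generators, already lives in some $\microcosm{m}_{j}$.

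\emph{From automata to graphings.} Given a two-way $i$-head automaton $\mathcal{A}$ recognising $L$, I would build a $\{1\}$-weighted \emph{non-deterministic} graphing $P_{\mathcal{A}}$ in $\microcosm{m}_{i}$ exactly as in the deterministic case of \autoref{th1}: the carrier encodes the finite control of $\mathcal{A}$ together with the address components of its $i$ heads, and each transition becomes an edge realised by a translation $\tau_{k}$ (moving the currently active head) possibly post-composed with a swap $s_{j}$ ($j\le i$) when $\mathcal{A}$ changes the active head. The only — and essential — difference with the deterministic construction is that several edges of $P_{\mathcal{A}}$ may now have overlapping sources, which is precisely what the definition of non-deterministic graphing permits. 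One checks, as illustrated in \autoref{fig1machinecomp} and \autoref{fig10machinecomp}, that the alternating paths of $P_{\mathcal{A}}\plug\oc L_{\word{w}}$ are in bijection with the runs of $\mathcal{A}$ on $\word{w}$, so that the resulting element of $\NBool$ records exactly the collection of halting verdicts. Since $\testnl$ is passed iff some verdict is $\ttrm{true}$ and $\testconl$ iff no verdict is $\ttrm{false}$, one gets $[P_{\mathcal{A}}]_{\testnl}=L$ and $[P_{\mathcal{A}}]_{\testconl}$ equal to the complement of $L$, hence $\cctwnfa{i}\subseteq\{[P]_{\testnl}\}$ and $\cctwconfa{i}\subseteq\{[P]_{\testconl}\}$.

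\emph{From graphings to automata.} Conversely, let $P$ be an arbitrary graphing in $\microcosm{m}_{i}$ of type $\oc\ListTypeBin\multimap\NBool$. Since $P$ has finitely many edges whose realisations are words over the finite generating set $\{\tau_{k}\}\cup\{s_{1},\dots,s_{i}\}$, a position reached during the execution $P\plug\oc L_{\word{w}}$ is determined by a bounded amount of finite control (an edge of $P$ together with a bounded record of which copies of $[0,1]$ have been swapped) and the positions of at most $i$ heads on $\word{w}$ — exactly the configuration space of a non-deterministic two-way $i$-head automaton $\mathcal{A}_{P}$. The alternating paths of $P\plug\oc L_{\word{w}}$ are then runs of $\mathcal{A}_{P}$, and by associativity of execution / the trefoil property the set of such paths does not depend on the order in which reductions are performed, so $\mathcal{A}_{P}$ is well defined. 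Reading off acceptance, $P\plug\oc L_{\word{w}}\poll\testnl$ iff $\mathcal{A}_{P}$ has an accepting run, i.e. $[P]_{\testnl}\in\cctwnfa{i}$; and $P\plug\oc L_{\word{w}}\poll\testconl$ iff $\mathcal{A}_{P}$ has no rejecting run, which — after normalising $\mathcal{A}_{P}$ so that every run halts — says that the complement of $[P]_{\testconl}$ is recognised by a non-deterministic $i$-head automaton, i.e. $[P]_{\testconl}\in\cctwconfa{i}$.

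\emph{Main obstacle.} The delicate step is the second direction. One must show that a graphing in $\microcosm{m}_{i}$ cannot covertly encode more than $i$ heads: the interaction of the $\oc$-encoding of $\word{w}$ — which splits each address interval into as many pieces as the length of $\word{w}$ — with the swaps $s_{j}$ must not provide unbounded extra storage, which requires a refinement/normal-form argument controlling exactly how such a graphing acts on the representation of $\word{w}$, of the kind illustrated by \autoref{illustratepermutations}. One must also handle non-terminating (looping) executions, checking that orthogonality to $\testnl$ and to $\testconl$ behaves correctly on them and is invariant under refinement and under execution, so that $\mathcal{A}_{P}$ may be taken halting and the translation is well defined on the entire type rather than only on graphings arising from automata. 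Note that, unlike at the Turing-machine level where $\coNLogspace=\NLogspace$, for finite $i$ the classes $\cctwnfa{i}$ and $\cctwconfa{i}$ need not coincide, so the two tests genuinely yield distinct characterisations at each level $i$.
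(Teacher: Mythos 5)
The paper itself contains no proof of \autoref{ndth1}: it is stated as a result imported from the cited companion work, so you are being measured against the strategy of that work, which your outline broadly reconstructs (completeness by compiling non-deterministic multihead automata into non-deterministic graphings, soundness by reading an automaton off an arbitrary inhabitant of the type). The plan is the right one, but one step as written is false. You claim the \emph{same} graphing $P_{\mathcal{A}}$ witnesses both characterisations, with $[P_{\mathcal{A}}]_{\testconl}$ equal to the complement of $L$. Under your own conventions $[P_{\mathcal{A}}]_{\testconl}=\{\word{w}\mid\text{no run of $\mathcal{A}$ on $\word{w}$ outputs \ttrm{false}}\}$, and this is not $\overline{L}=\{\word{w}\mid\text{no run accepts}\}$: take any $\word{w}_{0}\in L$ on which every halting run of $\mathcal{A}$ accepts (e.g.\ an automaton that never enters a rejecting configuration); then $\word{w}_{0}\in[P_{\mathcal{A}}]_{\testconl}$ while $\word{w}_{0}\notin\overline{L}$. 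The correct move for the $\testconl$ inclusion is to start from $L'\in\text{\cctwconfa{i}}$, take an automaton $\mathcal{A}$ accepting $\overline{L'}$, and wire its \emph{accepting} configurations to the $\ttrm{false}$ outcome, so that passing $\testconl$ means \enquote{no run of $\mathcal{A}$ accepts}, i.e.\ $\word{w}\in L'$. This swap of the two verdicts is precisely why the two tests yield genuinely different characterisations at each finite level, a point your closing remark gestures at but your construction does not implement.

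Second, the soundness direction is where the content of the theorem lies, and your text names the obligations without discharging them. Two assumptions are made silently. A graphing is only required to be a \emph{countable} family of edges, and membership in $\oc\ListTypeBin\multimap\NBool$ is a purely behavioural (orthogonality) condition, so neither \enquote{$P$ has finitely many edges} nor \enquote{$P$ already lives in some $\microcosm{m}_{j}$ with $j$ finite} comes for free; both the extraction of a \emph{finite-state} automaton $\mathcal{A}_{P}$ and your reduction of the $i=\infty$ case to finite $i$ rest on a finiteness hypothesis that must be either imposed on the class of programs considered or derived. And the bound on the number of heads --- that the interaction of the swaps $s_{j}$ with the exponential splitting of the word representation cannot provide unbounded extra storage --- is the actual theorem; identifying it as \enquote{the delicate step} is honest, but as it stands the proof of the hard inclusion is missing rather than sketched.
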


In the case of probabilistic graphings, one can show the existence of a test $\testprob$ which allows for the characterisation of probabilistic computation with unbounded error. This leads to the following theorems.

\begin{theorem}\label{pth1}
The model $\pmodel{[0,1]}{m_{\textnormal{i}}}$~$(i\in\naturalN\cup\{\infty\})$ characterises the class \cctwpfa{i} w.r.t. the test $\testprob$.
\end{theorem}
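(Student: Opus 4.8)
The plan is to follow the two-step pattern already used to establish Theorems~\ref{th1} and~\ref{ndth1}, adapting it to the probabilistic model $\pmodel{[0,1]}{m_i}$. One shows (i) \emph{soundness}: every language in \cctwpfa{i} is of the form $[P]_{\testprob}$ for some $P\in\oc\ListTypeBin\multimap\NBool$; and (ii) \emph{completeness}: conversely, every such $[P]_{\testprob}$ lies in \cctwpfa{i}. The guiding dictionary is: a probabilistic two-way automaton with $k\leqslant i$ heads corresponds to a probabilistic graphing living in the microcosm $\microcosm{m}_i$ (generated by the translations together with the permutations $s_2,\dots,s_i$), the transition probabilities of the automaton become the weights in $[0,1]$ labelling the edges, the $j$-th head is managed by the permutation $s_j$, and the reading of the symbol currently under a head is realised exactly as in the deterministic case, since the word representation $\oc L_{\word w}$ is the same regardless of the computational paradigm.

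For soundness, given a probabilistic $k$-head two-way automaton $\mathcal A$ ($k\leqslant i$) one builds, mimicking the machine constructions illustrated in Section~\ref{recentresults}, a graphing $P_{\mathcal A}$ whose vertices encode the finite control of $\mathcal A$ together with the symbol/position data, and whose edges are the transitions of $\mathcal A$, weighted by the corresponding probabilities and realised by compositions of translations with the head-swapping maps $s_j$. The stochasticity of $\mathcal A$ (the probabilities out of a configuration sum to $1$) is precisely the condition making $P_{\mathcal A}$ a \emph{probabilistic} graphing, so $P_{\mathcal A}$ is a legitimate element of $\pmodel{[0,1]}{m_i}$, and one checks it has type $\oc\ListTypeBin\multimap\NBool$. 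One then argues that the execution $P_{\mathcal A}\plug\oc L_{\word w}$ is the graphing whose edges from the entry vertex to the accept/reject vertices are the maximal alternating paths, each carrying as weight the product of the transition probabilities along the corresponding run; summing these weights recovers the acceptance probability $p_{\mathcal A}(\word w)$. Finally one defines $\testprob$ so that, using the orthogonality $\poll$ of the probabilistic model (computed through the family of measurements $\meas{\cdot,\cdot}$ attached to $m$), the relation $P_{\mathcal A}\plug\oc L_{\word w}\poll\testprob$ holds precisely when $p_{\mathcal A}(\word w)>\tfrac12$, i.e. when $\mathcal A$ accepts $\word w$ with unbounded error; hence $[P_{\mathcal A}]_{\testprob}$ is the language of $\mathcal A$.

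For completeness one starts from an arbitrary probabilistic graphing $P$ of type $\oc\ListTypeBin\multimap\NBool$ in $\microcosm{m}_i$ and reconstructs a probabilistic $i$-head two-way automaton. The key structural observation, identical in spirit to the deterministic and non-deterministic cases, is that such a $P$ has only finitely many edges, each realised by a map in $\microcosm{m}_i$, hence by a translation possibly composed with finitely many of the $s_j$'s with $j\leqslant i$; consequently, when confronted with $\oc L_{\word w}$ the behaviour of $P$ is finite-state in the control data and uses only $i$ independent ``addresses'' in the Hilbert cube, i.e. $i$ head positions on $\word w$. One extracts from $P$ the finite control and the transition probabilities (the edge weights), completes the subprobabilistic mass by a transition to a rejecting sink, and verifies that the resulting automaton $\mathcal A_P$ satisfies $p_{\mathcal A_P}(\word w)>\tfrac12$ iff $P\plug\oc L_{\word w}\poll\testprob$, so that $[P]_{\testprob}\in\cctwpfa{i}$.

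The main obstacle, I expect, lies in the analytic bookkeeping around the probabilistic measurement rather than in the combinatorics of the simulation. Two points are delicate. First, one must show that $\meas{P\plug\oc L_{\word w},\testprob}$ — built from cycles of the plugged graphing — genuinely computes $p_{\mathcal A}(\word w)$: this needs a convergence argument (the series of run-probabilities is absolutely summable) together with a careful matching between ``cycles of the plugged graphing'' and ``runs reaching accept'', handling non-halting runs of the automaton, after which the numerical threshold ``$>\tfrac12$'' must be translated into the orthogonality relation — which is exactly what pins down the precise definition of $\testprob$. Second, the exponential $\oc(-)$ splits each unit interval into $|\word w|$ pieces, so $P_{\mathcal A}$ must be handled up to refinement, and one has to check that refining a probabilistic graphing preserves both the probabilistic condition and the value of all measurements — the point where the interaction of the weights with the splittings (and, for $k\geqslant2$, with the permutations $s_j$ acting on several copies of $[0,1]$, as in Figure~\ref{illustratepermutations}) must be controlled.
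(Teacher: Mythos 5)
The paper itself gives no proof of Theorem~\ref{pth1}: it is stated as a result imported from \cite{seiller-goic}, and the surrounding text only supplies the methodology (representation of words, machines as graphings in $\microcosm{m}_{i}$, acceptance via orthogonality against a test). Your two-step soundness/completeness plan, with automata transition probabilities becoming edge weights in $[0,1]$, heads managed by the permutations $s_{j}$, and $\testprob$ pinning down unbounded-error acceptance ($p>\tfrac12$), is exactly the route the paper describes and the one the cited work follows; the delicate points you single out (matching cycles of $P\plug\oc L_{\word{w}}$ against accepting runs, summability, behaviour under the refinements induced by $\oc$) are indeed where the work lies.

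One step of your completeness argument would fail as stated: you take as a ``key structural observation'' that an arbitrary $P$ of type $\oc\ListTypeBin\multimap\NBool$ has only finitely many edges. Nothing in the definition of the models forces this --- graphings are \emph{countable} families of weighted edges, and the type is carved out by orthogonality, not by any finiteness condition --- so you cannot read off a finite control for $\mathcal{A}_{P}$ directly from the edge set. The completeness direction has to show instead that the \emph{interaction} of $P$ with every word representation factors through finitely much data: the carrier of $\oc L_{\word{w}}$ decomposes into finitely many cells, the realisers available in $\microcosm{m}_{i}$ are translations composed with the finitely many $s_{j}$ ($j\leqslant i$), and it is this finite quotient of the dynamics (not $P$ itself) that yields the simulating $i$-head automaton. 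Without that reduction, the extraction of a finite-state probabilistic automaton from $P$, and hence the inclusion $[P]_{\testprob}\in\text{\cctwpfa{i}}$, is not justified. The rest of the sketch is consistent with the paper's approach.
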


\noindent As corollaries of these results, we obtain new characterisations of:
\begin{itemize}[noitemsep,nolistsep]
\item the class \Regular of Regular languages (the model $\nmodel{\{1\}}{m_{\textnormal{1}}}$);
\item the classes \NLogspace and \coNLogspace (the model $\nmodel{\{1\}}{m_{\infty}}$);
\item the class \Stochastic of Stochastic languages (the model $\pmodel{[0,1]}{m_{\textnormal{1}}}$);
\item the class \PLogspace (the model $\pmodel{[0,1]}{m_{\infty}}$).
\end{itemize}

\subsection{Variations}

All of these results are based upon a correspondence between elements of the type $\oc\ListTypeBin\multimap \NBool$ and some kinds of two-way multihead automata, either deterministic, non-deterministic or probabilistic. Several other results can be obtained by modifying the notion of automata considered. We discuss here three possible modifications. 

The first modification is actually a restriction, that is: can we represent computation by \emph{one-way automata}? One can already answer positively to this question, as the two-way capability of the automata does not really find its source in the programs $P$ in $\oc\ListTypeBin\multimap \NBool$ but in the representation of words. One can define an alternative representation of words over an alphabet $\Sigma$ and a corresponding type $\ListTypeOW$. 
We then obtain one-way analogues of \autoref{th1}, \autoref{ndth1}, and \autoref{pth1}.

\begin{theorem}
The model $\dmodel{[0,1]}{m_{\textnormal{i}}}$~$(i\in\naturalN\cup\{\infty\})$ characterises the class \ccowdfa{i}.
\end{theorem}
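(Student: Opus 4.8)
The plan is to mirror the proof of \autoref{th1} for the two-way deterministic case \cite{seiller-goic}, replacing the two-way word representation by the one-way representation $\ListTypeOW$ whose edge combinatorics forbid backward moves; as remarked just above, the two-way capability came from the representation of words and not from the programs, so once $\ListTypeOW$ is fixed the argument should transfer. Concretely I would prove two inclusions: \emph{(simulation)} every one-way deterministic multihead automaton with at most $i$ heads is the language $[P]$ of some deterministic graphing $P$ of type $\oc\ListTypeOW\multimap\cond{Bool}$ in $\microcosm{m}_i$; and \emph{(adequacy)} the language of any such graphing lies in $\ccowdfa{i}$. Throughout I would take \autoref{mainthm} and the deterministic model theorem for granted, so the relevant submodel exists and $P\plug\oc L_{\word{w}}$ is again a deterministic graphing; only the resulting sets of languages need to be analysed. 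The case $i=\infty$ then follows from $\microcosm{m}_\infty=\cup_i\microcosm{m}_i$ together with $\ccowdfa{\infty}=\cup_i\ccowdfa{i}$.

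\emph{Simulation.} Given a one-way deterministic $i$-head automaton $A$, I build a graphing $P_A$ whose vertices encode a state of $A$ together with the index of the currently active head, and whose edges encode the transition table: an edge realised by a translation $\tau_k$ advances the active head along the word, an edge realised by the permutation $s_k$ changes the active head to head $k$, and edges into the vertices ``accept''/``reject'' encode halting. Since the carrier is $\integerN\times[0,1]^{\naturalN}$ and the generators of $\microcosm{m}_i$ are precisely the $\tau_k$ and $s_2,\dots,s_i$, the graphing $P_A$ lives in $\microcosm{m}_i$; determinism of $A$ makes the sources of distinct edges essentially disjoint and keeps every weight in $\{0,1\}$, so $P_A$ is a deterministic graphing even though the weight monoid $[0,1]$ permits more. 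One then checks, exactly as in the worked examples above, that the alternating paths of $P_A\plug\oc L_{\word{w}}$ between $\{\text{accept},\text{reject}\}$ and itself are in bijection with the run of $A$ on $\word{w}$, whence $[P_A]=L(A)$.

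\emph{Adequacy.} Conversely, let $P$ be a deterministic graphing of type $\oc\ListTypeOW\multimap\cond{Bool}$ in $\microcosm{m}_i$. I would extract a one-way deterministic multihead automaton recognising $[P]$: the finitely many vertices of $P$, suitably refined, give a finite control, and the realisations of its edges, being compositions of $\tau_k$ and the $s_k$, act on the carrier by shifting the $\integerN$-coordinate and permuting coordinates $1$ and $k$ of the Hilbert cube — precisely the behaviour of a bounded family of heads on the word plus a pointer recording which head is active. Because each $s_k$ touches only coordinates $1$ and $k$, the number of heads $P$ can ever consult on any $\oc L_{\word{w}}$ is bounded by $i$; determinism of $P$ and uniqueness of the normal form of $P\plug\oc L_{\word{w}}$ make the automaton deterministic and halting, with the head positions amounting to $O(\log|\word{w}|)$ bits; and the combinatorics of $\ListTypeOW$ — each symbol linked only to its successor — forces every head to move forward only, so the automaton is one-way. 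Hence $[P]\in\ccowdfa{i}$.

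\emph{Main obstacle.} The simulation direction is essentially bookkeeping once $\ListTypeOW$ is in place; the delicate part is adequacy, namely turning an \emph{arbitrary} deterministic graphing in $\microcosm{m}_i$ — which a priori might exploit the measure-theoretic structure of $\integerN\times[0,1]^{\naturalN}$ in unconstrained ways — into an automaton with \emph{exactly} the right resources. The crux is a bounded-resource normalisation lemma: the fragment of the dynamics of $P$ relevant to any $\oc L_{\word{w}}$ is captured by finitely many control states and at most $i$ forward-only pointers into $\word{w}$. Matching the head count exactly, rather than up to a constant, is where the precise structure of $\microcosm{m}_i$ is used — that $s_k$ permutes coordinates $1$ and $k$ and nothing else — and I expect this to be the step demanding the most care. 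It is the direct analogue of the argument for the two-way case in \cite{seiller-goic} and should transfer, but the combinatorial bookkeeping, in particular verifying that $\ListTypeOW$ genuinely rules out backward moves, is new and must be carried out afresh.
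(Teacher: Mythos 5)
Your proposal follows the same route as the paper: the paper's entire argument for this variant is the observation that the two-way capability resides in the word representation rather than in the programs, so that replacing $\ListTypeBin$ by the one-way type $\ListTypeOW$ and rerunning the simulation/adequacy argument of \autoref{th1} yields the result, which is exactly the structure you lay out. Your additional detail on both inclusions (encoding transitions by translations $\tau_k$ and permutations $s_k$, and reading an $i$-head one-way automaton back off an arbitrary deterministic graphing in $\microcosm{m}_i$) is consistent with the worked examples and with the cited development, so the proposal is correct and essentially identical in approach.
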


\begin{theorem}
The model $\nmodel{[0,1]}{m_{\textnormal{i}}}$~$(i\in\naturalN\cup\{\infty\})$ characterises the class \ccownfa{i} w.r.t. the test $\testnl$ and characterises the class \ccowconfa{i} w.r.t. the test $\testconl$.
\end{theorem}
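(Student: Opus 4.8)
\emph{Proof idea.} The statement is the one-way, non-deterministic analogue of \autoref{ndth1}, and the plan is to obtain it by merging the two ingredients already in place: the non-deterministic graphing model together with the two tests $\testnl$ and $\testconl$ that were used for \autoref{ndth1}, and the alternative one-way word representation $\ListTypeOW$ introduced just above for the deterministic one-way characterisation. As in all these results, I would prove the two inclusions separately. \emph{Completeness}: every language recognised by a one-way non-deterministic $i$-head automaton is $[P]_{\testnl}$ (resp. $[P]_{\testconl}$) for some $P\in\oc\ListTypeOW\multimap\NBool$. \emph{Soundness}: for every such $P$, the set $[P]_{\testnl}$ lies in $\ccownfa{i}$ and $[P]_{\testconl}$ lies in $\ccowconfa{i}$. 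The case $i=\infty$ reduces to the finite cases in both directions: a graphing has only countably many edges whose realisations involve only finitely many generators of $\microcosm{m}_{\infty}=\cup_{j\in\naturalN}\microcosm{m}_{j}$, so any fixed $P$ already lives in some $\nmodel{[0,1]}{m_j}$; conversely any one-way automaton uses a fixed finite number of heads.

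For completeness, given a one-way non-deterministic $i$-head automaton $M$ I would build a graphing $P_M$ by the standard translation of a finite control into a graphing: vertices are the pairs (state, position of the active head), laid out over the address space of the Hilbert cube exactly as in the figures above; each transition of $M$ becomes an edge of $P_M$ whose realisation is a translation (moving the active head one cell forward), possibly precomposed with one of the permutations $s_j$ (to switch the active head), and whose weight is $1$. Because $M$ is one-way, every such edge moves a head only \emph{forward}, which is precisely what the representation $\ListTypeOW$ supports, so $P_M$ is typeable in $\oc\ListTypeOW\multimap\NBool$ and not merely in $\oc\ListTypeBin\multimap\NBool$. Such a $P_M$ meets the first clause of the definition of non-deterministic graphing (all weights in $\{0,1\}$), hence is a genuine element of $\nmodel{[0,1]}{m_i}$. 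One then checks, exactly as for \autoref{ndth1}, that the alternating paths of $P_M\plug\oc L_{\word{w}}$ are in bijection with the runs of $M$ on $\word{w}$, and that $P_M\plug\oc L_{\word{w}}\poll\testnl$ (resp. $\poll\testconl$) holds iff some run accepts (resp. every run accepts). Hence $[P_M]_{\testnl}$ is the language of $M$, which covers $\ccownfa{i}$; applying the $\testconl$ construction to an automaton recognising the complementary language covers $\ccowconfa{i}$.

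For soundness, given an arbitrary $P\in\oc\ListTypeOW\multimap\NBool$ in $\nmodel{[0,1]}{m_i}$, I would read off a one-way non-deterministic $i$-head automaton $M_P$: the finitely many vertices of $P$ outside the word interface form the state set; the edges, realised by the generators of $\microcosm{m}_i$, act on at most $i$ heads; the $\{0,1\}$-valued weights give the (possibly non-deterministic) transition relation; and the fact that $P$ types in $\oc\ListTypeOW\multimap\NBool$ forces every interaction with the word to be a forward move, hence $M_P$ is one-way. By \autoref{mainthm} the execution $P\plug\oc L_{\word{w}}$ is again a graphing whose alternating paths compute the runs of $M_P$ on $\word{w}$, and — as for \autoref{ndth1} — orthogonality to $\testnl$ (resp. $\testconl$) translates into "$M_P$ has an accepting run on $\word{w}$" (resp. "every run of $M_P$ on $\word{w}$ is accepting"), so $[P]_{\testnl}\in\ccownfa{i}$ and $[P]_{\testconl}\in\ccowconfa{i}$.

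The main obstacle is the one that already makes the one-way deterministic theorem non-trivial: showing that the alternative representation $\ListTypeOW$ genuinely constrains the geometry to forward moves while remaining fully compatible with the model machinery — that $\ListTypeOW$ is an honest type closed under the relevant operations, with $\oc L_{\word{w}}$ a bona fide element, that execution against it stays inside the non-deterministic model, and that the measurement and orthogonality underlying $\testnl$ and $\testconl$ still behave as expected on the resulting sets of alternating paths. In particular, proving that orthogonality to $\testnl$ is \emph{exactly} existential acceptance and orthogonality to $\testconl$ \emph{exactly} universal acceptance — neither coarser nor finer — is where the real work lies, and it is the one step that must be redone here rather than imported verbatim from the two-way non-deterministic case.
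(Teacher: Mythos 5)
The paper itself contains no proof of this statement: it is presented as an immediate one-way analogue of \autoref{ndth1}, justified only by the remark that the two-way capability of the automata ``does not really find its source in the programs $P$ \dots but in the representation of words,'' with all details deferred to the cited companion work. Your overall strategy --- reuse the non-deterministic model and the tests $\testnl$, $\testconl$ from \autoref{ndth1}, and replace the word type by the one-way representation $\ListTypeOW$ --- is exactly the route the paper indicates, and your identification of the crux (that orthogonality to $\testnl$ must be \emph{exactly} existential acceptance and orthogonality to $\testconl$ \emph{exactly} universal acceptance) is the right one.

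There is, however, one step in your soundness direction that would fail as written. You propose to read off the automaton $M_P$ by taking ``the finitely many vertices of $P$ outside the word interface'' as the state set and the edges as the transition relation. An arbitrary element $P$ of type $\oc\ListTypeOW\multimap\NBool$ in $\nmodel{[0,1]}{m_i}$ is a countable family of edges whose sources and targets are arbitrary measurable subsets of $\integerN\times[0,1]^{\naturalN}$; nothing in the definition of the type or of non-deterministic graphings gives you finitely many vertices, nor a finite transition relation, nor even that the edges respect the ``address'' decomposition of the Hilbert cube used to encode head positions. The soundness argument in this line of work does not extract an automaton from the syntax of $P$; it analyses the \emph{interaction} of $P$ with the (highly structured) representations $\oc L_{\word{w}}$ and shows that the resulting alternating paths can be simulated by an $i$-head automaton --- roughly, that only finitely many ``behaviours'' of $P$ are distinguishable by word representations, because the realisations all lie in the monoid generated by translations and the permutations $s_2,\dots,s_i$. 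Without an argument of this kind (or some normalisation lemma reducing $P$ to a finite ``automaton-like'' graphing), the passage from an arbitrary $P$ to $M_P$ is a genuine gap, and it is in fact where most of the work in the soundness half lives --- at least as much as in the analysis of the tests that you single out at the end.
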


\begin{theorem}
The model $\pmodel{[0,1]}{m_{\textnormal{i}}}$~$(i\in\naturalN\cup\{\infty\})$ characterises the class \ccowpfa{i} w.r.t. the test $\testprob$.
\end{theorem}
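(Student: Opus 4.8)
The plan is to mirror, almost line for line, the proof of \autoref{pth1}, replacing the two-way word type $\ListTypeBin$ by the one-way type $\ListTypeOW$ and the two-way representations $\oc L_{\word{w}}$ by their one-way counterparts $\oc L_{\word{w}}^{\textnormal{ow}}$, and then to observe that two-way behaviour enters the interaction $P\plug\oc L_{\word{w}}$ \emph{only} through the representation of words. As in all the preceding characterisations, the statement splits into two inclusions: a \emph{completeness} direction, simulating every one-way probabilistic multihead automaton with at most $i$ heads by a probabilistic graphing in $\pmodel{[0,1]}{m_{i}}$; and a \emph{soundness} direction, extracting from every $P\in\oc\ListTypeOW\multimap\NBool$ a one-way probabilistic $i$-head automaton recognising $[P]_{\testprob}$. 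As for the one-way deterministic and non-deterministic analogues, one expects the weight monoid to be $[0,1]$ rather than $\{1\}$ because the one-way word representation itself uses weights; this is automatically available in the probabilistic setting.

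For completeness, given a one-way probabilistic automaton $M$ with state set $Q$ and at most $i$ heads, I would build a graphing $P_{M}$ whose carrier consists of copies of the vertices of the word representation indexed by the finite control $Q$, and whose edges encode the transitions of $M$: a transition that reads a symbol, advances one head and updates the state is realised by a translation $\tau_{k}$ (moving the active head forward), possibly post-composed with a permutation $s_{j}$ ($2\leqslant j\leqslant i$) swapping the active head, so that $P_{M}$ is a graphing in $\microcosm{m}_{i}$; the probability of the transition becomes the $[0,1]$-weight of the corresponding edge. One then checks that $P_{M}$ is a probabilistic graphing, since the outgoing weights at any point sum to at most $1$ by stochasticity of $M$, and that $P_{M}$ lies in the type $\oc\ListTypeOW\multimap\NBool$. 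Finally one verifies that the probabilistic measurement $\meas{\cdot,\cdot}$ applied to $P_{M}\plug\oc L_{\word{w}}^{\textnormal{ow}}$ computes exactly the acceptance probability of $M$ on $\word{w}$ --- a weighted sum over accepting runs being identified with a weighted sum over alternating paths --- so that orthogonality to $\testprob$ holds precisely when that probability crosses the unbounded-error threshold, i.e. $[P_{M}]_{\testprob}=L(M)$.

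For soundness, given $P\in\oc\ListTypeOW\multimap\NBool$, I would analyse the alternating paths and cycles of $P\plug\oc L_{\word{w}}^{\textnormal{ow}}$. The geometric point --- and the place where one-way-ness really enters --- is that in $\ListTypeOW$ the edges of the word representation link each symbol only to its successor, with no predecessor pointer, in contrast with the two-way representation of \autoref{twintegerrep}; hence any alternating path entering the word side can only be continued forward, so the head carried by the interaction can never move left. The vertices of $P$ visited along such paths, together with the datum of which copy of the Hilbert cube is currently active, serve as the finite control of an automaton; the microcosm $\microcosm{m}_{i}$ bounds the number of heads by $i$; and the weights in $[0,1]$, multiplied along paths and summed, yield exactly an acceptance probability against which $\testprob$ thresholds. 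One must also check that this automaton can be chosen uniformly in $\word{w}$, which holds because $P$ is a fixed finite object and the only $\word{w}$-dependence lies in $\oc L_{\word{w}}^{\textnormal{ow}}$. The case $i=\infty$ then follows by taking unions, using $\microcosm{m}_{\infty}=\cup_{i\in\naturalN}\microcosm{m}_{i}$ and $\ccowpfa{\infty}=\cup_{i\in\naturalN}\ccowpfa{i}$.

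The hard part will be the soundness direction: one must prove rigorously that \emph{no} element of $\oc\ListTypeOW\multimap\NBool$, however it exploits the microcosm and the weights, can simulate a backward move on the input --- that is, that $\ListTypeOW$ is \enquote{directionally read-once} as a type and not merely in the behaviour of a single element --- while simultaneously controlling the probabilistic measurement, ensuring that the series of weighted paths converges and coincides with a genuine automaton acceptance probability so that the threshold test $\testprob$ behaves as intended. Once this is in place, the remaining steps (closure of probabilistic graphings under execution, the typing verification, and the passage to $i=\infty$) are routine adaptations of the arguments already used for \autoref{pth1} and its deterministic and non-deterministic one-way companions.
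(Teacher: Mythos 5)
Your proposal takes essentially the same route as the paper: the paper gives no detailed proof here, deferring to \cite{seiller-goic}, but its stated justification is precisely your central observation --- that two-way behaviour resides in the representation of words rather than in the programs $P$, so that replacing $\ListTypeBin$ by the one-way type $\ListTypeOW$ and rerunning the two-way probabilistic argument of \autoref{pth1} yields the result. Your identification of the soundness direction (showing that no element of the type can simulate a backward move against the one-way representation while the probabilistic measurement still computes a genuine acceptance probability) as the non-routine step, and your remark that the weight monoid must be $[0,1]$ because the one-way representation itself carries weights, are both consistent with the paper's statements.
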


The second modification is the extension of automata with a \emph{pushdown stack}. Work in this direction has recently lead to a characterisation of \Ptime in a more syntactical setting \cite{lics-ptime}. Even though the syntactical approach just mentioned could very well be transported to our setting (it was shown \cite{seiller-goig} that elements of the \emph{resolution algebra} can be represented as graphings), this would lead to a characterisation based on a microcosm containing non-measure-preserving maps. Even though non-measure-preserving maps are allowed in the general setting of graphings \cite{seiller-goig}, the use of measure-preserving microcosm is more interesting in view of the possible use of mathematical invariants such as $\ell^{2}$-Betti numbers discussed in the next section. One can find such a measure-preserving microcosm $\mathfrak{p}$ which leads to a characterisation of \Ptime in both the model of deterministic graphings and the model of non-deterministic graphings because non-determinism do not add any expressivity to the model of deterministic two-way multihead automata with a pushdown stack.

The last modification is the consideration of \emph{quantum graphings}, i.e. graphings computing with complex numbers. This is still a work in progress, but we believe that one can define variants of quantum graphings corresponding to measure-once or measure-many quantum automata, leading to several other characterisations.

\section{Complexity Constraints and Graphing Algebras}

It is our belief also that important contributions can be made by using well-established mathematical techniques, tools and invariants from operator algebras and dynamical systems for addressing open problems in complexity. We now show our approach relates to well-known mathematical notions and how this enables new techniques through the computation of invariants such as $\ell^{2}$-Betti numbers. 

\subsection{Compilations}


The results exposed in the previous section lead us to conjecture a correspondence between complexity classes and microcosms, a conjecture that we are now able to state precisely. 

It may appear that the microcosms correspond to complexity \emph{constraints}: for instance $\microcosm{m}_{\infty}$ captures the constraint of \enquote{logarithmic space computation} uniformly, i.e. the same microcosm corresponds to the same constraint for several computational paradigms (sequential, probabilistic). This correspondence, if it holds, can only be partial since, as we already explained, an extension of the microcosm $\microcosm{m}_{\infty}$ which characterise the class \Ptime of polynomial time predicates do not characterise \NPtime in the non-deterministic model. As a consequence, we know that all microcosms do not characterise uniformly a complexity constraint. 

We will not discuss this complex question of whether complexity \emph{constraints} can be described uniformly by a microcosm. The mere question of deciding \emph{what is} a complexity constraint seems as difficult to answer as the question of deciding what is the right notion of algorithm. We therefore consider the already difficult question of a possible (partial) correspondence between microcosms and hierarchies of complexity classes. To formalise this, let us fix once and for all a monoid $\Omega$, a type of graphings -- e.g. probabilistic -- and a test $\testfont{T}$. In the following, we will denote by $\mathtt{C}[\microcosm{m}]$ the set of languages characterised by the type $\ListType\Rightarrow\NBool$ in the chosen model $\anymodel{\Omega}{m}$ w.r.t. the test $\testfont{T}$. 



We now consider the following natural notion of \emph{compilation}. 

\begin{definition}
A measurable map $m:\measured{X}\rightarrow\measured{X}$ is \emph{compilable} in a set of measurable maps $N$ when there exists a finite partition $X_{1},\dots,X_{k}$ of $\measured{X}$ and elements $n_{1},\dots, n_{k}\in N$ such that $m\restr{X_{i}}\aeq (n_{i})\restr{X_{i}}$.
\end{definition}

\begin{notation} 
We denoted by $m\prec_{\textnormal{c}} N$ the fact that $m$ is compilable in $N$. This naturally extends to a preorder on sets of measurable maps that we abusively denote by $\prec_{\textnormal{c}}$ and defined as $M\prec_{\textnormal{c}} N$ if and only if $m\prec_{\textnormal{c}} N$ for all $m\in M$.
\end{notation}

\begin{definition}
We define the equivalence relation on microcosms: 
\[\microcosm{m}\sim_{\textnormal{c}}\microcosm{n} \Leftrightarrow (\microcosm{m}\prec_{\textnormal{c}}\microcosm{n}\wedge \microcosm{n}\prec_{\textnormal{c}}\microcosm{m})\]
\end{definition}


One can then easily show the following result, stating that -- everything else being equal -- two equivalent microcosms give rise to the same complexity classes.

\begin{theorem}\label{thm}
If $\microcosm{m}\prec_{\textnormal{c}}\microcosm{n}$, then $\mathtt{C}[\microcosm{m}]\subset\mathtt{C}[\microcosm{n}]$.
\end{theorem}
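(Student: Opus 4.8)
The plan is to show that every program in $\anymodel{\Omega}{m}$ can be \emph{compiled} into a program of $\anymodel{\Omega}{n}$ performing the very same computations, the key tool being the refinement-invariance of execution and measurement in the interaction graphs construction. Recall that a \emph{refinement} of a graphing $F$ is a graphing $F^{\sharp}$ obtained by subdividing each edge $e$ of $F$ along a measurable partition of its source $S_{e}^{F}$ and, on each piece, replacing the realiser $\phi_{e}^{F}$ by a measurable map agreeing with it almost everywhere there. Since execution and measurement depend only on the (measured) sets of alternating paths and cycles, and these are left unchanged by such a subdivision, $F$ and $F^{\sharp}$ define the same element of the model: $F^{\sharp}\plug G=F\plug G$ and $\meas{F^{\sharp},G}=\meas{F,G}$ for every graphing $G$. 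In particular $F$ and $F^{\sharp}$ belong to the same $\poll$-biorthogonal-closed types and, with respect to any test, accept the same words.

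The second step uses the hypothesis $\microcosm{m}\prec_{\textnormal{c}}\microcosm{n}$ to turn any graphing $F$ in $\microcosm{m}$ into a refinement $F^{\sharp}$ that is a graphing in $\microcosm{n}$. Each edge $e$ of $F$ is realised by (the restriction of) some $\psi_{e}\in\microcosm{m}$; compilability yields a finite partition $X^{e}_{1},\dots,X^{e}_{k_{e}}$ of $\measured{X}$ and maps $n^{e}_{1},\dots,n^{e}_{k_{e}}\in\microcosm{n}$ with $\psi_{e}\restr{X^{e}_{i}}\aeq (n^{e}_{i})\restr{X^{e}_{i}}$. Replacing the edge $e$ by the edges of source $S^{F}_{e}\cap X^{e}_{i}$, weight $\omega^{F}_{e}$ and realiser $n^{e}_{i}$ produces the desired $F^{\sharp}$, which is a graphing in $\microcosm{n}$ by construction. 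Applying this in particular to the fixed word representations $\oc L_{\word{w}}$ and to the test $\testfont{T}$ --- which are graphings in $\microcosm{m}$, hence refine into $\microcosm{n}$ --- makes all the data needed to speak of acceptance available inside $\anymodel{\Omega}{n}$, and by the refinement lemma they define there the same languages as in $\anymodel{\Omega}{m}$.

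Combining the two steps: given $L\in\mathtt{C}[\microcosm{m}]$, choose $P$ of type $\oc\ListTypeBin\multimap\NBool$ in $\anymodel{\Omega}{m}$ with $[P]_{\testfont{T}}=L$, form $P^{\sharp}$ in $\microcosm{n}$ as above, and note that $P=P^{\sharp}$ as elements of the smallest common extension $\model{\Omega}{m+n}$ (whose embeddings of $\anymodel{\Omega}{m}$ and of $\anymodel{\Omega}{n}$ are compatible with the connectives involved in $\oc\ListTypeBin\multimap\NBool$); hence $P^{\sharp}$ has this type in $\anymodel{\Omega}{n}$, while refinement-invariance gives $P^{\sharp}\plug\oc L_{\word{w}}=P\plug\oc L_{\word{w}}$, so that $[P^{\sharp}]_{\testfont{T}}=[P]_{\testfont{T}}=L$ and $L\in\mathtt{C}[\microcosm{n}]$. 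I expect the real work to be concentrated in the refinement lemma in the precise form used here --- subdivision \emph{together with} replacement of realisers by almost-everywhere-equal maps --- and in the accompanying verification that this identification persists in $\model{\Omega}{m+n}$, so that type membership transfers cleanly; the combinatorial construction of $P^{\sharp}$ from the compilation partitions and the bookkeeping on carriers and weights are routine.
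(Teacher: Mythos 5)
Your proposal matches the paper's own (sketched) proof: both replace each edge of a graphing in $\microcosm{m}$ by the finite family of edges given by the compilation partition, obtaining a graphing in $\microcosm{n}$, and both conclude by observing that the resulting paths and cycles against $W$ and $\testfont{T}$ are refinements of the original ones, so that orthogonality --- hence acceptance and the language defined --- is unchanged. Your additional bookkeeping about type membership transferring through the common extension $\model{\Omega}{m+n}$ is a reasonable way to make precise what the paper leaves implicit.
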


\begin{proof}[Proof (sketch).]
Given a graphing $G$ in $\microcosm{m}$, we show how one can define a graphing $\bar{G}$ in $\microcosm{n}$ such that for all representation of a word $W\in\ListTypeBin$, $G\plug W\poll \testfont{T}$ if and only if $\bar{G}\plug W\poll \testfont{T}$. The new graphing $\bar{G}$ is defined by inductively replacing edges in $G$ by a finite family of edges. The reason why the resulting graphing $\bar{G}$ satisfy the wanted property is comes from the fact paths between $G$ and $W$ are compilable in the set of paths between $\bar{G}$ and $W$. Consequently, cycles between $\bar{G}\plug W$ (the paths) and $ \testfont{T}$ used to define the orthogonality turn out to be refinements of cycles between $G\plug W$ and $\testfont{T}$, which implies that $G\plug W\poll \testfont{T}$ if and only if $\bar{G}\plug W\poll \testfont{T}$.
\end{proof}

We conjecture that the converse of Theorem \ref{thm} holds, i.e. if $\microcosm{m}\not\sim_{\textnormal{c}}\microcosm{n}$ are not equivalent, then $\mathtt{C}[\microcosm{m}]\neq\mathtt{C}[\microcosm{n}]$ are distinct.

\begin{conjecture}\label{conjecture}
If $\microcosm{m}\not\sim_{\textnormal{c}}\microcosm{n}$, then $\mathtt{C}[\microcosm{m}]\neq\mathtt{C}[\microcosm{n}]$.
\end{conjecture}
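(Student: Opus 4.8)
A plausible path to Conjecture~\ref{conjecture} is the following. By Theorem~\ref{thm}, $\microcosm{m}\sim_{\textnormal{c}}\microcosm{n}$ already implies $\mathtt{C}[\microcosm{m}]=\mathtt{C}[\microcosm{n}]$, so what the conjecture really asserts is that $\microcosm{m}\mapsto\mathtt{C}[\microcosm{m}]$ is \emph{faithful} for $\sim_{\textnormal{c}}$; since $\prec_{\textnormal{c}}$ unfolds as a quantification over single maps, it suffices to prove a \emph{separation lemma}: for every non-singular transformation $\phi$ and every microcosm $\microcosm{n}$ with $\phi\not\prec_{\textnormal{c}}\microcosm{n}$ there is a language $L_{\phi}$ such that $L_{\phi}\in\mathtt{C}[\microcosm{m}]$ whenever $\microcosm{m}$ contains $\phi$ (and the translations used to represent words), while $L_{\phi}\notin\mathtt{C}[\microcosm{n}]$. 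This suffices: if $\microcosm{m}\not\sim_{\textnormal{c}}\microcosm{n}$, say $\microcosm{m}\not\prec_{\textnormal{c}}\microcosm{n}$, then some $\phi\in\microcosm{m}$ is not compilable in $\microcosm{n}$, and $L_{\phi}$ separates $\mathtt{C}[\microcosm{m}]$ from $\mathtt{C}[\microcosm{n}]$.

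The easy side of the lemma is essentially a construction. Fix a small \enquote{probe}: a program $P_{\phi}\in\oc\ListTypeBin\multimap\NBool$ and a test $\testfont{T}_{\phi}$ whose only non-trivial realisers are $\phi$ and finitely many translations, arranged so that on input $\oc L_{\word{w}}$ the alternating cycles against $\testfont{T}_{\phi}$ that decide orthogonality are forced through the edge realised by $\phi$; the measurement $\meas{\cdot,\cdot}$ of those cycles then records $\phi$ itself. Put $L_{\phi}=[P_{\phi}]_{\testfont{T}_{\phi}}$. By construction $L_{\phi}\in\mathtt{C}[\microcosm{m}]$ as soon as $\microcosm{m}$ contains the maps used, and it remains so for every larger microcosm by the compilation argument already used in the proof of Theorem~\ref{thm}.

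The hard side, and the expected main obstacle, is $L_{\phi}\notin\mathtt{C}[\microcosm{n}]$: no $\microcosm{n}$-graphing $G$ and test $\testfont{T}$ satisfy $[G]_{\testfont{T}}=L_{\phi}$. This is a genuine separation statement --- an $\microcosm{n}$-graphing may reproduce the probe's input/output behaviour through entirely different dynamics, on unrelated carriers and via refinements, so no purely syntactic argument can work --- and I would attack it with invariants. When the microcosms are measure-preserving, one passes through the Feldman--Moore construction~\cite{FeldmanMoore1} from the sub-microcosm actually used by $G$ to a couple $(\vn{A},\vn{N})$ and its measured groupoid, and attaches a numerical invariant such as the cost or the $\ell^{2}$-Betti numbers of the generated equivalence relation. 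Two ingredients are then needed: \emph{(i)} that such an invariant depends on a microcosm only through its $\sim_{\textnormal{c}}$-class, which is plausible because compilation merely replaces a map by a finite measurable patchwork of maps already present, a tame operation on the groupoid; and \emph{(ii)} a \emph{dictionary} bounding $\mathtt{C}[\microcosm{n}]$ from above in terms of the invariant of $\microcosm{n}$, refining the explicit analyses behind Theorems~\ref{th1},~\ref{ndth1} and~\ref{pth1}. With \emph{(i)} and \emph{(ii)} available, one chooses $\phi$ so that adjoining it to $\microcosm{n}$ strictly changes the invariant --- which is precisely what $\phi\not\prec_{\textnormal{c}}\microcosm{n}$ gives once \emph{(i)} holds --- and concludes that $L_{\phi}$ lies outside $\mathtt{C}[\microcosm{n}]$.

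Ingredient \emph{(ii)} carries the real weight, and it is essentially the \emph{classification of graphing algebras} this section alludes to: one wants a family of invariants at once complete for $\sim_{\textnormal{c}}$ and controllable on the realisers that can occur in a word-predicate. For the concrete tower $\microcosm{m}_{1}\subset\microcosm{m}_{2}\subset\dots\subset\microcosm{m}_{\infty}$ this already looks within reach, recovering the strict multihead hierarchy $\cctwdfa{i}\subsetneq\cctwdfa{i+1}$ on the realisability side, and a compactness argument should reduce the general case to finitely generated microcosms. Finally, since the invariants involved --- cost, $\ell^{2}$-Betti numbers --- are not constructive predicates on truth-tables and carry no evident largeness, this route should escape the natural-proofs barrier~\cite{naturalproofs}, in line with the expectation, voiced above, that the framework avoids the usual obstructions to separation.
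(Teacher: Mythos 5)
This statement is a \emph{conjecture} in the paper: the author states it, offers no proof, and only presents evidence of consistency (the separation $\microcosm{m}_{i}\not\sim_{\textnormal{c}}\microcosm{m}_{j}$ via the cost of the associated Borel equivalence relations, matching the known strictness of the multihead hierarchy). Your proposal is therefore being measured against an open problem, and it does not close it: it is a research programme whose two load-bearing ingredients are exactly the unproven content. Ingredient \emph{(ii)} --- a dictionary bounding $\mathtt{C}[\microcosm{n}]$ from above by an invariant of $\microcosm{n}$ --- is, for your chosen $L_{\phi}$, precisely the assertion $L_{\phi}\notin\mathtt{C}[\microcosm{n}]$ that the conjecture demands; restating it as a ``separation lemma'' for a single map $\phi$ is a clean reduction but transfers all of the difficulty intact. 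The ``easy side'' is also not free: membership of $L_{\phi}$ in $\mathtt{C}[\microcosm{m}]$ requires that your probe actually lies in the type $\oc\ListTypeBin\multimap\NBool$, i.e.\ interacts correctly with \emph{all} word representations, which is not automatic for an arbitrary non-singular $\phi$.

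There is moreover a concrete logical error in the final step. Ingredient \emph{(i)} (invariance of cost or $\ell^{2}$-Betti numbers under $\sim_{\textnormal{c}}$) yields only the implication ``different invariant $\Rightarrow$ inequivalent microcosms''; you use the converse when you write that $\phi\not\prec_{\textnormal{c}}\microcosm{n}$ forces the invariant to change upon adjoining $\phi$. That converse is completeness of the invariant, which is false for the invariants you name: the paper itself points out that $\microcosm{m}_{\infty}$ and all its amenable extensions have cost $1$, so strictly larger (hence inequivalent) microcosms need not be detected by cost, and no complete family of invariants for $\sim_{\textnormal{c}}$ is known --- especially for the non-measure-preserving, non-group microcosms (monoid actions) that the paper insists are the relevant ones, where even the definition of $\ell^{2}$-Betti numbers is open. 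So the proposal is a reasonable articulation of the strategy the paper itself sketches (Feldman--Moore, cost, the tower $\microcosm{m}_{i}$), but it should not be presented as a proof, and its key quantifier ``choose $\phi$ so that the invariant changes'' cannot currently be discharged.
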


This would provide a partial equivalence between the problem of classifying complexity classes and that of classifying microcosms. As we will see, the first results we obtained are coherent with this conjecture. We explain below how the notion of equivalence $\sim_{\textnormal{c}}$ relates (in specific cases) to well-studied notions in ergodic theory and von Neumann algebras. This fact could allow, in case the conjecture is proven, for the use of mathematical invariants as a proof method for obtaining separation results. 

\subsection{Measurable Equivalence Relations}

Let us now explain how the microcosms, used to characterise complexity classes in the work described above, can (in some cases) generate a \emph{measured equivalence relation}. We first recall the basics about this notion.

The notion of graphing considered in our work is in fact a generalisation of the notion considered in ergodic theory and operator algebras. The usual definition of graphings do not comprise weights. Moreover, those are usually considered built from measure-preserving maps only. We will refer to these weightless graphings as \emph{simple graphings}. 

\begin{definition}
A \emph{weightless graphing} $\phi$ is a $\{1\}$-weighted graphing in the macrocosm $\microcosm{M}$ over $\measured{X}$.

A \emph{simple graphing} $\phi$ is a $\{1\}$-weighted graphing in the microcosm $\microcosm{mp}$ of measure-preserving Borel isomorphisms over $\measured{X}$.
\end{definition}

\begin{remark}
This definition of simple graphings is equivalent to the usual definition \cite{gaboriaucost} as a family of partial measure-preserving Borel isomorphisms $\phi_{f}:S^{\phi}_{f}\rightarrow T^{\phi}_{f}$ where $S^{\phi}_{f}, T^{\phi}_{f}$ are Borel subsets of $\measured{X}$.
\end{remark}

Given a simple graphing, one can define an equivalence relation $\mathcal{R}(\phi)$ by considering the transitive closure of the relation 
\[ \{(x,\phi_{f}(x))~|~ f\in E^{\phi}, x\in S^{\phi}_{f}\} \]
This relation can be described directly in terms of paths. We define for this the set of $\phi$-words as the set:
\[ \words{\phi}=\{f_{1}^{\epsilon_{1}}f_{2}^{\epsilon_{2}}\dots f_{k}^{\epsilon_{k}}~|~ k\geqslant 0, \forall 1\leqslant i\leqslant k, f_{i}\in E^{\phi}, \epsilon_{i}\in\{-1,1\}\}  \]
Each $\phi$-word $\pi=f_{1}^{\epsilon_{1}}f_{2}^{\epsilon_{2}}\dots f_{k}^{\epsilon_{k}}$ defines a partial measure-preserving Borel isomorphism $\phi_{\pi}$ defined as the composite $\phi_{f_{k}}^{\epsilon_{k}}\circ \phi_{f_{k-1}}^{\epsilon_{k-1}}\circ \dots\circ \phi_{f_{1}}^{\epsilon_{1}}$ from its domain $S^{\phi}_{\pi}$ to its codomain $T^{\phi}_{\pi}$.

\begin{definition}
Let $\phi$ be a simple graphing. We define the equivalence relation $\mathcal{R}(\phi)$ as
\[ \mathcal{R}(\phi)=\{(x,y)~|~\exists \pi\in\words{\phi}, x\in S^{\phi}_{\pi}, y=\phi_{pi}(x) \]
\end{definition}

One can then show that this is a \emph{measurable equivalence relation}, or Borel equivalence relation, i.e. an equivalence relation on $X$ such that:
\begin{enumerate}[noitemsep,nolistsep]
\item \label{Boreleq1} $\mathcal{R}(\phi)$ is a Borel subset of $\measured{X}\times\measured{X}$;
\item \label{Boreleq2} the set of equivalence classes of $\mathcal{R}(\phi)$ is countable;
\item \label{Boreleq3} every partial isomorphism $f:A\rightarrow B$ whose graph $\{(a,f(a))~|~a\in A\}$ is a subset of $\mathcal{R}(\phi)$ is measure-preserving.
\end{enumerate}

\begin{remark}
A microcosm $\microcosm{m}$ is a weightless graphing. Thus every microcosm $\microcosm{m}$ included in the microcosm $\microcosm{mp}$ of measure-preserving Borel automorphisms gives rise to a Borel equivalence relation $\mathcal{R}(\microcosm{m})$.
\end{remark}

\begin{proposition}\label{eqboreleq}
Let $\microcosm{m}$ and $\microcosm{n}$ be equivalent microcosms. Then $\mathcal{R}(\microcosm{m})=\mathcal{R}(\microcosm{n})$.
\end{proposition}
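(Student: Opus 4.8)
The plan is to show both inclusions $\mathcal{R}(\microcosm{m})\subseteq\mathcal{R}(\microcosm{n})$ and $\mathcal{R}(\microcosm{n})\subseteq\mathcal{R}(\microcosm{m})$; by the symmetry of the equivalence $\sim_{\textnormal{c}}$ it suffices to prove that $\microcosm{m}\prec_{\textnormal{c}}\microcosm{n}$ implies $\mathcal{R}(\microcosm{m})\subseteq\mathcal{R}(\microcosm{n})$, and the other inclusion follows by exchanging the roles of $\microcosm{m}$ and $\microcosm{n}$. Recall that $\mathcal{R}(\microcosm{m})$ is generated (as a measurable equivalence relation) by the pairs $(x,\phi_f(x))$ for $\phi_f$ ranging over the elements of the microcosm $\microcosm{m}$ viewed as a weightless graphing, i.e. $\mathcal{R}(\microcosm{m})$ is the transitive closure of $\bigcup_{m\in\microcosm{m}}\mathrm{graph}(m)$ up to null sets. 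So it is enough to show that for each $m\in\microcosm{m}$, the graph of $m$ is (up to a null set) contained in $\mathcal{R}(\microcosm{n})$.

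First I would fix $m\in\microcosm{m}$ and use $m\prec_{\textnormal{c}}\microcosm{n}$: there is a finite partition $X_1,\dots,X_k$ of $\measured{X}$ and elements $n_1,\dots,n_k\in\microcosm{n}$ with $m\restr{X_i}\aeq(n_i)\restr{X_i}$. Then for almost every $x$, writing $x\in X_i$, we have $m(x)=n_i(x)$, hence $(x,m(x))=(x,n_i(x))\in\mathrm{graph}(n_i)\subseteq\mathcal{R}(\microcosm{n})$. Thus $\mathrm{graph}(m)$ agrees almost everywhere with a subset of $\mathcal{R}(\microcosm{n})$; since $\mathcal{R}(\microcosm{n})$ contains the diagonal and is a measurable equivalence relation (in particular it is not changed by modification on null sets in the relevant sense), we get $\mathrm{graph}(m)\subseteq\mathcal{R}(\microcosm{n})$ up to a null set. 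Taking the union over all $m\in\microcosm{m}$ and the transitive closure, and using that $\mathcal{R}(\microcosm{n})$ is already transitively closed, yields $\mathcal{R}(\microcosm{m})\subseteq\mathcal{R}(\microcosm{n})$.

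The main obstacle is bookkeeping the null-set exceptions: compilability only gives equality of $m$ and $n_i$ almost everywhere on $X_i$, so the identity $\mathrm{graph}(m)=\bigcup_i \mathrm{graph}(n_i\restr{X_i})$ holds only modulo a measure-zero set, and one must check that the finitely many such discrepancies (one per $m$, but countably many $m$ in general since a microcosm need only be closed under composition and may be infinite) accumulate to a null set, and that the resulting containments are still containments of measurable equivalence relations in the appropriate quotient. This is handled by noting that $\microcosm{m}$, as a graphing, has a countable edge set, so only countably many null sets are involved and their union is null; the transitive-closure step is then purely combinatorial. The remaining point — that modifying on a null set does not affect $\mathcal{R}(\microcosm{n})$ — is exactly condition \ref{Boreleq3} together with the standing convention that these relations are considered up to null sets, so no genuine difficulty arises there.
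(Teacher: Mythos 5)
Your argument is correct: the paper states Proposition~\ref{eqboreleq} without proof, and the proof it implicitly intends is exactly the one you give — reduce to showing that $\microcosm{m}\prec_{\textnormal{c}}\microcosm{n}$ forces $\mathcal{R}(\microcosm{m})\subseteq\mathcal{R}(\microcosm{n})$ by decomposing the graph of each $m\in\microcosm{m}$ along the compiling partition into pieces of graphs of elements of $\microcosm{n}$, then closing under symmetry and transitivity. Your handling of the null-set bookkeeping (countably many edges, hence a countable union of null sets, discarded via the standard convention that these relations are taken up to null sets) is the right way to dispose of the only delicate point, so nothing is missing.
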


Borel equivalence relations were extensively studied in several domains of mathematics, as such equivalence relations are induced by measurable group actions. Indeed, if $\alpha$ is a measure-preserving action of a (discrete countable) group $G$ on a (atom-free) standard Borel space $\measured{X}$, then one can define the Borel equivalence relation:
\[ \mathcal{R}(\alpha)=\{ (x,y)~|~ \exists g\in G, y=\alpha(g)(x) \} \]

Conversely, it was shown by Feldman and Moore \cite{FeldmanMoore1} that, given a measurable equivalence relation $\mathcal{R}$ on a standard Borel space $\measured{X}$, there exists a group $G$ of measure-preserving Borel automorphisms of $\measured{X}$ such that\footnote{In that specific case, we write $G$ the action defined as the inclusion $G\subset \textnormal{Aut}(\measured{X})$.} $\mathcal{R}=\mathcal{R}(G)$.

Trying to classify group actions, mathematicians have developed fine invariants to classify Borel equivalence relations. In particular, Gaboriau extended the notion of $\ell^{2}$-Betti numbers\footnote{A definition which is coherent with $\ell^{2}$-Betti number defined by Atiyah \cite{atiyahl2betti}, the later generalization to groups by Cheeger and Gromov \cite{gromov-l2} reformulated by Lück \cite{luckl2invariants}, and the generalization to von Neumann algebras by Connes and Shlyakhtenko \cite{connesl2}.}  \cite{gaboriaul2} to this setting, and also introduced the notion of \emph{cost} \cite{gaboriaucost} which can be understood as an approximation of the first $\ell^{2}$-Betti number. We here recall the definition of the latter.

\begin{definition}
The cost of a graphing $\phi$ is defined as the (possibly infinite) real number:
\[ \mathcal{C}(\phi)=\sum_{f\in E^{\phi}} \lambda(S^{\phi}_{f}) \]
The cost of a measurable equivalence relation $\mathcal{R}$ is then the infimum of the cost of simple graphings generating $\mathcal{R}$:
\[ \mathcal{C}(\mathcal{R})=\inf \{\mathcal{C}(\phi)~|~\phi\text{ simple graphing s.t. }\mathcal{R}=\mathcal{R}_{\phi}\} \]
The cost of a group $\Gamma$ is the infimum of the cost of the Borel equivalence relations $\mathcal{R}(\alpha)$ defined by free actions of $\Gamma$ on a standard probability space:
\[ \mathcal{C}(\Gamma)=\inf \{\mathcal{C}(\mathcal{R}(\alpha))~|~\alpha\text{ free action of $\Gamma$ onto $\measured{X}$ with $\mu(X)=1$}\} \]
\end{definition}

%

Let us refer to Gaboriau's work for a complete overview \cite{gaboriaucost}, and state the result that will be of use here. A result due to Levitt \cite{levitt_graphings} allows to compute the cost of finite groups, which are shown to be of fixed cost -- i.e. all free actions of finite groups have the same cost. Gaboriau also computed the (fixed) cost of infinite amenable groups. The following proposition states these two results simultaneously.

\begin{proposition}[Gaboriau \cite{gaboriaucost}]
If $\Gamma$ is an amenable group, then $\Gamma$ is has fixed cost $1-\frac{1}{\card{\Gamma}}$, where by convention $\frac{1}{\infty}=0$.
\end{proposition}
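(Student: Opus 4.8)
The plan is to treat separately the finite and the infinite case and, in each of them, to compute the cost of \emph{every} free measure-preserving action $\alpha$ of $\Gamma$ on a standard probability space $(X,\mu)$; since the value obtained will turn out to be $1-\tfrac{1}{\card\Gamma}$ independently of $\alpha$, this proves at once that $\Gamma$ has fixed cost and that $\mathcal{C}(\Gamma)=1-\tfrac1{\card\Gamma}$.

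Suppose first that $\card\Gamma=n<\infty$. For the upper bound, freeness of $\alpha$ provides a measurable fundamental domain $D$ with $\mu(D)=1/n$; enumerating $\Gamma=\{g_1=e,g_2,\dots,g_n\}$, I would take the simple graphing made of the $n-1$ partial measure-preserving isomorphisms $x\in D\mapsto g_i\cdot x$, $2\leqslant i\leqslant n$. It generates $\mathcal{R}(\alpha)$, since every point is $\mathcal{R}(\alpha)$-equivalent to the unique point of its class lying in $D$, and its cost is $(n-1)\,\mu(D)=1-\tfrac1n$. For the matching lower bound, let $\psi$ be any simple graphing with $\mathcal{R}(\psi)=\mathcal{R}(\alpha)$; its underlying graph restricted to any of the $n$-element classes is connected, hence carries at least $n-1$ edges. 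Counting each edge at its source, the cost $\mathcal{C}(\psi)=\int_X\mathrm{outdeg}_\psi(x)\,d\mu(x)$ equals $\tfrac1n$ times the integral of the number of $\psi$-edges contained in the class of $x$, which is at least $n-1$; hence $\mathcal{C}(\psi)\geqslant\tfrac{n-1}{n}$. So $\mathcal{C}(\mathcal{R}(\alpha))=1-\tfrac1n$, which is Levitt's computation.

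Suppose now that $\Gamma$ is infinite and amenable, so the target value is $1$. The lower bound $\mathcal{C}(\mathcal{R}(\alpha))\geqslant 1$ is the classical fact (Levitt) that every \emph{aperiodic} measurable equivalence relation has cost at least $1$ — here $\mathcal{R}(\alpha)$ is aperiodic because $\alpha$ is free and $\Gamma$ is infinite — and it is, morally, the finite-case averaging estimate with the class size sent to infinity. For the upper bound $\mathcal{C}(\mathcal{R}(\alpha))\leqslant 1$ I would use amenability through the Ornstein--Weiss theorem: the orbit equivalence relation of a free measure-preserving action of an amenable group is hyperfinite, and an aperiodic hyperfinite relation is, by Dye's theorem, the orbit relation $\mathcal{R}_T$ of a single measure-preserving automorphism $T$ of $X$; the one-edge graphing $\{(1,T\colon X\to X)\}$ then generates $\mathcal{R}(\alpha)$ with cost $\mu(X)=1$. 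One can also bypass Dye's theorem and construct, for every $\varepsilon>0$, a generating graphing of cost below $1+\varepsilon$ directly from an Ornstein--Weiss quasi-tiling of almost every orbit by F\o lner sets: a spanning tree inside each tile contributes relative density tending to $1$, while joining adjacent tiles and covering the untiled remainder costs $O(\varepsilon)$. Together the two bounds give $\mathcal{C}(\mathcal{R}(\alpha))=1=1-\tfrac1\infty$.

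The main obstacle is the upper bound in the amenable case: it genuinely relies on the Ornstein--Weiss quasi-tiling machinery (equivalently, on hyperfiniteness of amenable actions), which is the only non-elementary ingredient here. The remaining pieces — the two bounds in the finite case and the aperiodic lower bound — are all variations on a single spanning-tree/averaging estimate, and once one observes that the computation produced the \emph{same} value $1-\tfrac1{\card\Gamma}$ for an arbitrary free action, the passage to ``fixed cost'' and the identification with $\mathcal{C}(\Gamma)$ are immediate.
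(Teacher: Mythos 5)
The paper offers no proof of this proposition: it is imported verbatim from the literature, with the finite case attributed to Levitt and the infinite amenable case to Gaboriau, so there is nothing internal to compare your argument against. Your sketch correctly reconstructs the standard proofs from those sources: the fundamental-domain upper bound and the spanning-tree/averaging lower bound $\mathcal{C}(\psi)\geqslant\frac{n-1}{n}$ in the finite case, the aperiodicity lower bound $\mathcal{C}\geqslant 1$, and the Ornstein--Weiss/Dye route to a single generating automorphism of cost $1$ in the infinite amenable case; observing that the value is the same for every free action then gives fixed cost. The only ingredient you rightly flag as non-elementary is hyperfiniteness of free p.m.p.\ actions of amenable groups, and that is exactly where the real content lies in Gaboriau's treatment as well.
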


\subsection{Measurable Preorders, Homotopy Equivalence}

First, let us have a look at the measurable equivalence relations $\mathcal{R}(\microcosm{m}_{i})$ for $\microcosm{m}_{i}$ the microcosms defined in the previous section. Notice that $\mathcal{R}(\microcosm{m}_{i})=\integerN^{2}\times\mathcal{R}(\microcosm{s}_{i})$ where $\mathcal{R}(\microcosm{s}_{i})$ is the Borel equivalence relation on $[0,1]^{\naturalN}$ generated by the natural group action of $\mathfrak{G}_{i}$ -- the group of permutations over $i$ elements -- on the Hilbert cube $[0,1]^{\naturalN}$ (permutations act on the first $i$ copies of $[0,1]$). This action being free, the Hilbert cube being a standard probability space, and the group $\mathfrak{G}_{i}$ being of finite order, we have that $\mathcal{C}(\mathcal{R}(\microcosm{s}_{i}))=\mathcal{R}(\mathfrak{G}_{i})=1-\frac{1}{i!}$. This shows the following separation theorem for the microcosms $\microcosm{m}_{i}$.


\begin{theorem}
For all $1\leqslant i< j\leqslant \infty$, $\microcosm{m}_{i}\not\sim_{c}\microcosm{m}_{j}$.
\end{theorem}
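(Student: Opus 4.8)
The plan is to argue by contraposition: we assume $\microcosm{m}_{i}\sim_{\textnormal{c}}\microcosm{m}_{j}$ and deduce $i=j$. First I would invoke \autoref{eqboreleq} to turn the hypothesis $\microcosm{m}_{i}\sim_{\textnormal{c}}\microcosm{m}_{j}$ into the equality $\mathcal{R}(\microcosm{m}_{i})=\mathcal{R}(\microcosm{m}_{j})$ of Borel equivalence relations on $\integerN\times[0,1]^{\naturalN}$; this step requires observing that each $\microcosm{m}_{i}$ is built from measure-preserving maps (the translations $\tau_{k}$ preserve the counting measure on $\integerN$ and the coordinate transpositions $s_{i+1}$ preserve the product Lebesgue measure on the Hilbert cube), so that $\mathcal{R}(\microcosm{m}_{i})$ is genuinely a measured equivalence relation and \autoref{eqboreleq} applies. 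Then I would restrict both relations to the Borel slice $\{0\}\times[0,1]^{\naturalN}$ and use the product decomposition $\mathcal{R}(\microcosm{m}_{i})=\integerN^{2}\times\mathcal{R}(\microcosm{s}_{i})$ recalled above: since the $\integerN^{2}$ factor is the full relation, the restriction is exactly $\mathcal{R}(\microcosm{s}_{i})$, so we obtain $\mathcal{R}(\microcosm{s}_{i})=\mathcal{R}(\microcosm{s}_{j})$ as Borel equivalence relations on the standard probability space $[0,1]^{\naturalN}$.

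The second half of the plan is to separate these relations by their cost. Since the cost is an invariant of a Borel equivalence relation on a probability space, $\mathcal{R}(\microcosm{s}_{i})=\mathcal{R}(\microcosm{s}_{j})$ forces $\mathcal{C}(\mathcal{R}(\microcosm{s}_{i}))=\mathcal{C}(\mathcal{R}(\microcosm{s}_{j}))$. For finite $i$ this number is $1-\frac{1}{i!}$, as computed above from the essentially free, measure-preserving action of the finite group $\mathfrak{G}_{i}$ on $[0,1]^{\naturalN}$ together with Levitt's fixed-cost formula \cite{levitt_graphings}. For $i=\infty$, I would use that $\microcosm{m}_{\infty}=\cup_{i}\microcosm{m}_{i}$ gives $\mathcal{R}(\microcosm{s}_{\infty})=\cup_{i}\mathcal{R}(\microcosm{s}_{i})$, the relation generated by the free, measure-preserving action on $[0,1]^{\naturalN}$ of $\mathfrak{G}_{\infty}=\cup_{i}\mathfrak{G}_{i}$, the group of finitely supported permutations of $\naturalN$; this group is a directed union of finite groups, hence locally finite and amenable, so Gaboriau's fixed-cost theorem for amenable groups \cite{gaboriaucost} gives $\mathcal{C}(\mathcal{R}(\microcosm{s}_{\infty}))=1-\frac{1}{\card{\mathfrak{G}_{\infty}}}=1$. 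The function $i\mapsto\mathcal{C}(\mathcal{R}(\microcosm{s}_{i}))$ is then strictly increasing on $\naturalN\cup\{\infty\}$ (one has $1-\frac{1}{i!}<1-\frac{1}{j!}$ whenever $1\leqslant i<j<\infty$, and $1-\frac{1}{i!}<1$ for every finite $i\geqslant 1$), so the equality of costs forces $i=j$, contradicting $i<j$; this yields $\microcosm{m}_{i}\not\sim_{\textnormal{c}}\microcosm{m}_{j}$.

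The conceptual core of the argument is supplied entirely by results available to us: \autoref{eqboreleq} (equivalent microcosms generate the same measured equivalence relation) and the Levitt--Gaboriau computations of the fixed cost of finite and of amenable groups. The remaining points are comparatively mild bookkeeping: checking that the $\mathfrak{G}_{i}$-actions on the Hilbert cube are essentially free (which holds because a conull set of points of $[0,1]^{\naturalN}$ has pairwise distinct first $i$ coordinates, so only the identity fixes such a point), handling the infinite case by commuting $\mathcal{R}(-)$ with the increasing union $\cup_{i}\microcosm{m}_{i}$, and keeping track of the fact that cost lives on the probability-space factor $[0,1]^{\naturalN}$ and not on the infinite-measure space $\integerN\times[0,1]^{\naturalN}$ — which is exactly why the restriction to a fiber in the first step is needed. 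I expect the only genuine friction point to be making this restriction step clean, i.e. verifying that the restriction of a measured equivalence relation to a positive-measure Borel subset is again measured and is compatible with the product decomposition $\mathcal{R}(\microcosm{m}_{i})=\integerN^{2}\times\mathcal{R}(\microcosm{s}_{i})$.
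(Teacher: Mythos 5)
Your argument is correct and follows essentially the same route as the paper: it invokes \autoref{eqboreleq}, uses the decomposition $\mathcal{R}(\microcosm{m}_{i})=\integerN^{2}\times\mathcal{R}(\microcosm{s}_{i})$, and separates the relations $\mathcal{R}(\microcosm{s}_{i})$ by their cost $1-\frac{1}{i!}$ via the Levitt--Gaboriau fixed-cost results for finite and amenable groups. Your additional care about essential freeness of the $\mathfrak{G}_{i}$-action, the restriction to a fiber, and the $i=\infty$ case only makes explicit what the paper's sketch leaves implicit.
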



These results are coherent with \autoref{conjecture}. Indeed, a number of well-known separation results such as \cctwdfa{k}$\neq$\cctwdfa{k+1} \cite{monien} hold for the various notion of automata considered in the previous section.

This result illustrates how one could use invariants to show that two complexity classes are distinct, under the hypothesis that \autoref{conjecture} is true. This uses the fact that invariants such as $\ell^{2}$-Betti numbers can be used to show that two microcosms are not equivalent. This raises two natural questions that we answer now. 

\paragraph{Borel equivalence relations are not enough} First the use of Borel equivalence relations is too restrictive for our purpose. Indeed, although the microcosms $\microcosm{m}_{i}$ were simple graphings, we do not want to restrict the framework to those. Indeed, when one wants to represent computational principles, one sometimes want some non-invertible principles to be available. As a consequence, we are interested in microcosms that are not subsets of the microcosm $\microcosm{mp}$, and which are not groups. In other words, we are not interested in group actions on a space, but rather on \emph{monoid actions}. The problem of classifying monoid actions is however much more complex, and much less studied.

In order to have a finer analysis of the equivalence, we want to distinguish between monoids of measurable maps and groups of measurable maps. Given a weightless graphing $\phi$, we consider the set of \emph{positive $\phi$-words} as the following subset of $\phi$-words.
\[ \wordspos{\phi}=\{f_{1}f_{2}\dots f_{k}~|~ k\geqslant 1, \forall 1\leqslant i\leqslant k, f_{i}\in E^{\phi}\}  \]

\begin{definition}
Given a weightless graphing, we define the $\preorder[\phi]$ as
\[ \preorder[\phi]=\{(x,y)~|~\exists \pi\in\wordspos{\phi}, y=\phi_{\pi}(x)\} \]
\end{definition}

We obtain the following refinement of \autoref{eqboreleq}.

\begin{proposition}
If two microcosms $\microcosm{m}$ and $\microcosm{n}$ are equivalent, the preorders $\preorder[m]$ and $\preorder[n]$ are equal.
\end{proposition}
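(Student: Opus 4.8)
The plan is to reduce the statement, by symmetry of $\sim_{\textnormal{c}}$, to the implication: if $\microcosm{m}\prec_{\textnormal{c}}\microcosm{n}$ then $\preorder[m]$ is contained in $\preorder[n]$ up to a null set; applying this with $\microcosm{m}$ and $\microcosm{n}$ interchanged then yields equality (modulo a null set). Before that I would record the elementary remark that, since a microcosm is closed under composition and contains the identity, every positive $\microcosm{m}$-word composes to a single element of $\microcosm{m}$, while conversely each element of $\microcosm{m}$ is a one-letter positive word; hence $\preorder[m]$ is nothing but the orbit relation $\{(x,\phi(x))~|~\phi\in\microcosm{m},\ x\in\measured{X}\}$, and likewise $\preorder[n]$ is the orbit relation of $\microcosm{n}$. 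So the claim becomes a comparison of ordinary orbit relations.

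Assume now $\microcosm{m}\prec_{\textnormal{c}}\microcosm{n}$ and fix $\phi\in\microcosm{m}$. From $\phi\prec_{\textnormal{c}}\microcosm{n}$ I would extract a finite partition $X_{1},\dots,X_{k}$ of $\measured{X}$ and maps $n_{1},\dots,n_{k}\in\microcosm{n}$ with $\phi\restr{X_{j}}\aeq (n_{j})\restr{X_{j}}$ for each $j$. Let $B_{\phi}$ be the union over $j$ of the null sets $\{x\in X_{j}~|~\phi(x)\neq n_{j}(x)\}$; being a finite union of null sets, $B_{\phi}$ is null. For every $x\notin B_{\phi}$, writing $j$ for the index such that $x\in X_{j}$, we have $\phi(x)=n_{j}(x)$, and since $n_{j}\in\microcosm{n}$ this means $(x,\phi(x))$ lies in the orbit relation of $\microcosm{n}$, i.e.\ in $\preorder[n]$. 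This is exactly the point where the statement refines \autoref{eqboreleq}: compilation replaces $\phi$ piecewise by genuine \emph{elements} of $\microcosm{n}$, never by their inverses, so what is preserved is the \emph{positive} orbit relation, not merely the symmetric equivalence relation $\mathcal{R}(\cdot)$.

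To pass from this per-element statement to an inclusion of relations, I would set $B=\bigcup_{\phi\in\microcosm{m}}B_{\phi}$. For the microcosms relevant to our characterisations this is a countable union (a microcosm of interest is countably generated, hence countable as a subset of $\mathcal{M}(\measured{X})$), so $B$ is still null; and if $(x,y)\in\preorder[m]$ with $x\notin B$, picking $\phi\in\microcosm{m}$ with $y=\phi(x)$ gives $x\notin B_{\phi}$ and therefore $(x,y)\in\preorder[n]$. Hence $\preorder[m]\subseteq\preorder[n]\cup(B\times\measured{X})$, i.e.\ the desired inclusion holds modulo the null set $B\times\measured{X}$; running the argument with $\microcosm{m}$ and $\microcosm{n}$ swapped gives the reverse inclusion, and equality follows.

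I expect the only real difficulty to be the measure-theoretic bookkeeping rather than anything structural. Two points deserve care. First, amalgamating the per-element null sets $B_{\phi}$ into a single null set requires countability of $\microcosm{m}$ (or of a fixed generating set); without it one obtains only that each graph $\{(x,\phi(x))\}$ sits inside $\preorder[n]$ up to its own null set, which is weaker than the stated equality. Second, if one prefers to argue edge by edge — closer to a direct proof of \autoref{eqboreleq} — one must check that a chain of $\aeq$-equalities survives composition; this is where non-singularity of the maps in a microcosm enters, via the fact that preimages of null sets along non-singular maps are null, so that $\phi_{f_{k}}\circ\dots\circ\phi_{f_{1}}\aeq\psi_{f_{k}}\circ\dots\circ\psi_{f_{1}}$ whenever $\phi_{f_{i}}\aeq\psi_{f_{i}}$, after which one refines the partitions along the orbit to write the composite as a single positive $\microcosm{n}$-word on each cell.
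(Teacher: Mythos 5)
The paper states this proposition without proof (as it does for Proposition~\ref{eqboreleq}, of which it is a refinement), so there is no official argument to compare against; judged on its own, your proof is correct and is surely the intended one. The key observation --- that closure of a microcosm under composition collapses every positive word to a single element, so that $\preorder[m]$ is just the orbit relation $\{(x,\phi(x))\mid \phi\in\microcosm{m},\ x\in\measured{X}\}$ and no chain of $\aeq$-equalities ever needs to be composed --- is exactly what makes the statement immediate from the definition of compilability. Your two caveats are both genuine and correctly resolved in the paper's context: equality can only mean equality modulo null sets, since compilability is itself only an a.e.\ notion, and the countability needed to amalgamate the sets $B_{\phi}$ is implicit in the paper's remark that a microcosm is a weightless graphing (graphings being, by definition, countable families). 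I would only suggest stating up front that you prove equality of the preorders as subsets of $\measured{X}\times\measured{X}$ up to a null set, since that is the honest content of the proposition.
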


Can one define invariants such as $\ell^{2}$-Betti numbers in this more general case? The notion of cost can be obviously defined for measurable preorders, but is this still an interesting invariant? In case of $\ell^{2}$-Betti numbers, this question is quite complex as it involves the definition of a von Neumann algebra generated by a measurable preorder. Although some work defining von Neumann algebras from left-cancellable monoids could lead to a partial answer, this is a difficult open question.

\paragraph{Justifying the use of homotopy invariants.}

The second issue is that the invariant mentioned above, namely $\ell^{2}$-Betti numbers, are \emph{homotopy invariants}, i.e. they are used to show that two measurable equivalence relations are \emph{not homotopy equivalent}, which is stronger than showing that they are not equal. So, are those invariants too complex? Couldn't we use easier invariants? The answer to this question lies in a more involved study of the equivalence of microcosms. The notion of compilation we discussed above is not the finest equivalence one could consider. We chose to work with this simplified notion since the most general setting needs to go into detailed discussions about the definitions of the exponential connectives in the model. However, one should actually consider a more involved notion, namely that of \emph{compilation up to Borel automorphisms}. This notion is not complicated to define. 

\begin{definition}
Let $\Theta$ be a set of Borel automorphisms of $\measured{X}$. A measurable map $m$ is compilable in a set $N$ of measurable maps \emph{up to $\Theta$} if and only if there exists $\theta\in\Theta$, a finite partition $X_{1},\dots,X_{k}$ of $\measured{X}$ and elements $n_{1},\dots, n_{k}\in N$ such that $\theta\circ m\restr{X_{i}}\aeq (n_{i})\restr{X_{i}}$.
\end{definition}

Then the corresponding equivalence of two microcosms $\microcosm{m}$ and $\microcosm{n}$ up to a set of Borel automorphisms do not induce the equality of the measurable equivalence relations $\mathcal{R}(\microcosm{m})$ and $\mathcal{R}(\microcosm{n})$, but only that those are homotopy equivalent. In this case, one can understand the interest of invariants such as $\ell^{2}$-Betti numbers, or the cost which can be understood as an approximation of the first $\ell^{2}$-Betti number.

\section{Perspectives}


We believe the above conjecture provides a very good working hypothesis, even if no proof of it is to be found, as the approach we propose provides an homogeneous approach to computational complexity.

This \emph{complexity-through-realisability} theory we sketched is founded on alternative definitions of the notions of algorithms and complexity classes. The techniques were illustrated by first results showing how a large family of (predicate) complexity classes can be characterised by these techniques. Those are a first step towards a demonstration that these definitions capture and generalise standard ones, offering a unified homogeneous framework for the study of complexity classes. 

Future work in this direction should therefore aim at fulfilling two main objectives. The first objective is to establish that this new approach to complexity captures, generalises and extends the techniques developed by previous logic-based approaches to computational complexity. The second objective is to establish that invariants and tools available for the mathematical theories underlying our approach can be used to address open problems in complexity, as already explained in the previous section.


\subsection{A Uniform Approach to Computational Complexity}

We propose the following three goals to deal with the first objective: show the \emph{complexity-through-realisability} techniques (i) are coherent with classical theory, (ii) generalise and improve state-of-the-art techniques, and finally (iii) provide the first homogeneous theory of complexity for several computational paradigms. 

\paragraph{Automata, Turing machines, etc.} The results presented in this paper are a first step toward showing that our technique are coherent with the classical complexity theory. However, all our results lean on previously known characterisations of complexity classes (predicates) by means of different kinds of automata. As explained above, extensions of these notions of automata can be considered to obtain further results. However, it is important to realise that our approach can also deal with other models of computation. 

An adaptation of work by Asperti and Roversi \cite{aspertiroversi} and Baillot \cite{baillot} should allow for encoding Turing machines in some of the realisability models we consider. This should lead to characterisations of several other complexity classes (predicates), such as the exponential hierarchy. In particular this should allow for a characterisation of \NPtime, a class that -- as explained above -- is not characterised naturally by pushdown automata. Moreover, previous characterisations of \NCone and \Pspace by means,  respectively, of \emph{branching programs} (Barrington's theorem \cite{barrington}) and \emph{bottleneck Turing machines} \cite{bottleneck} should lead to characterisations of those classes. 

It would also be interesting to understand if the definition of algorithms as Abstract State Machines (ASMs) proposed by Gurevich \cite{gurevichasm} corresponds to a specific case of our definition of algorithms as graphings. Although no previous work attempted to relate ASMs and GoI, an ASM is intuitively a kind of automata on first-order structures and such objects can be seen as graphings \cite{seiller-goig}. This expected result will show that the notion of graphing provides an adequate mathematical definition of the notion of computation.

\paragraph{Predicates, functions, etc.} All results presented above are characterisations of predicate complexity classes. It is natural to wonder if the approach is limited to those or if it applies similarly to the study of functions complexity classes. 

Since one is considering models of linear logic, the approach does naturally extend to functions complexity classes. A natural lead for studying function classes is to understand the types $\ListType\Rightarrow \ListType$ -- functions from (binary) natural numbers to (binary) natural numbers -- in the models considered. A first step was obtained by the definition of a model of Elementary Linear Logic (\ELL) \cite{seiller-goie}, a system which is known to characterise elementary time functions as the set of algorithms/proofs of type $\oc\ListType\Rightarrow \ListType$. Although the mentioned model is not shown complete for \ELL{}, it provides a first  step in this direction as the type of functions in this model is naturally sound for elementary time functions. We believe that \emph{complexity-through-realisability} techniques extend ICC techniques in that usual complexity classes of functions can be characterised as types of functions $\oc\ListType\Rightarrow \ListType$ in different models. A first natural question is that of the functions computed by elements of this type in the models described in the previous section.

Furthermore, we expect characterisations and/or definitions of complexity classes of higher-order functionals. Indeed, the models considered  contain types of higher-order functionals, e.g. $\cond{(\ListType\Rightarrow \ListType)\Rightarrow(\ListType\Rightarrow \ListType)}$ for type 2 functionals. These models therefore naturally characterise subclasses of higher-order functionals. As no established theory of complexity for higher-order functional exists at the time, this line of research is of particular interest.

\paragraph{Deterministic, Probabilistic, etc.}  As exposed in the previous section, the techniques apply in an homogeneous way to deterministic, non-deterministic and probabilistic automata. Moreover a generalisation towards quantum computation seems natural, as already discussed. The framework offered by graphings is indeed particularly fit for both the quantum and the probabilistic computational paradigm as it is related to operator algebras techniques, hence to both measure theory and linear algebra. In this particular case of probabilistic and quantum computation, we believe that it will allow for characterisations of other standard probabilistic and quantum complexity classes such as \PPtime, \BPPtime, or \BQPtime.

Furthermore, the generality of the approach and the flexibility of the notion of graphing let us hope for application to other computational paradigms. Among those, we can cite concurrent computation and cellular automata. On one hand, this would provide a viable and well-grounded foundation for a theory of computational complexity for concurrent computation, something currently lacking. On the other hand, applying the techniques to cellular automata will raise the question of the relationship between \enquote{classical} complexity theory and the notion of \emph{communication complexity}.

\subsection{Using Mathematical Invariants}

As we already discussed in the previous section, the proposed framework has close ties with some well-studied notions from other fields of mathematics. Of course, the possibility of using invariants from these fields in order to obtain new separation results would be a major step for complexity theory. However, let us notice that even though nothing forbids such results, the proposed method do not provide a miraculous solution to long-standing open problems. Indeed, in order to obtain separation results using the techniques mentioned in the previous sections, one would have to
\begin{itemize}[noitemsep,nolistsep]
\item first prove \autoref{conjecture}, which is not the simplest thing to do;
\item then characterise the two classes to be separated by models $\anymodel{\Omega}{m}$ and  $\anymodel{\Omega}{m}$ which differ only by the chosen microcosm (either $\microcosm{m}$ or $\microcosm{n}$);
\item then compute invariants showing that these microcosms are not homotopy equivalent (if the microcosms are groups of measure-preserving Borel isomorphisms), or -- in the general case -- define and then compute homotopy invariants separating the two microcosms.
\end{itemize}

As an illustration, let us consider what could be done if someone were to find a proof of \autoref{conjecture} today. Based on the results exposed above, one could only hope for already known separation results, namely the fact that the inclusions \cctwdfa{k}$\subsetneq$\cctwdfa{k+1}, \cctwnfa{k}$\subsetneq$\cctwnfa{k+1}, \cctwconfa{k}$\subsetneq$\cctwconfa{k+1} and \cctwpfa{k}$\subsetneq$\cctwpfa{k+1}. Those, as explained above, would be obtained by computing the cost of the associated Borel equivalence relations.

If one were to consider the characterisation of \Ptime obtained syntactically \cite{lics-ptime} as well, one could hope for separating the classes \Logspace and \Ptime{}. However, the microcosm used to characterise \Ptime is neither a group nor consists in measure-preserving maps, two distinct but major handicaps for defining and computing invariants. As explained, future work will show that one can actually consider a microcosm of measure-preserving maps to characterise \Ptime. But once again, this microcosm would not be a group, hence one would need to extends the theory of invariants for Borel equivalence relations in order to hope for a separation result. Last, but not least, these invariants should be expressive enough! For instance, the cost of the relation induced by the microcosm $\microcosm{m}_{\infty}$ is equal to $1$, as is the cost of every amenable infinite group. Amenable extensions of this microcosm will henceforth not be shown to be strictly greater than $\microcosm{m}_{\infty}$ unless some other, finer, invariants are considered and successfully computed!

However, the proposed proof method does not naturally appear as a \emph{natural proof} in the sense of Razborov and Rudich \cite{naturalproofs}. It is the author's opinion that this simple fact on its own makes the approach worth studying further.


\section*{References}
\bibliographystyle{elsarticle-num-names}
\bibliography{thomas}

\end{document}